\documentclass[sigconf, nonacm]{acmart}

\usepackage{graphicx}
\graphicspath{{Images/}}

\usepackage{balance}
\usepackage{caption}
\usepackage{subcaption}   

\usepackage{multirow}
\usepackage{amsthm}

\usepackage{soul}
\usepackage{enumitem}
\usepackage[procnumbered,ruled,linesnumbered,vlined]{algorithm2e}
\usepackage{tcolorbox}
\usepackage[normalem]{ulem}
\usepackage{xspace}

\SetAlFnt{\small}

\newcommand{\nop}[1]{}

\newtheorem{lemma}{Lemma}

\newtheorem{definition}{Definition}
\newtheorem{example}{Example}

\usepackage{mathtools} 

\newcommand\vldbdoi{XX.XX/XXX.XX}
\newcommand\vldbpages{XXX-XXX}
\newcommand\vldbvolume{14}
\newcommand\vldbissue{1}
\newcommand\vldbyear{2020}
\newcommand\vldbauthors{\authors}
\newcommand\vldbtitle{\shorttitle} 
\newcommand\vldbavailabilityurl{URL_TO_YOUR_ARTIFACTS}
\newcommand\vldbpagestyle{plain} 

\begin{document}
\title{Keyword-based Community Search in Bipartite Spatial-Social Networks (Technical Report)}

\author{Kovan A. Bavi}
\affiliation{%
  \institution{Kent State University}
  \city{Kent}
  \state{Ohio}
  \country{USA}
}
\affiliation{%
  \institution{University of Zakho}
  \city{Zakho}
  \country{Kurdistan Region, Iraq}
}
\email{kmali@kent.edu}
\email{kovan.m.ali@uoz.edu.krd}

\author{Xiang Lian}
\affiliation{%
  \institution{Kent State University}
  \city{Kent}
  \state{Ohio}
  \country{USA}
}
\email{xlian@kent.edu}

\begin{abstract}
Several approaches have been recently proposed for community search in bipartite graphs. These methods have shown promising results in identifying communities in real-world bipartite networks, such as social and biological networks. Given a query user $q$, community search in bipartite graphs involves identifying a group of users containing $q$, with common characteristics or functions within a given bipartite graph. These problems are particularly challenging because bipartite graphs have two distinct sets of nodes, and community search algorithms must account for this structure. However, finding communities in keyword-based bipartite spatial-social networks has yet to be investigated enough. The spatial-social networks are naturally structured as bipartite graphs. Thus, this paper proposes a new community search problem in Bipartite spatial-social networks with a novel $(\omega, \pi)\mbox{-}keyword\mbox{-}core$, named \textit{Keyword-based Community Search in Bipartite Spatial-Social Networks} ($KCS\mbox{-}BSSN$). The $KCS\mbox{-}BSSN$ returns a tightly-knit community, significant social influence, minimal travel distance, and includes a $(\omega, \pi)\mbox{-}keyword\mbox{-}core$. To address the  $KCS\mbox{-}BSSN$ problem, we have developed pruning methods that effectively filter out irrelevant users and points of interest. To improve query-answering efficiency, we have also proposed an indexing technique named the bipartite-spatial-social index. Our pruning techniques, and indexing approach, have proven effective and efficient through experiments with real and artificial data sets.
\end{abstract}

\maketitle

\pagestyle{\vldbpagestyle}
\begingroup\small\noindent\raggedright\textbf{PVLDB Reference Format:}\\
\vldbauthors. \vldbtitle. PVLDB, \vldbvolume(\vldbissue): \vldbpages, \vldbyear.\\
\href{https://doi.org/\vldbdoi}{doi:\vldbdoi}
\endgroup
\begingroup
\renewcommand\thefootnote{}\footnote{\noindent
This work is licensed under the Creative Commons BY-NC-ND 4.0 International License. Visit \url{https://creativecommons.org/licenses/by-nc-nd/4.0/} to view a copy of this license. For any use beyond those covered by this license, obtain permission by emailing \href{mailto:info@vldb.org}{info@vldb.org}. Copyright is held by the owner/author(s). Publication rights licensed to the VLDB Endowment. \\
\raggedright Proceedings of the VLDB Endowment, Vol. \vldbvolume, No. \vldbissue\ %
ISSN 2150-8097. \\
\href{https://doi.org/\vldbdoi}{doi:\vldbdoi} \\
}\addtocounter{footnote}{-1}\endgroup

\ifdefempty{\vldbavailabilityurl}{}{
\begingroup\small\noindent\raggedright\textbf{PVLDB Artifact Availability:}\\
The source code, data, and/or other artifacts have been made available at \url{https://github.com/KBavi-Personal/KCS-BSSN_WITH_DYNAMIC}.
\endgroup
}

\section{Introduction}
\label{sec:intro}

The community search problem has recently attracted significant attention due to its wide range of practical applications in areas such as marketing \cite{Wang2006}, recommendation systems \cite{kpcore-wang-2023, ee-want-2021}, and team formation \cite{zhang2025topr, ee-want-2021}. With the rapid growth of location-based social networks (LBSNs), an enormous volume of social and spatial data has become available. These networks provide a comprehensive representation of human interactions in physical space by capturing both social relationships and mobility behaviors.

Spatial-social networks inherently consist of two distinct but interrelated components: users and spatial locations. Users are connected to their checked-in points of interest (POIs), which are embedded within a road network. Consequently, such networks naturally form a bipartite structure, where one partition represents users and the other represents POIs, and edges exist only between these two sets. Bipartite graphs thus model the interactions between upper-level entities (users) and lower-level entities (POIs), reflecting users’ spatial activities and preferences.

Recently, several studies have investigated community search in bipartite graphs under various bipartite-core models \cite{10.1016/j.ins.2023.119511, LI2024111961, zhang2025topr, Temp-bipartite-Li-2024, CSABiGXu2023, Scalable-CS-wang-2024}. While these works focus on identifying cohesive structures within bipartite networks, research on spatial-social networks often overlooks their inherent bipartite nature. In many existing approaches, social and spatial data are analyzed separately when performing community search, treating the social graph and the road network as independent components. Such separation fails to capture the complex interactions between users and points of interest. To effectively form a meaningful community in spatial-social networks, several interrelated factors must be considered simultaneously:
(i) structural cohesiveness among users,
(ii) significant social influence within the group,
(iii) minimal travel distance to relevant POIs, and
(iv) the inherent bipartite relationships between users and POIs. Ignoring any of these aspects may lead to communities that are structurally valid but practically ineffective.

Social networks play a crucial role in shaping users’ opinions and decision-making processes on specific topics. Influence propagation within tightly connected communities can significantly affect behavioral adoption. Meanwhile, users’ checked-in locations correspond to POIs described by keywords, representing shared interests and preferences. By jointly considering social influence and POI-related keyword information, it becomes possible to identify communities that are both socially cohesive and semantically aligned with specific interests. Analyzing these dimensions independently cannot adequately model the complex interplay of social relationships and spatial behaviors. Therefore, it is essential to design a unified framework that fully exploits the bipartite nature of spatial-social networks.

To address this gap, we introduce a new community search problem, namely \textit{Keyword-based Community Search in Bipartite Spatial-Social Networks} ($KCS\mbox{-}BSSN$). The objective of $KCS\mbox{-}BSSN$ is to retrieve a cohesive community that:
(1) contains a query user $q$,
(2) satisfies strong structural cohesiveness in the social network,
(3) exhibits significant social influence,
(4) minimizes travel distance between community members and relevant POIs, and
(5) satisfies a $(\omega, \pi)\mbox{-}keyword\mbox{-}core$ constraint in the bipartite network. This integrated formulation enables the discovery of communities that are structurally strong, influential, spatially convenient, and semantically meaningful.

We further extend our $KCS\mbox{-}BSSN$ framework to support Temporal Bipartite Spatial-Social Networks ($TBSSN$). By integrating temporal constraints into the core discovery process, we ensure that the identified communities are not only socially and spatially cohesive but also reflect the latest user behaviors (within the most recent sliding window). Our proposed approach can effectively filter out historical or obsolete interactions, providing results with high semantic relevance to current activity patterns.

Our proposed framework has a wide range of real-world applications, including urban planning, location-aware recommendation, and targeted marketing. Example~\ref{ex:resturant-ads} illustrates how our model can be applied for marketing purposes.
\setlength{\textfloatsep}{0pt}
\begin{figure}
  \centering
    \includegraphics[width=3.3in,height=1.9in]{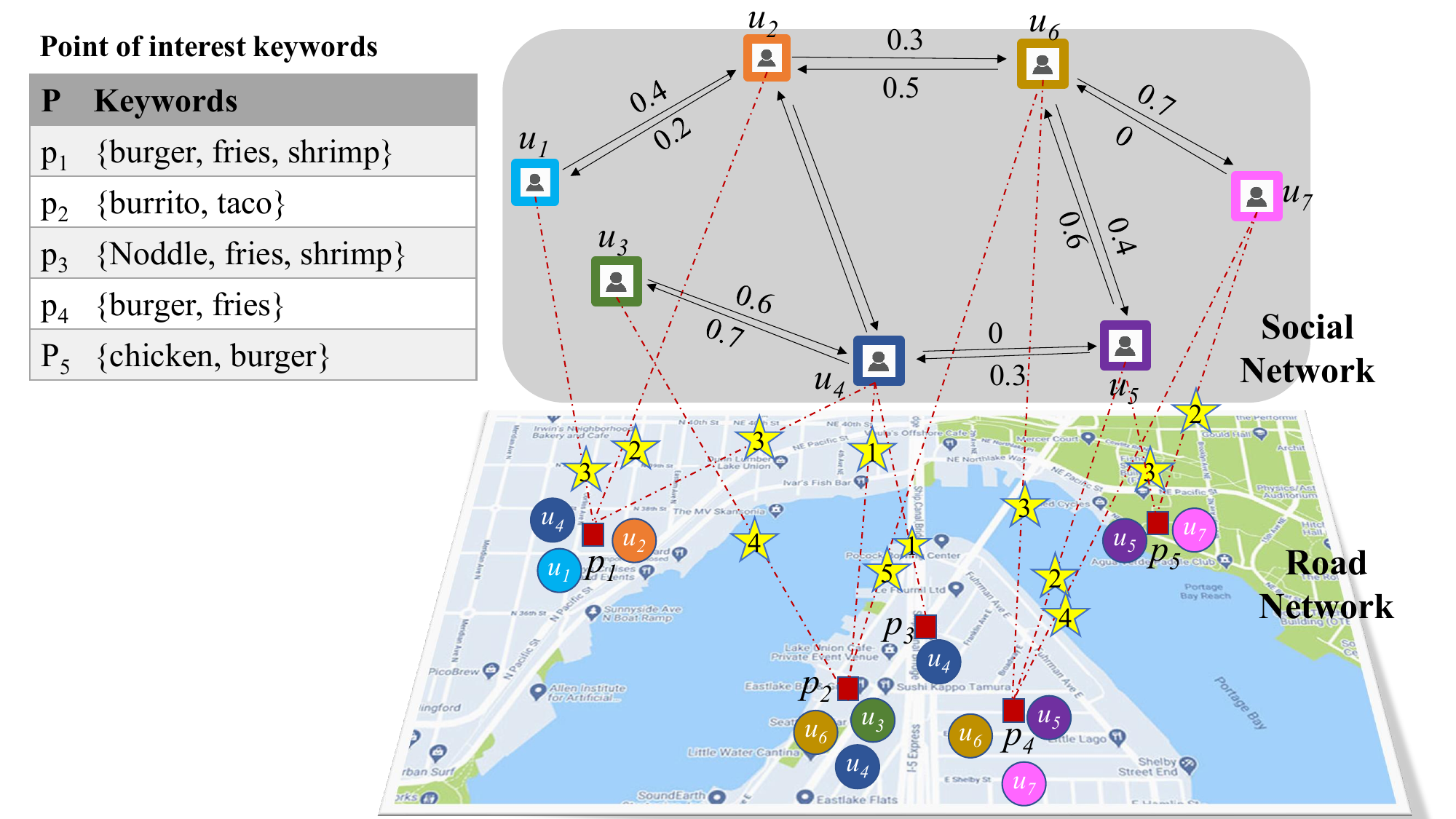}
    \caption{\small An example of bipartite spatial-social networks.}
      \Description{An example of bipartite spatial-social networks.}
    \label{fig:example1}
\end{figure}

\begin{example}
\label{ex:resturant-ads}
(\textbf{Online Advertising})
 Figure \ref{fig:example1} illustrates an example of a bipartite-spatial-social network.  In this example, the social network $G_s$ consists of vertices that represent users $u_1{\sim}u_7$, and friendship among users shown by edges (e.g., $e({u_1, u_2}))$. Each edge $e$ has a weight representing the user's influence on a specific topic (e.g., traveling) between 0 and 1. On the other hand, in the road network $G_r$, the vertices represent intersection points, and the edges represent road segments that connect the vertices. In the social network $G_s$, each user could have multiple checked-in locations on $G_r$ representing visited points of interest, where the points of interest $p_1{\sim}p_5$ are located on the road network. The users from $G_s$ visit the points of interest on $G_r$ with a positive frequency number, illustrated inside stars on dotted edges. On the other hand, each $p_i$ is associated with a list of keywords that describe that point of interest. The list of all keywords is presented in Figure \ref{fig:example1}.

 Consequently, a business wants to target its advertising toward a particular group of people who enjoy eating burgers and fries at restaurants. The chosen individuals should be geographically and socially close to each other. The group members should also be reasonably close to the restaurant. It is common for a group to choose a restaurant that some members have visited often in the past. The company may select a loyal customer as a reference user and can influence other's opinions. To ensure that the group is a good fit, each member should have prior experience visiting places that provide at least one shared service.
\end{example}

\setlength{\textfloatsep}{0pt}
\begin{figure}
  \centering
     \includegraphics[width=3.3in,height=1.9in]{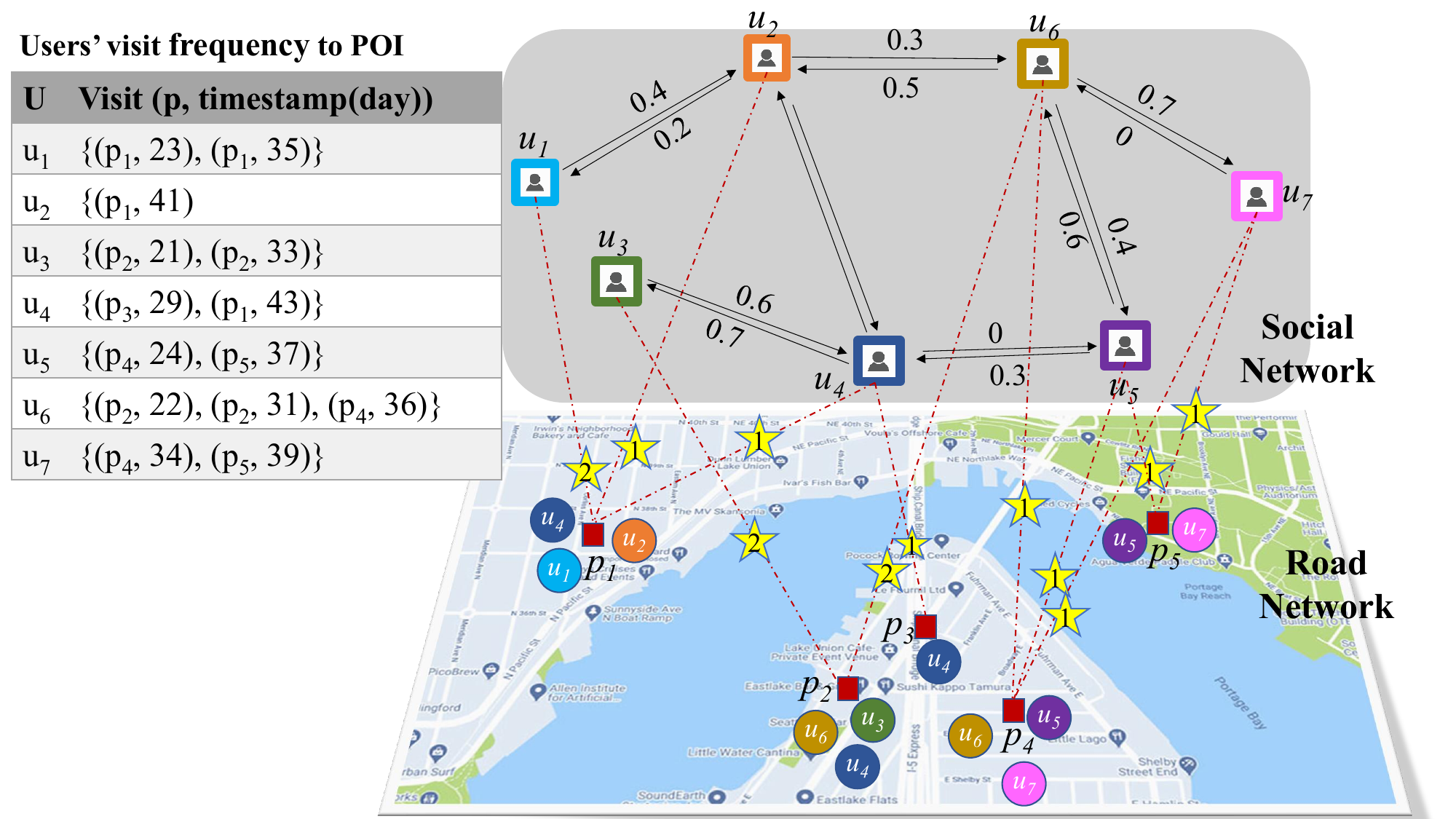}
    \caption{\small An example of a temporal bipartite spatial-social network with a visit updating frequency for the last 30 days, considering today as day \#50. }
      \Description{An example of temporal bipartite spatial-social networks.}
    \label{fig:example2}
\end{figure}

Online advertising is usually not a one-time event, but a continuous process to affect the shopping attitudes and behaviors of users in the community. It is therefore important to continuously monitor potential customer groups (communities) in the bipartite spatial-social networks over time (namely temporal $BSSN$, or $TBSSN$, with dynamic updates, e.g., visiting frequencies of POIs by users). 

We have the following motivation example on the continuous customer group monitoring for online advertising.

\begin{example}
\label{ex:resturant-ads-temporal}
(\textbf{Continuous Customer Group Monitoring for Online Advertising}) Figure \ref{fig:example2} illustrates a temporal bipartite-spatial-social network ($TBSSN$). The social network $G_s$ consists of vertices representing users $u_1{\sim}u_7$, where friendship ties are denoted by edges weighted by social influence. Within the road network $G_r$, POIs $p_1{\sim}p_5$ are situated at specific intersections and associated with descriptive keywords. Each user in $G_s$ maintains a set of checked-in locations on $G_r$; every visit from a user to a POI is associated with a specific timestamp (e.g., in days), as presented in the temporal visit vectors of Figure \ref{fig:example2}. 

Consider a business launching a limited-time burger promotion. To maximize conversion, the business seeks a target group that is geographically proximal, socially cohesive, and significantly influenced by a loyal reference user $q$. To ensure the advertisement reaches an active audience, the company imposes a recency constraint $\uptau$, that is, we are interested in those users who have frequently visited burger restaurants for the past 30 days. In this case, we can issue a $KCS\mbox{-}TBSSN$ query, which dynamically considers these historical interactions, identifying a community that is structurally and spatially connected, and actively engaged with the company's products within a valid temporal window (i.e., a recent sliding window).
\end{example}

In Examples~\ref{ex:resturant-ads} and \ref{ex:resturant-ads-temporal}, we illustrate how to identify a community of users within a social network where members maintain close relationships, and a selected influential user can affect the opinions of others. Such an influence is essential for effectively persuading the group to spread/propagate online advertising. Moreover, some members should have frequently visited the restaurant, thereby enhancing its reputation and credibility among their friends. Finally, the selected restaurant should be geographically convenient for all members, ensuring minimal travel distance and encouraging collective participation.

Most existing research on community search in bipartite graphs primarily focuses on identifying structural cores within bipartite networks. These studies emphasize structural properties but often overlook additional social and spatial dimensions. Conversely, prior work on spatial-social networks typically analyzes the social graph and the road network separately, without fully integrating their inherent bipartite relationships. In contrast, our work jointly considers multiple essential aspects. Specifically, we incorporate social cohesiveness by ensuring strong relationships among users, account for social influence by modeling how a selected user can affect others’ opinions, and enforce spatial proximity by minimizing travel distance for all community members. Furthermore, we integrate the bipartite structure by prioritizing the reputation of Points of Interest (POIs), measured through users’ visit frequencies. This unified approach enables the discovery of communities that are structurally cohesive, influence-aware, spatially compact, and semantically meaningful.
 
\noindent\textbf{Challenges.} Dealing with spatial-social networks can be challenging due to the vast amounts of information associated with these networks. In addition, the problem at hand involves many constraints that require considerable computation. Addressing these constraints separately can increase computational time. However, finding a solution that combines these constraints presents its own challenges.

\noindent\textbf{Contributions.} This paper makes the following contributions:
\begin{itemize}[noitemsep,topsep=0pt,leftmargin=10pt]
\item \textbf{New Community Model:} we define  $(\omega,\pi)\mbox{-}keyword\mbox{-}core$ a bipartite model that captures the complex interactions between social influence and road-network proximity (Section \ref{sec:awcore}).
\item \textbf{Novel Indexing:} We design a unified indexing tree and cost model that integrates social and road-network data to enable direct user-centric filtering within index nodes (Section \ref{sec:indexing-mechnism}).
\item \textbf{Efficient Algorithm:} We propose a comprehensive $KCS\mbox{-}BSSN$ algorithm featuring multi-stage pruning techniques to eliminate false-alarm users and irrelevant POIs (Sections \ref{sec:query_answering} and \ref{sec:pruning}).
 \item\textbf{Temporal Extension:} We extend $KCS\mbox{-}BSSN$ to temporal networks, integrating time-aware constraints to ensure communities reflect the most recent user behaviors (Section \ref{sec:temporal_BSSN}).
\item \textbf{Empirical Validation:} Extensive experiments on real and synthetic datasets demonstrate the superior efficiency and scalability of our solution over baseline approach (Section \ref{sec:exper}).
\end{itemize}

\section{Problem Definition}
\label{sec:problem_def}

In this section, we will formally define the data model for \textit{bipartite spatial-social networks} ($BSSN$) and our \textit{keyword-based community search} problem over $BSSN$.

\subsection{Social Networks}

In this subsection, we first give the data model for social networks below.

\begin{definition} {\bf (Social Network, $G_s$).} A social network $G_s$ is a graph in the triple form ($V_s$, $E_s$, $\theta_s$), where $V_s$ is a set of $m$ user vertices $u_1$, $u_2$, $...$, and $u_m$, $E_s$ is a set of directed edges $e(u_j,u_k)$, each connecting two users $u_j$ and $u_k$ and associated with a weight $w(u_j,u_k)$, and $\theta_s$ is a mapping function $V_s \times V_s \rightarrow E_s$. 
\label{def:sn}
\end{definition}

In Definition \ref{def:sn}, a social network can be considered as an influence graph, where the weight $w(u_j,u_k)$ of each edge $e(u_j, u_k)\in E_s$ is an influence of user $u_j$ on user $u_k$, with respect to some topic or interest of the user (e.g., movie, sports, etc.).

The influence weight between users in $G_s$ can be computed using the \textit{text-based topic discovery algorithm} \cite{barbieri2013topic}. For simplicity, in this paper, we assume that each user influences other users only with respect to one single topic. We can easily extend our proposed solution to the scenario of multiple topics \cite{chen2015online,ahmedTopicBasedCom}, by keeping influence weight vectors for different topics, which we will leave it as our future work.
 
\noindent {\bf Influence Score Function.} Assume that user vertices $u$ and $v$ can be directly or indirectly connected by a path, $u \leadsto v$, of length $(l-1)$ in social networks $G_s$, that is, $u= x_1 \rightarrow x_2 \rightarrow \dots \rightarrow x_l=v$, where $x_i$ (for $1\leq i\leq l$) is a vertex on the path. We define an \textit{influence score function} (ISF) between any two user vertices $u$ and $v$ as follows.


\begin{definition} {\bf(Influence Score Function, $ISF$ \cite{chen2015online}) } Given a social network $G_s$, the \textit{influence score}, $w(u \leadsto v)$, of user $u$ on user $v$ through a path $u \leadsto v$ with length $(l-1)$ is given by:
\begin{eqnarray}
w(u \leadsto v)= \prod_{i=1}^{l-1} w(x_i,x_{i+1}).
\label{eq:tw_path_uv}
\end{eqnarray}


The \textit{influence score function} (ISF), $ISF(u,v)$, is defined as the maximum influence score among all possible paths from $u$ to $v$:
\begin{eqnarray}
ISF(u,v)= \max_{\forall u\leadsto v}\left\{ w(u \leadsto v)\right\}.
\label{eq:ISF}
\end{eqnarray}

\label{def:InfluenceScore}
\end{definition}


Note that, the influence score function (given by Eq.~(\ref{eq:ISF})) in Definition \ref{def:InfluenceScore} is not symmetric (i.e., $ISF(u,v) \neq ISF(v,u)$). In other words, the influence of user $u$ on user $v$ could be different from that of user $v$ on user $u$.

\noindent {\bf Social Cohesiveness.} There are many existing techniques \cite{fang2020survey} to capture the cohesiveness of a community in social networks $G_s$, such as $k$-core \cite{DBLP:journals/corr/cs-DS-0310049,SEIDMAN1983269, kcore-fang-2019}, $k$-truss \cite{cohen2008trusses,10.1145/2588555.2610495, k-truss-zhang-2019, k-truss-xie-2025}, $k$-clique \cite{acquisti2006imagined, k-clique-yuan-2017}, $k$-(edge connected component) \cite{gibbons1985algorithmic, 10.1007/s00778-016-0451-4}, etc. 


In this paper, we consider the  $(k, d)\mbox{-}truss$ \cite{kdtruss-def} as follows.

\begin{definition} {\bf ($(k, d)\mbox{-}truss$  \cite{kdtruss-def}).} Given a social network $G_s$, a query user $q\in V_s$, and two positive integers $k$ ($>2$) and $d$, a $(k, d)\mbox{-}truss$ is a connected subgraph $g \subseteq G_s$, such that: (1) each edge $e \in E_s(g)$ is contained in at least $(k-2)$ triangles, and; (2) 
for any user $u \in V_s(g)$, $dist_s(q, u) \leq d$ holds, where $dist_s(q, u)$ is the shortest path distance between $q$ and $u$ on the social network $G_s$.
\label{def:kdtruss}
\end{definition}

Intuitively, the $(k, d)\mbox{-}truss$ in Definition \ref{def:kdtruss} returns a community with high connectivity of graph structures (i.e., with $\geq (k-2)$ triangles) and with friend relationships close to the query user $q$ (i.e., within $d$ hops away from $q$). 


\subsection{Spatial Road Networks}

Next, we define a spatial road network, $G_r$, as follows.

\begin{definition} {\bf (Spatial Road Network, $G_r$).} A spatial road network, $G_r$, is a planar graph, represented by a triple ($V_r$, $E_r$, $\theta_r$), where $V_r$ is a set of $n$ vertices (i.e., intersection points) $r_1, r_2, . . . ,$ and $r_n$, $E_r$ is a set of edges $e(r_j,r_k)$ (each connecting two intersection points $r_j$ and $r_k$), and $\theta_r$ is a mapping function $V_r \times V_r \rightarrow E_r$.
\label{def:rn}
\end{definition}

In Definition \ref{def:rn}, the road network $G_r$ is a graph, with intersection points as vertices and road line segments as edges. Each vertex $r_i \in V_r$ has its 2D location, $r_i.\ell$, with longitude and latitude, ($r_i.x$, $r_i.y$), in Euclidean space.

In the road network $G_r$, there are \textit{points of interest} (POIs) on road segments (edges), defined as follows.

\begin{definition} {\bf (Points of Interest, $V_p$).} Given a spatial road network $G_r$, we have a set, $V_p$, of \textit{points of interest} (POIs) on edges in $E_r$, where each POI $p\in V_p$ is associated with its 2D location $p.\ell$ $(= (p.x, p.y))$ and a list, $p.K$, of its descriptive keywords.
\label{def:poi}
\end{definition}

Examples of POIs in Definition \ref{def:poi} include restaurants, hotels, cinemas, airports, etc.


\subsection{Bipartite Spatial-Social Networks}

Users in social networks $G_s$ can have one or multiple checked-in locations (i.e., POIs), $u.L$, on spatial road networks $G_r$. Here, checked-in locations can be obtained via GPS or WiFi location services from social networks (e.g., Twitter or Yelp). 

Essentially, users over social networks $G_s$ and POI locations on spatial road networks $G_r$ can form a checked-in bipartite graph, defined as follows. 

\begin{definition} {\bf (Bipartite Spatial-Social Network, $G_b$).} Given users in $V_s$ on social networks $G_s$ and POIs in $V_p$ on spatial road networks $G_r$, a bipartite spatial-social network ($BSSN$), $G_b$, is a bipartite graph, in the form of a quadruple $(V_s, V_p, E_b, \mathcal{F}_b)$, where the edge set $E_b$ contains edges from user vertices $u \in V_s$ to POIs $p \in V_p$, and function $\mathcal{F}_b$ is a mapping: $E_b \rightarrow R^+$ that assigns each edge $e(u, p)$ with a positive, real-valued weight $f_{u,p}$.
\label{def:Gb}
\end{definition}

In Definition \ref{def:Gb}, we model the checked-in relationships between users and POIs as a bipartite graph (i.e., $BSSN$), where each mapping edge $e(u, p)$ in $E_b$ from a user $u\in V_s$ to a POI $p\in V_p$ is associated with a weight $f_{u,p}$. In practice, the edge weight $f_{u,p}$ can be the frequency that user $u$ visits the POI $p$.


\subsection{Keyword-Based Community Search Over Bipartite Spatial-Social Networks}
\label{sec:awcore}
\noindent {\bf Keyword-Weight-Constrained Community, $(\omega, \pi)\mbox{-}keyword\mbox{-}core$.} In this subsection, we will define the problem of \textit{keyword-based community search over bipartite spatial-social networks} ($KCS\mbox{-}BSSN$). Before that, we first introduce the concept of $(\omega, \pi)\mbox{-}keyword\mbox{-}core$ over bipartite spatial-social networks $G_b$.

\begin{definition} {\bf ($(\omega, \pi)\mbox{-}keyword\mbox{-}core$).} Given a bipartite spatial-social network $G_b$, a query keyword set $Q$, and parameters $\omega$ and $\pi$, a $(\omega, \pi)\mbox{-}keyword\mbox{-}core$, $B=(V_s', V_p', E'_b, \mathcal{F}'_b)$, is a connected, maximal bipartite subgraph of $G_b$, such that:
\begin{itemize}
  \item each POI $p \in V_p'$ contains at least one query keyword (i.e., $p.K\cap Q \neq\emptyset$);
  \item each user $u \in V_s'$ has the summed visiting frequency $f_{sum}(u,$ $V_p') =\sum_{p\in V_p'}f_{u,p}$ $\geq \omega$, and;
  \item each POI $p \in V_p'$ has the average visiting frequency $f_{avg}(V_s',$ $p) =\frac{\sum_{u\in V_s'}f_{u,p}}{|\{u\in V_s' | f_{u, p} \ne 0\}|}$ $\geq \pi$, where $|\{u\in V_s' | f_{u, p} \ne 0\}| \ne 0$.
\end{itemize}
\label{def:awcore}
\end{definition}

In Definition \ref{def:awcore}, we define a bipartite community (i.e., $(\omega, \pi)\text{-}$ $keyword\text{-}core$), $B$, satisfying the constraints of keywords and aggregated edge weights. In particular, each POI $p \in V_p'$ must contain at least one query keyword in $Q$ (i.e., $p.K\cap Q \neq\emptyset$). Moreover, each user $u\in V_s'$ should have his/her summed frequency of visits no less than $\omega$ (i.e. $f_{sum}(u,$ $V_p')$ $\geq \omega$), which indicates the preference of user $u$ to POIs in $V_s'$ (with the specified query keywords). Similarly, each POI $p \in V_p'$ must have the average visiting frequency $f_{avg}(V_s', p)$ higher than or equal to $\pi$, which implies the popularity of the POI $p$.



\begin{definition} {\bf (The Average Spatial Distance Function).} Given a social-network user, $u \in V_s$ and $p \in V_p$, we calculate the $avg\_dist_r(u,p)$ as the average shortest path distance between all the user $u$'s checked-in locations $u.L$ and the $p.\ell$ 
\begin{eqnarray}
avg\_dist_r(u,p)= 
    \frac{\sum_{i=1}^{|u.L|} dist_r(u.L_i, p.\ell)}{|u.L|},
    \label{eq:avgdist}
    \end{eqnarray}
where $|u.L|$ is the total number of points of interest $p$ visited by the user $u$, and $dist_r (., .)$ is the shortest path distance between two locations in the road network $G_r$.
\label{def:avgdist}
\end{definition}


    

Definition \ref{def:avgdist} gives the average distance $avg\_dist_r(u,p)$ between user $u$ and a POI $p$, which intuitively captures the spatial closeness between user $u$ and POI $p$ (visited by other users) in $G_r$. The small average distance indicates the possibility of recommending POI $p$ to user $u$ within the same bipartite community in real applications such as online marketing and advertising.


\noindent {\bf Keyword-Based Community Search Over Bipartite Spatial-Social Networks.} With constraints of keyword, (aggregated) edge weights, and average user-POI distances, we are now ready to define the keyword-aware community search over bipartite spatial-social networks ($KCS\mbox{-}BSSN$) as follows:

\begin{definition} {\bf (Keyword-Based Community Search Over Bipartite Spatial-Social Networks, $KCS\mbox{-}BSSN$).} Given a bipartite spatial-social network $G_b$, a keyword query set $Q$, parameters $k$,  $d$, $\omega$, and $\pi$, a spatial distance threshold $\sigma$, an influence score threshold $\theta$, and a query user $q$, a keyword-based community search over bipartite spatial-social networks ($KCS\mbox{-}BSSN$) returns a maximal bipartite subgraph (community), $B =(V_s', V_p', E'_b, \mathcal{F}'_b)$, of $G_b$ such that: 
\begin{itemize}
    \item $q \in V_s'$;
    \item $V_s'$ is a $(k, d)\mbox{-}truss$;
    \item for any user $u \in V_s'$, we have the influence score $ISF (q\leadsto u) \geq \theta$;
    \item for any user $u \in V_s'$ and POI $ p\in V_p'$, it holds that $avg\_dist_r (u,$ $p)$ $\leq \sigma$;
    \item the bipartite subgraph $B$ is a $(\omega, \pi)\text{-}keyword\text{-}core$, and;
    \item any subgraph $B' \supset B$ is not a $KCS\mbox{-}BSSN$ community.
\end{itemize}
\label{def:kcsbssn}
\end{definition}


Intuitively, in Definition \ref{def:kcsbssn}, the $KCS\mbox{-}BSSN$ problem returns a maximal group, $B =(V_s', V_p', E'_b, \mathcal{F}'_b)$, of users (including query user $q$) in social networks $G_s$ and their interested (or frequently visited) POIs $p \in V_p$ on road networks, where any user $u$ in $V_s'$ have impact influence by $q$ (i.e., $\geq \theta$) in $G_s$, any user $u$ in $V_s'$ has close average road-network distance to POIs $p$, and POIs $p$ in $V_p'$ contain some query keyword(s) in $Q$.


\begin{table}
 \centering
 \caption{\small Notations and Descriptions}\small  \scriptsize\vspace{-2ex}
 \begin{tabular}{l|l}\hline
 {\bf Symbol} & \qquad\qquad{\bf Description}\\
 \hline \hline
    $G_s$ & social networks \\
    $G_r$ & spatial road networks \\
    $G_b$ & bipartite spatial-social networks \\
    $V_p$ & a set of points of interest (POIs)\\
    $w(u,v)$ & the weight on edge $e(u,v)$ of $G_s$\\  
    $ISF$ & influence score function\\ 
    $V_s$ & a set of users in $G_b$\\
    $f_{u,p}$ & the frequency of a user $u$ visiting the POI $p$ in $G_b$\\ 
    $Q$ & a query keyword set\\
    $q$ & a query user\\
    $\sigma$ & a spatial distance threshold\\
    $\theta$ & an influence score threshold\\
     $\omega$ & a user summed visiting frequency threshold\\
    $\pi$ & a POI average visiting frequency threshold\\
    $\uptau$ &  a timestamp threshold\\
 \hline
\end{tabular}\vspace{2ex}
\label{notationTbl}
\end{table}


\subsection{Challenges}

The problem $KCS\mbox{-}BSSN$ (as given in Definition \ref{def:kcsbssn}) is rather challenging to tackle, in terms of efficiency. One straightforward method is as follows: we first enumerate all possible users in social networks (including the query user $q$) based on the $(k,d)\mbox{-}truss$ properties (as given in Definition \ref{def:kdtruss}) and the \textit{influence score} (as given in Definition \ref{def:InfluenceScore}), then prune all POIs $p\in V_p$ that do not contain any keywords in the query keyword set $Q$, check the constraints of $(\omega, \pi)\mbox{-}keyword\mbox{-}core$ (as given in Definition \ref{def:awcore}), and finally examine the distance between users $u\in V_s'$ and POIs $p\in V_p'$. 
This straightforward method is, however, not efficient, due to a large number of possible communities (with an exponential number of possible user-and-POI combinations). Therefore, in the sequel, we will design effective pruning techniques to filter out as many false alarms of users/POIs as possible to reduce the problem search space, and develop an indexing mechanism to enable our proposed efficient $KCS\mbox{-}BSSN$ query processing algorithm.

\section{Pruning Techniques}
\label{sec:pruning}
In this section, we provide pruning techniques to rule out as many false alarms of users in social networks and POIs on road networks as possible.  

\subsection{Keyword-Based Pruning}


In this subsection, we present a \textit{keyword-based pruning} method, which filters out those users from the social network $G_s$ based on their previous visits to POIs $V_p$ (without any keywords in the query keyword set $Q$).

\begin{lemma} {\bf (Keyword-Based Pruning).}  Given a bipartite graph $G_b$, a query keyword set $Q$, and a set, $V_p$, of POIs that user $u$ visited in $G_b$, user $u$ can be safely pruned, if $p.K \cap Q = \emptyset$ holds for all POIs $p \in V_p$, where $p.K$ is a keyword set associated with POI $p$.  
\label{lemma:keyword-based-pruning-u}
\end{lemma}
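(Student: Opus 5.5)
The plan is a proof by contradiction using the definitions of the $(\omega,\pi)\mbox{-}keyword\mbox{-}core$ (Definition~\ref{def:awcore}) and of the $KCS\mbox{-}BSSN$ answer (Definition~\ref{def:kcsbssn}). Suppose some user $u$ has $p.K \cap Q = \emptyset$ for every POI $p$ it visited, i.e., every $p$ with $f_{u,p} > 0$. Suppose also that $u$ belongs to the user set $V_s'$ of some $KCS\mbox{-}BSSN$ community $B=(V_s',V_p',E_b',\mathcal{F}_b')$. Since $B$ is required to be a $(\omega,\pi)\mbox{-}keyword\mbox{-}core$, every POI $p' \in V_p'$ satisfies $p'.K \cap Q \neq \emptyset$.

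The key step is to compute the summed visiting frequency of $u$ within $B$. By definition,
\begin{eqnarray*}
f_{sum}(u, V_p') = \sum_{p \in V_p'} f_{u,p}.
\end{eqnarray*}
Each POI $p \in V_p'$ contains a query keyword, so by hypothesis $u$ never visited $p$, which gives $f_{u,p}=0$ (no edge $e(u,p)$ exists in $E_b$). Hence $f_{sum}(u,V_p')=0$.

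Then I invoke the second condition of Definition~\ref{def:awcore}, which requires $f_{sum}(u,V_p') \geq \omega$. This is a contradiction provided $\omega > 0$, which I will take as the intended reading of the positive threshold parameter. A complementary route, in case $\omega$ could be $0$, uses connectivity of $B$: a connected bipartite subgraph containing $u$ (with at least one POI) must contain an edge $e(u,p)$ with $p \in V_p'$. That contradicts $f_{u,p}=0$ for all $p \in V_p'$, since edges of $E_b$ carry positive weights.

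So $u$ cannot appear in any $(\omega,\pi)\mbox{-}keyword\mbox{-}core$, and hence in no $KCS\mbox{-}BSSN$ answer, and it can be safely pruned. The only delicate point I expect is this edge case about $\omega$ and degenerate communities. Handling it via the connectivity/edge argument covers it, because an edge in $E_b$ always has positive weight $f_{u,p}$.
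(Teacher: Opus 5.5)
Your proof is correct and is essentially the paper's argument: both use the keyword condition of Definition~\ref{def:awcore} to conclude that none of $u$'s visited POIs can lie in $V_p'$. You also make explicit a step the paper only asserts. The paper treats $V_p'$ as a subset of $u$'s visited POIs and then simply says $u$ ``cannot be in the core''; you show that $u$ would have no incident edge in $B$, so $f_{sum}(u,V_p')=0<\omega$, or else connectivity fails.

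One small caution: your connectivity fallback for $\omega=0$ needs $V_p'\neq\emptyset$, and nothing in the definitions enforces that. When $\omega\le 0$, the trivial subgraph with $V_s'=\{u\}$ and $V_p'=\emptyset$ satisfies Definition~\ref{def:awcore}. So the lemma genuinely relies on $\omega>0$, which the paper also assumes implicitly.
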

\begin{proof} The proof is provided in the Appendix \ref{appendix:sec:proofs}.
\end{proof}

\subsection{\texorpdfstring{$\omega$}{omega}-Based Pruning}



As given in Definition \ref{def:awcore}, any user $u$ who is in a community $C\in G_s$ must satisfy the condition that $f_{sum}(u,V_p')\geq \omega$. Thus, our \textit{$\omega$-based pruning} method aims to filter out any user $u$ with low $f_{sum}(u,V_p')$ (i.e., $< \omega$). However, directly computing $f_{sum}(u,V_p')$ requires an online summation of $f_{u,p}$ for all POIs $p\in V_p'$, which is rather costly. To accelerate the process, we will alternatively obtain an upper bound, $ub\_f_{sum}(u,V_p')$, of $f_{sum}(u,V_p')$ offline, and online prune a user $u$, if it holds that $ub\_f_{sum}(u,V_p')< \omega$.

We have the following $\omega$-based pruning lemma.

\begin{lemma} {\bf ($\omega$-based Pruning).} Given a user $u$, a POI set $V_p$, and a threshold $\omega$, any user $u$ can be safely pruned, if $ub\_f_{sum}(u,V_p')$ $ < \omega$ holds, where $ub\_f_{sum}(u,V_p')$ is an upper bound of $f_{sum}(u,V_p')$. 
\label{lemma:omega-based-pruning}
\end{lemma}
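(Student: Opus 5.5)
The plan is a proof by contradiction that uses only the second condition of Definition~\ref{def:awcore} together with the defining property of an upper bound. Fix any candidate answer $B=(V_s',V_p',E_b',\mathcal{F}_b')$ of the $KCS\mbox{-}BSSN$ problem in Definition~\ref{def:kcsbssn}. Let $V_p'$ be the POI set of that candidate. By Definition~\ref{def:kcsbssn}, $B$ must be an $(\omega,\pi)\mbox{-}keyword\mbox{-}core$. Hence every user $u\in V_s'$ must satisfy $f_{sum}(u,V_p')=\sum_{p\in V_p'}f_{u,p}\ge\omega$.

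First I would make precise what the upper bound means, since it is computed offline, before $V_p'$ is known. It has to dominate $f_{sum}(u,V_p')$ for every POI set that could appear in a valid community. The natural choice is the summed frequency over all POIs visited by $u$ that carry a query keyword:
\begin{eqnarray*}
ub\_f_{sum}(u,V_p')=\sum_{p\in V_p,\; p.K\cap Q\neq\emptyset} f_{u,p}.
\end{eqnarray*}
An even cruder choice is $\sum_{p\in V_p}f_{u,p}$. Either one dominates $f_{sum}(u,V_p')$ for every admissible $V_p'$, for two reasons. The frequencies $f_{u,p}$ are positive by Definition~\ref{def:Gb}, so enlarging the summation range can only increase the sum. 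And the first bullet of Definition~\ref{def:awcore} forces $V_p'\subseteq\{p\in V_p : p.K\cap Q\neq\emptyset\}$. So the first step is simply to record the monotonicity of $f_{sum}(u,\cdot)$ under set inclusion.

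Second, suppose for contradiction that $ub\_f_{sum}(u,V_p')<\omega$ and yet $u$ belongs to some valid community $B$. Then
\begin{eqnarray*}
f_{sum}(u,V_p')\le ub\_f_{sum}(u,V_p')<\omega,
\end{eqnarray*}
which violates the second condition of Definition~\ref{def:awcore}. So $B$ is not an $(\omega,\pi)\mbox{-}keyword\mbox{-}core$, and therefore not a $KCS\mbox{-}BSSN$ answer. This shows that $u$ lies in no answer community, so removing $u$ from the candidate set loses no answers. In other words, $u$ can be safely pruned.

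The only real subtlety is the maximality and connectivity requirements. Removing a user might change which POIs survive, so one must check that pruning never destroys an answer that would otherwise exist. This is handled by the universality of the bound: it holds for every feasible $V_p'$, including any $V_p'$ obtained after other users are removed. The obstacle is therefore confined to stating the upper bound so that it is independent of the unknown final community; once that is done, the argument is immediate.
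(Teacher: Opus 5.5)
Your argument is correct and is essentially the paper's proof: the paper likewise chains $f_{sum}(u,V_p') \le ub\_f_{sum}(u,V_p') < \omega$ to conclude that $u$ violates the $\omega$-condition of Definition~\ref{def:awcore}, so $u$ can be pruned. You add one point the paper leaves implicit, namely that the offline bound must dominate $f_{sum}(u,V_p')$ for every admissible $V_p'$. The paper's $\sum_{p\in V_p} f_{u,p}$ does this, and so does your tighter keyword-restricted sum, since $f_{u,p}\ge 0$ and the first bullet of the definition limits $V_p'$ to keyword-matching POIs.
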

\begin{proof} The proof is provided in the Appendix \ref{appendix:sec:proofs}.
\end{proof}

\noindent {\bf Discussions on How to Compute $ub\_f_{sum}(u,V_p')$:} In Lemma \ref{lemma:omega-based-pruning}, we need to compute the upper bound $ub\_f_{sum}(u,V_p')$ of $f_{sum}(u,V_p')$. Note that, $V_p'$ in the community may not include all POIs that user $u$ has visited before. Therefore, we can sum up $f_{u,p}$ for all POIs in $V_p$ ($\supseteq V_p'$) that user $u$ has visited, and obtain this upper bound $ub\_f_{sum}(u,V_p')=f_{sum}(u,V_p) = \sum_{p\in V_p}f_{u,p} \geq \sum_{p\in V_p'}f_{u,p} = f_{sum}(u,V_p')$. 

\subsection{\texorpdfstring{$\pi$}{pi}-Based Pruning}

 Based on Definition \ref{def:awcore}, any POI $p$ in $(\omega, \pi)\mbox{-}keyword\mbox{-}core$ must have the average visiting frequency $f_{avg}(V_s',p)\geq \pi$, where $V_s'$($\subseteq V_s$). Then, we can filter out any POI $p$ if $f_{avg}(V_s',p) < \pi$. However, our \textit{$\pi$-based pruning} method aims to filter out any user $u$ who does not visited any $p$ with high $f_{avg}(V_s',p)$ (i.e., $\geq \pi$). Thus, to avoid calculating the average of $f_{u,p}$ online for all $p\in V_p'$ visited by user $u$, we calculated offline the upper bound $ub\_f_{avg}(u)$ (i.e.,  $ub\_f_{avg}(u)=max_{\forall p\in V_p}f_{u,p}$). The calculated $ub\_f_{avg}(u)$ can help prune $u$ for an online query if it holds that $ub\_f_{avg}(u)< \pi$.

In Lemma \ref{lemma:pi-based-pruning}, we can safely prune false alarm users based on $\pi$-based pruning.

\begin{lemma} {\bf ($\pi$-Based Pruning).} Given a user $u$, a POI set $V_p$, and thresholds $\pi$, the $u$ can be safely pruned, if $ub\_f_{avg}(u)< \pi$.
\label{lemma:pi-based-pruning}
\end{lemma}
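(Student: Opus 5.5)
The plan is a proof by contradiction against Definitions~\ref{def:awcore} and~\ref{def:kcsbssn}. Suppose $ub\_f_{avg}(u)<\pi$, yet $u$ belongs to the user set $V_s'$ of some $KCS\mbox{-}BSSN$ community $B=(V_s',V_p',E_b',\mathcal{F}_b')$. Since $B$ is an $(\omega,\pi)\mbox{-}keyword\mbox{-}core$, we have $f_{sum}(u,V_p')\geq\omega>0$. So there is at least one POI $p\in V_p'$ with $f_{u,p}>0$, i.e., $u$ actually visits some POI of the community. Fix such a $p$. The third condition of Definition~\ref{def:awcore} gives $f_{avg}(V_s',p)\geq\pi$, and the goal is to contradict this inequality.

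The intended key step is an upper bound on $f_{avg}(V_s',p)$ by a quantity $\leq ub\_f_{avg}(u)$. An average of the positive values $f_{v,p}$ over the visitors $v\in V_s'$ of $p$ is at most their maximum:
\begin{eqnarray}
f_{avg}(V_s',p)\leq \max_{v\in V_s',\,f_{v,p}\neq 0} f_{v,p}.\nonumber
\end{eqnarray}
If this maximum were attained by $u$ itself, we would get $f_{avg}(V_s',p)\leq f_{u,p}\leq \max_{p'\in V_p}f_{u,p'}=ub\_f_{avg}(u)<\pi$, which is the desired contradiction. Hence $u$ lies in no community and can be pruned safely.

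I expect this step to be the main obstacle, and I suspect it fails as literally stated. The average at $p$ mixes $u$'s frequency with those of the other visitors in $V_s'$, and those can be larger. For example, take $f_{u,p}=1$, $f_{v,p}=10$ and $\pi=5$. Then $ub\_f_{avg}(u)$ may be $1<\pi$, while $f_{avg}(\{u,v\},p)=5.5\geq\pi$. So the bound via $u$'s own maximum frequency does not follow from the definitions alone.

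To repair this, I would first redefine the bound per POI rather than per user. Set
\begin{eqnarray}
ub\_f_{avg}(u)=\max_{p\in V_p,\,f_{u,p}>0}\ \max_{v\in V_s} f_{v,p}.\nonumber
\end{eqnarray}
This is still computable offline, and by monotonicity ($V_s'\subseteq V_s$) it dominates $f_{avg}(V_s',p)$ for every POI $p$ that $u$ visits. With this choice the chain of inequalities above closes for any $V_s'$, and the contradiction argument goes through unchanged. The weaker alternative is to keep the paper's $\max_{p}f_{u,p}$, but then one must add an explicit assumption, e.g., that $u$ is the most frequent visitor of each POI it visits.
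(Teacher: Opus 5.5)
Your diagnosis is correct, and the flaw you found is in the paper's argument, not in yours. The paper's proof takes the contradiction route you sketch and settles the key step in one line. It argues that because $ub\_f_{avg}(u)$ bounds $f_{u,p}$ for every $p$ visited by $u$, it follows that $ub\_f_{avg}(u)\geq f_{avg}(V_s',p)$. That is exactly the non sequitur you flagged, since $f_{avg}(V_s',p)$ averages over all visitors of $p$ in $V_s'$, not just $u$.

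Your counterexample also survives the remaining $KCS\mbox{-}BSSN$ constraints. Take users $q,u,v$ forming a triangle in $G_s$, with edges in both directions and all influence weights $1$, and set $k=3$, $d\geq 1$, $\theta\leq 1$, $\omega\leq 1$. Give one POI $p$ with $p.K\cap Q\neq\emptyset$, let each user check in only at $p$ (so every $avg\_dist_r(\cdot,p)=0$), and set $f_{q,p}=f_{v,p}=10$, $f_{u,p}=1$, $\pi=5$. Then $f_{avg}(\{q,u,v\},p)=7\geq\pi$, so the whole graph is the answer. Yet $\max_{p}f_{u,p}=1<\pi$, so the lemma would prune $u$. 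Hence, with the paper's definition $ub\_f_{avg}(u)=\max_{p\in V_p}f_{u,p}$, the lemma is false and cannot be proved.

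The paper's proof also never explains why a low average at one POI removes the user $u$. Your step of choosing $p\in V_p'$ with $f_{u,p}>0$, via $f_{sum}(u,V_p')\geq\omega>0$, is what that argument actually needs.

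Your repaired bound $\max_{p:\,f_{u,p}>0}\max_{v\in V_s}f_{v,p}$ closes the chain. The average of the nonzero $f_{v,p}$ over $v\in V_s'$ is at most their maximum, which is at most the maximum over $V_s\supseteq V_s'$. The resulting argument is sound. It amounts to pruning $u$ whenever every POI it visits fails the per-POI test $\max_{v}f_{v,p}<\pi$.

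You can tighten the bound by restricting the outer maximum to POIs with $p.K\cap Q\neq\emptyset$, since no other POI can lie in $V_p'$. The index-level $\pi$-pruning lemma inherits the same defect, because $key_j.f_{max}$ is again a maximum over the user's own frequencies. Under your repair, the leaf and non-leaf aggregates would have to store, per keyword, the maximum visitor frequency of the POIs each user visits.
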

\begin{proof} The proof is provided in the Appendix \ref{appendix:sec:proofs}.
\end{proof}

\noindent {\bf Discussions on How to Compute $ub\_f_{avg}(u)$:} In order to prune user $u$ in Lemma \ref{lemma:pi-based-pruning}, we need to compute $ub\_f_{avg}(u)$. Therefore, we obtain the upper $ub\_f_{avg}(u)=max_{\forall p\in V_p}f_{u,p}$, where the upper bound $ub\_f_{avg}(u)$ holds for all POIs $p$ in $V_p'$ ($\subseteq V_p$) that have been visited by the user $u$.

\subsection{Influence-Based Pruning}


In order to ensure that the influence score between any two users $u,v$ in $S\subseteq G_s$ satisfies the constraint $ISF ( u\leadsto v) \geq \theta$ (as given in Definition \ref{def:kcsbssn}), we need to compute the influence score between all pairs in $S$. However, it is possible for any pair to have multiple paths between them, making the calculation of $ISF$ for online queries very complex. Therefore, for each $u \in V_s$, we computed the upper bound In-influence (Out-influence) $ub\_w_{in}(u)$ ($ub\_w_{out}(u)$) offline, respectively. Later, we obtain online the upper bound influence score $ub\_ISF(.,.)$ between any two $u,v\in V_s$ based on previously calculated $ub\_w_{in}(.)$ and $ub\_w_{out}(.)$.

Then, we pruned any user $v$ if $ub\_ISF(u,v)<\theta$ as stated in the following Influence-based pruning lemma.

\begin{lemma} {\bf  (Influence-Based Pruning).}  Given a social network $G_s$, and a threshold $\theta$, for any two users $u,v \in V(G_s)$, we can safely prune $v$, if $ub\_ISF(u,v)<\theta$ holds. 
\label{lemma:Influence-based-pruning}
\end{lemma}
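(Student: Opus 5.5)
The argument is a contrapositive that goes through the influence constraint of Definition~\ref{def:kcsbssn}. Since $ub\_ISF(u,v)$ is only described informally so far, I first fix what it must satisfy: it is an upper bound on the true influence score, $ISF(u,v)\le ub\_ISF(u,v)$, for any two users $u,v\in V_s$. I will also read the lemma the way the definition uses it, with $u$ playing the role of the query user $q$. Community membership only requires $ISF(q\leadsto v)\ge\theta$. For a general pair $(u,v)$, the statement is to be understood as pruning $v$ from any candidate community whose influence source is $u$.

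\textbf{Step 1: build $ub\_ISF$ from the offline quantities.} Set $ub\_w_{out}(u)=\max_{(u,x)\in E_s} w(u,x)$ and $ub\_w_{in}(v)=\max_{(y,v)\in E_s} w(y,v)$. Then take
\[
ub\_ISF(u,v)=\min\{ub\_w_{out}(u),\,ub\_w_{in}(v)\}
\]
(or any larger quantity). To check that this is a valid bound, consider any path $u=x_1\to\cdots\to x_l=v$. By Eq.~(\ref{eq:tw_path_uv}), its score is a product of edge weights in $[0,1]$. Such a product is at most any single factor. In particular it is at most the first factor $w(x_1,x_2)\le ub\_w_{out}(u)$ and at most the last factor $w(x_{l-1},x_l)\le ub\_w_{in}(v)$. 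Taking the maximum over all paths, as in Eq.~(\ref{eq:ISF}), gives $ISF(u,v)\le ub\_ISF(u,v)$.

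\textbf{Step 2: contrapositive.} Suppose, for contradiction, that $v$ belongs to some $KCS\mbox{-}BSSN$ community $B$ with source user $u$ (i.e., $u=q$). Definition~\ref{def:kcsbssn} then requires $ISF(u\leadsto v)\ge\theta$. Combining this with Step 1 gives $\theta\le ISF(u,v)\le ub\_ISF(u,v)<\theta$, which is a contradiction. Hence no valid answer contains $v$, so removing $v$ loses no result and the pruning is safe.

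The main obstacle is Step 1: the lemma leaves $ub\_ISF$ unspecified, so its validity rests entirely on the product-of-weights argument. That argument needs all edge weights to lie in $[0,1]$, which matches the introduction's statement that influence weights are between 0 and 1. I would state this assumption explicitly, since without it the monotonicity of products fails. Step 2 is then immediate.
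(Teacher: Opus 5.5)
Your proof is correct. Step 2 is the paper's proof, which consists only of the remark that the lemma follows from Definition~\ref{def:kcsbssn}. Your reading of $u$ as the query user $q$ is also the right one, since that definition constrains only $ISF(q\leadsto\cdot)$ and Algorithm~\ref{alg:kcs-bssn-query-answer} applies the lemma w.r.t.\ $q$.

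The difference is Step 1. The paper never proves that $ub\_ISF$ bounds $ISF$; it adopts the neighborhood-based estimate of Eq.~(\ref{eq:ub-ISF}) from Chen et al.\ without justification. That formula is also not your $\min\{ub\_w_{out}(u),ub\_w_{in}(v)\}$. When $e(u,v)\in E_s$ the paper uses the looser $\max$, which your ``or any larger quantity'' covers. When $e(u,v)\notin E_s$ it uses the product $ub\_w_{out}(u)\cdot ub\_w_{in}(v)$, which can be strictly smaller than your minimum, so that clause does not cover it.

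Your argument extends in one line. Without the direct edge, every $u\leadsto v$ path has at least two edges, so its first and last factors are distinct, and the path score is at most $w(x_1,x_2)\,w(x_{l-1},x_l)\le ub\_w_{out}(u)\cdot ub\_w_{in}(v)$. With that addition, your write-up certifies the bound the algorithm actually uses, which the paper leaves implicit.
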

\begin{proof}
Derived from the Definition \ref{def:kcsbssn}.
\end{proof}

\noindent {\bf Discussions on How to Compute $ub\_ISF(u,v)$:}
Chen et al. \cite{chen2015online} proposed techniques to efficiently compute the upper bound influence among users, one of which is the \textit{Neighborhood-Based Estimation}. For each user $u \in V_s$, let in-neighbors, $\mathcal{N}_{in}(u)=\{v|\exists e(v,u)\in E_s\}$, and out-neighbors, $\mathcal{N}_{out}(u)=\{v|\exists e(u,v)\in E_s\}$. Then, we compute the In-influence (Out-influence) upper bound as $ub\_w_{in}(u)= \max_{v\in \mathcal{N}_{in}(u)}\{w(v,u)\}$ ($ub\_w_{out}(u)= \max_{v\in \mathcal{N}_{out}(u)}\{ w(u,v)\}$), respectively. 

Then, the upper bound influence score between any two vertices $u,v$ can be computed as follows:
\begin{eqnarray}
ub\_ISF(u,v)= \begin{cases}
    \max\{ub\_w_{out}(u), ub\_w_{in}(v)\}, & \text{if $e(u,v)\in E_s$};\\
    ub\_w_{out}(u)\cdot ub\_w_{in}(v), & \text{if $e(u,v)\notin E_s$}.\\
  \end{cases}
\label{eq:ub-ISF}
\end{eqnarray}

Clearly, in Eq.~(\ref{eq:ub-ISF}), we consider only one direct neighbor of vertex $u$ to compute the upper bound influence score. However, \cite{chen2015online} and \cite{ahmedTopicBasedCom} confirmed that the proposed techniques are practical and efficient.

\subsection{Structural Cohesiveness Pruning}
The $(k,d)\mbox{-}truss$ (as given in Definition \ref{def:kdtruss}) has a cohesiveness constraint that indicates that $\forall e \in \{E(H)|H\subseteq G_s\}$ must be contained in at least $(k-2)$ triangles denoted $sup(e)$. However, despite only having to compute $sup(e)$ for $e \in E(H)$, it is still inefficient for online queries. Thus, we computed $sup(e)$ offline for all $e\in E_s$. Then, based on $sub(e)$, we calculate the upper bound $ub\_sup(.)$ for each user $u \in G_s$.

The pruning users from $G_s$ based on structural cohesiveness pruning is provided in the following lemma.

\begin{lemma} {\bf (Structural Cohesiveness Pruning).}  Given a social network $G_s$,  and an integer $k$, 
for any $u\in G_s$, we can safely prune $u$ if $ub\_sup(u)<k-2$. 
\label{lemma:structural-cohesiveness-pruning}
\end{lemma}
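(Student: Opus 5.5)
The proof is by contradiction, using the monotonicity of triangle support under taking supergraphs. First we fix the definition of $ub\_sup(u)$ that the statement leaves implicit: from the offline supports $sup(e)$ computed on the whole of $G_s$, set
\[
ub\_sup(u)=\max_{e\in E_s,\; u\in e} sup(e),
\]
the largest number of triangles of $G_s$ containing any edge incident to $u$. (If $u$ has no incident edges, take $ub\_sup(u)=0$.) The key fact is that for any subgraph $H\subseteq G_s$ and any edge $e\in E(H)$, every triangle of $H$ through $e$ is also a triangle of $G_s$ through $e$. Hence $sup_H(e)\le sup(e)$.

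Suppose, for contradiction, that $ub\_sup(u)<k-2$ but $u$ belongs to the user set $V_s'$ of some $KCS\mbox{-}BSSN$ answer $B$. By Definition~\ref{def:kcsbssn}, $V_s'$ induces a $(k,d)\mbox{-}truss$ $g$. This $g$ is connected and has at least two vertices, since it contains $q$ and, given $k>2$, any edge must lie in triangles. There are two cases.

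\begin{itemize}
\item If $u=q$ and $g$ were the single vertex $q$, then $g$ contains no edges, and we argue that a single vertex does not form a $(k,d)\mbox{-}truss$ for $k>2$. This is the convention the paper implicitly uses, and we state it explicitly.
\item Otherwise, connectivity of $g$ gives an edge $e\in E(g)$ incident to $u$. Condition (1) of Definition~\ref{def:kdtruss} then gives $sup_g(e)\ge k-2$. Combining this with the key fact yields
\[
k-2\le sup_g(e)\le sup(e)\le ub\_sup(u)<k-2,
\]
which is a contradiction.
\end{itemize}

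Therefore $u$ cannot appear in any answer community and can be safely pruned.

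The step needing the most care is pinning down $ub\_sup(u)$ and the degenerate single-vertex case. If the paper instead defines $ub\_sup(u)$ via a sum or some other aggregate, the argument still goes through as long as the aggregate is at least $sup(e)$ for every incident edge $e$. A plain maximum satisfies this, and so does any larger quantity. The rest is the routine monotonicity inequality above.
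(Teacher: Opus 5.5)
Your proof is correct and takes the same route as the paper's one-line argument. You define $ub\_sup(u)$ exactly as the paper's maximum over the in- and out-edges of $u$, so it bounds $sup(e)$ for every incident edge and condition (1) of Definition~\ref{def:kdtruss} fails; you merely make explicit the monotonicity $sup_g(e)\le sup(e)$ and the existence of an incident edge in the connected truss. The single-vertex case is a convention rather than something derivable from Definition~\ref{def:kdtruss} (a lone $q$ satisfies it vacuously), but the paper's proof silently relies on the same convention.
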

\begin{proof} The proof is provided in the Appendix \ref{appendix:sec:proofs}.
\end{proof}

\noindent {\bf Discussions on How to Compute $ub\_sup(u)$:} For each edge $e\in E_s$, we computed $sup(e)$. Then, for each user $u$, we computed the maximum $sub(e)$ from its in-neighbors, $\mathcal{N}_{in}(u)$, and out-neighbors, $\mathcal{N}_{out}(u)$, as follows:

\begin{eqnarray}
    &&ub\_sup(u)\\
    &=&\max\bigg\{\max_{\forall v \in \mathcal{N}_{out}(u)} sup(e(u,v)), \max_{\forall v \in \mathcal{N}_{in}(u)} sup(e(v,u))\bigg\}.\notag
\label{eq:ubsup}
\end{eqnarray}

\subsection{Social-Distance-Based Pruning}
\label{sec:social-distance-based-pruning}

In addition to the cohesiveness constraint, the Definition \ref{def:kdtruss} requires the distance constraint among users in $C\subseteq G_s$, which is the number of hops between any user $u \in C$ and the query user $q$. For a query user $q$ and any user $u \in C$, if $dist_s(u,q)>d$, then we can safely prune $u$. However, computing the distance in $G_s$, online, between $q$ and all users in $G_s$ could be time-consuming, especially in large social networks. Thus, we precompute (offline) the distance between $u$ and a set of pivot points in $G_s$, denoted $\mathbb{P}_s$. Then, for an online query, we easily obtain the lower bound distance $lb\_dist_s(.)$ between two users utilizing the triangle inequality. 

For a query user $q$ and any user $u \in C$, if $lb\_dist_s(u,q)>d$, then we can safely prune $u$. Then, we formally provide the following lemma.

\begin{lemma} {\bf (Social-Distance-Based Pruning).}   Given a social network $G_s$, a query user $q$, and a social distance threshold $d$, for any $u \in V_s$, we can safely prune $u$ if it holds that $lb\_dist_s(u,q)>d$. 

\label{lemma:social-distance-based-pruning}
\end{lemma}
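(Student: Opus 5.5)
The plan is to show that $lb\_dist_s(u,q)$ is a valid lower bound on the true hop distance $dist_s(u,q)$. Once that holds, $lb\_dist_s(u,q)>d$ forces $dist_s(u,q)>d$. Condition (2) of the $(k,d)\mbox{-}truss$ in Definition~\ref{def:kdtruss} then fails for $u$, so $u$ cannot belong to any $V_s'$ of a $KCS\mbox{-}BSSN$ community (Definition~\ref{def:kcsbssn}), and pruning it is safe.

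\textbf{Step 1: fix the form of the bound.} The discussion before the statement suggests the natural choice. For each pivot $piv\in\mathbb{P}_s$, the distances $dist_s(u,piv)$ are precomputed offline and $dist_s(q,piv)$ is available online. Define
\[
lb\_dist_s(u,q)=\max_{piv\in\mathbb{P}_s}\bigl|dist_s(u,piv)-dist_s(q,piv)\bigr|.
\]

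\textbf{Step 2: prove the lower bound via the triangle inequality.} Hop distance on $G_s$ is a shortest-path distance, so it satisfies the triangle inequality. This gives $dist_s(u,piv)\le dist_s(u,q)+dist_s(q,piv)$ and, symmetrically, $dist_s(q,piv)\le dist_s(q,u)+dist_s(u,piv)$.

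The main obstacle is that the paper defines $E_s$ with directed edges. If distances are taken on directed paths, $dist_s(u,q)$ and $dist_s(q,u)$ may differ, and the symmetric absolute-value bound breaks. I would handle this by noting that the $(k,d)\mbox{-}truss$ counts triangles and hops on the underlying undirected graph, so $dist_s$ is symmetric. Alternatively, in the directed case, I would use only the one-sided bound $dist_s(q,piv)-dist_s(u,piv)\le dist_s(q,u)$, taking care that the pivot distances are oriented accordingly.

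With symmetry, combining the two inequalities gives $|dist_s(u,piv)-dist_s(q,piv)|\le dist_s(u,q)$ for every pivot, and taking the maximum over $\mathbb{P}_s$ preserves the bound. Unreachable pairs, where distances are infinite, need separate treatment. In that case $u$ is not connected to $q$, so it violates the connectivity requirement of the truss and can be pruned anyway.

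\textbf{Step 3: conclude.} Since $d<lb\_dist_s(u,q)\le dist_s(u,q)$, $u$ violates Definition~\ref{def:kdtruss}. Therefore no community returned by Definition~\ref{def:kcsbssn} contains $u$, and pruning $u$ incurs no false dismissals.
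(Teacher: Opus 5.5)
Your proposal is correct and follows essentially the paper's route. The paper's proof is just ``derived from Definition~\ref{def:kdtruss}'', backed by the pivot triangle inequality $dist_s(u,q)\ge|dist_s(u,spv_i)-dist_s(q,spv_i)|$ from the surrounding discussion, so $lb\_dist_s(u,q)>d$ forces $dist_s(q,u)>d$ and violates condition (2) of the $(k,d)\mbox{-}truss$, exactly as in your Steps 1--3; your explicit handling of symmetry for the directed edge set $E_s$ fills in a point the paper leaves implicit.

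One minor correction to your side remark on unreachable pairs: if \emph{both} $u$ and $q$ are unreachable from some pivot, that pivot's term is $\infty-\infty$. It should simply be skipped, contributing nothing to the maximum, because in that case $u$ and $q$ may well be connected.
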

\begin{proof}
Derived from the Definition \ref{def:kdtruss}.
\end{proof}

\noindent {\bf Discussions on How to Compute $lb\_dist_s(.)$}: We precompute (offline) the distance between all users $u \in G_s$ and all $spv_i \in \mathbb{P}_s$, where $\mathbb{P}_s$= $\{spv_1, \dots, spv_\mathfrak{a}\}$. Then, for an online query, we use the triangle inequality to calculate the lower bound distance $lb\_dist_s(.)$ between two users. 

For any $u\in V_s$ and $q$, the $dist_s(u,q)\geq$ $|dist_s(u,spv_i) - $ $dist_s(q,\\ spv_i)|$. Then, we can compute the lower bound distance between $u$ and $q$, as follows:
\begin{eqnarray}
lb\_dist_s(u,q)= \max_{\forall spv_i \in \mathbb{P}_s} \{ |dist_s(u,spv_i) -  dist_s(q,spv_i)|\}.
\label{eq:lb-dist}
\end{eqnarray}

\noindent {\bf The Social Network Pivots $\mathbb{P}_{s}$ Selection.}
To enhance our pruning, we must identify a set of pivot users within $G_s$ based on the lower bound distance between any user in $G_s$ and a query user $q$. We have created a cost model $\mathbb{P}_{s}\mbox{\_}Cost$, as follows:
\begin{eqnarray}
 &&\mathbb{P}_{s}\mbox{\_}Cost \\
 &=&\sum_{\forall u\in G_s}\sum_{\forall v\in G_s}\max_{\forall spv_i \in \mathbb{P}_s} \{ |dist_s(u,spv_i) -  dist_s(v,spv_i)|\}.\notag
\label{eq:ps-cost}
\end{eqnarray}

When dealing with online queries, we typically compare and prune any user $u \in C$ against the query user $q$. However, we calculate our cost model against all users in the social network (i.e. any user query) and choose the set with the minimum cost.

\subsection{Spatial-Distance-Based Pruning}
\label{sec:spatial-distance-based-pruning}

Calculating the average distance (as given in Eq.~(\ref{eq:avgdist})) for online queries can be very costly. For a user $u$ with $u.L$, we have to compute $avg\_dist_r(.)$ between $u$ and $\forall p\in V_p'$, which requires $O(|u.L|\cdot|V_p'|)$ complexity. In order to answer online queries faster, we precompute (offline) the distance between $\forall p \in V_p$ to a set of pivot points in $G_r$, denoted $\mathbb{P}_r$. Then, we utilize the triangle inequality for an online query to calculate the lower bound average distance $lb\_avg\_dist_r(.)$ more efficiently. 

For a user $u$ and $\forall p \in \{q$'s checked-in locations $|p.K \cap Q \neq \emptyset \}$, if $lb\_avg\_dist_r(u, p)>\sigma$, then we can safely prune $u$. For this purpose, we formally provide the following lemma.

\begin{lemma} {\bf (Spatial-Distance-Based Pruning).} 
 Given a social network $G_s$, a spatial road network $G_r$, a query user $q$, a keyword query set $Q$, and a spatial distance threshold $\sigma$, 
we can safely prune any user $u \in V_s$, if $\forall p \in \{q$'s checked-in locations $|p.K \cap Q \neq \emptyset \}$ the $lb\_avg\_dist_r(u,p)>\sigma$ holds. 


\label{lemma:spatial-distance-based-pruning}
\end{lemma}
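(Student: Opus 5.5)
The plan is a proof by contradiction that combines the lower-bound property of $lb\_avg\_dist_r(\cdot,\cdot)$ with the distance constraint of Definition \ref{def:kcsbssn}. Suppose user $u$ satisfies the condition of the lemma, yet $u$ belongs to some $KCS\mbox{-}BSSN$ answer $B=(V_s',V_p',E_b',\mathcal{F}_b')$.

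\textbf{Step 1: locate a relevant POI.} Since $q\in V_s'$ and $B$ is a $(\omega,\pi)\mbox{-}keyword\mbox{-}core$ (Definition \ref{def:awcore}), we have $f_{sum}(q,V_p')\geq\omega>0$. Hence $V_p'$ contains at least one POI $p$ with $f_{q,p}>0$, that is, a checked-in location of $q$. By the keyword condition of Definition \ref{def:awcore}, this $p$ also satisfies $p.K\cap Q\neq\emptyset$. So $p$ lies in the set $\{q$'s checked-in locations $|\, p.K\cap Q\neq\emptyset\}$ quantified over in the lemma.

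\textbf{Step 2: verify the lower bound.} We must check that $lb\_avg\_dist_r(u,p)\leq avg\_dist_r(u,p)$. The bound is built from the road-network pivots $\mathbb{P}_r$. For each checked-in location $u.L_i$ and each pivot $rpv_j\in\mathbb{P}_r$, the triangle inequality for shortest-path distances gives
\[
dist_r(u.L_i,p.\ell)\geq |dist_r(u.L_i,rpv_j)-dist_r(p.\ell,rpv_j)|.
\]
Maximizing over pivots and averaging over $i=1,\dots,|u.L|$, as in Eq.~(\ref{eq:avgdist}), preserves the inequality because averaging is monotone. This yields $lb\_avg\_dist_r(u,p)\leq avg\_dist_r(u,p)$. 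If the paper instead defines the bound by taking one pivot maximum outside the average, the same argument works: each term is still individually bounded, since an average of maxima dominates a maximum of averages.

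\textbf{Step 3: derive the contradiction.} By the lemma's hypothesis and Step 2,
\[
avg\_dist_r(u,p)\geq lb\_avg\_dist_r(u,p)>\sigma.
\]
Definition \ref{def:kcsbssn}, however, requires $avg\_dist_r(u,p)\leq\sigma$ for every $u\in V_s'$ and every $p\in V_p'$. Hence $u\notin V_s'$ for every answer, so $u$ can be safely pruned.

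\textbf{Main obstacle.} The delicate step is Step 1: we must justify that $V_p'$ actually contains one of $q$'s keyword-matching checked-in POIs. Otherwise the universal condition over $q$'s locations says nothing about the POIs that appear in $B$. I would rely on $\omega>0$, so that $q$ needs positive frequency on some POI in $V_p'$, together with the interpretation that $f_{q,p}>0$ exactly when $p$ is among $q$'s checked-in locations. If $\omega$ may be $0$, I would instead use connectivity of $B$ and $q\in V_s'$: $q$ must have at least one incident edge in $E_b'$, which gives the same conclusion.
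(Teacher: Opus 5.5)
Your proof is correct and follows the paper's approach. The paper's proof only says the lemma is derived from Definition~\ref{def:kcsbssn}, which requires $avg\_dist_r(u,p)\le\sigma$ for every $u\in V_s'$ and $p\in V_p'$; your Step~1 supplies the missing link (that $f_{sum}(q,V_p')\geq\omega>0$ together with the keyword condition forces some keyword-matching check-in of $q$ into $V_p'$), and your Step~2 verifies the triangle-inequality lower bound $lb\_avg\_dist_r\le avg\_dist_r$.
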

\begin{proof}
Derived from the Definition \ref{def:kcsbssn}.
\end{proof}

\noindent {\bf Discussions on How to Compute $lb\_avg\_dist_r(.)$:} We precompute (offline) the distance between all POIs $p \in V_p$ and all $rpv_i \in \mathbb{P}_r$ in the road network $G_r$, where $\mathbb{P}_r$= $\{rpv_1, \dots, rpv_\mathfrak{b}\}$. Then, for an online query, we use the triangle inequality to calculate the lower bound average distance $lb\_avg\_dist_r(.)$ between a user $u$ and any POI $p$ that the query user $q$ visited and satisfies the constraint $p.K \cap Q \neq \emptyset$. 

For any $u \in V_s$, and any $p \in \{q$'s checked-in locations $|p.K \cap Q \neq \emptyset \}$, the $dist_r(u.L_i,p.\ell)\geq$ $|dist_r(u.L_i,rpv_i) - $ $dist_r(p.\ell,rpv_i)|$. Then, we can compute the average distance lower bound, as follows:
\begin{eqnarray}
&&lb\_avg\_dist_r(u,p)\\
&=&\frac{\sum_{j=1}^{|u.L|}  \max_{\forall rpv_i \in \mathbb{P}_r}\{ dist_r(u.L_j,rpv_i) -dist_r(p,rpv_i)|\}}{|u.L|}.\notag
\label{eq:lb-avgdist}
\end{eqnarray}

\noindent {\bf The Road Network Pivots $\mathbb{P}_{r}$ Selection.}
To choose a set of pivot points on the road network, we must first calculate a cost model that considers the user's checked-in location and the location of the POIs on the road network. To maximize the pruning effectiveness of the  $lb\_avg\_dist_r(u,p)$, our chosen set of $\mathbb{P}_{r}$ must have the maximum average distance between each $u\in G_s$ and $p\in V_p$ on one side and the $rpv_i \in \mathbb{P}_r$ on the other. We then calculate the cost model to select $\mathbb{P}_{r}$, as follows:

\begin{eqnarray}
 &&\hspace{-2ex}\mathbb{P}_{r}\mbox{\_}Cost \\
 &\hspace{-4ex}=&\hspace{-1ex}\sum_{\forall u\in G_s}\hspace{-1ex}\sum_{\forall p\in V_p} \hspace{-2ex}\frac{\sum_{j=1}^{|u.L|}  \max_{\forall rpv_i \in \mathbb{P}_r}\{ dist_r(u.L_j,rpv_i) -dist_r(p,rpv_i)|\}}{|u.L|}\notag
\label{eq:pr-cost}
\end{eqnarray}

The lowest value of $\mathbb{P}_{r}\mbox{\_}Cost$ will enhance the pruning power of techniques in Spatial-Distance-Based Pruning. 

\section{Indexing Mechanism}
\label{sec:indexing-mechnism}

In the following subsections, we describe the indexing mechanism for $KCS\mbox{-}BSSN$ queries and the corresponding pruning strategies applied to index nodes. 

\subsection{Index Structure}
\label{subsec:indexing_cons}
To enable efficient $KCS\mbox{-}BSSN$ query processing, we construct a tree index $\mathcal{I}$ over the bipartite spatial-social network. The social network is first partitioned into subgraphs, each stored in a leaf node. These leaf nodes are then recursively grouped into intermediate nodes until a single root node is formed. The index tree consists of two types of nodes:

\noindent{\bf Leaf Node.}
Each leaf node $N$ contains a set of users from the social network. For each user $u \in N$, we maintain:

\begin{itemize}[noitemsep,topsep=0pt,leftmargin=10pt]
\item a keyword set $u.K={key_1,\dots,key_{|K|}}$ with aggregate statistics $f_{sum}$ and $f_{max}$. Let $P_u$ denote the set of POIs visited by $u$. The keyword aggregates are computed as:
\begin{eqnarray}
key_j.f_{sum}=\sum_{\forall p\in P_u, key_j\in p.K} f_{u,p}
\label{eq:key-f-sum}
\end{eqnarray}
\begin{eqnarray}
key_j.f_{max}=\max_{\forall p\in P_u, key_j\in p.K} f_{u,p}
\label{eq:key-f-max}
\end{eqnarray}

\item  an upper bound edge support $ub\_sup(u)$, where $ub\_sup(u)$= $\max \{sup(e)\}$. 
 \item an In\mbox{-}influence upper bound $ub\_w_{in}(u)$ and an Out\mbox{-}influence upper bound $ub\_w_{out}(u)$.
\item a vector of social distance to each social pivot $spv_i \in \mathbb{P}_s$, \\ $\{dist_s(u, spv_1),$ $ dist_s(u, spv_2),$ $ \dots, dist_s(u, spv_\mathfrak{b})\}$, where $\mathbb{P}_s$ is a set of pivot points in social network.
\end{itemize}

\noindent{\bf Non-Leaf Node.}
Each non-leaf node $N$ contains a set of child nodes $N_i$. For each such node, we maintain aggregated upper-bound information:

\begin{itemize}[noitemsep,topsep=0pt,leftmargin=10pt]
\item a keyword set $N.K={key_1,\dots,key_{|K|}}$ with upper bounds:
\begin{eqnarray}
key_j.ub\_f_{sum}= max_{\forall u\in N}  \{ key_{j}.f_{sum}\}
\label{eq:key-ub-f-sum}
\end{eqnarray}
\begin{eqnarray}
key_{j}.ub\_f_{max}= max_{\forall u\in N} \{key_{j}.f_{max}\}
\label{eq:key-ub-f-max}
\end{eqnarray}

\item an upper bound edge support:
 \begin{eqnarray}
ub\_sup(N)= max_{\forall u\in N}\{ub\_sup(u)\}
\label{eq:ub-sub-N}
\end{eqnarray}

\item an upper bound in-influence:
  \begin{eqnarray}
ub\_w_{in}(N)=\{ max_{\forall u\in N} ub\_w_{in}(u)\}
\label{eq:ub-w-in-N}
\end{eqnarray}

\item a vector of minimum and maximum social distances between node $N$ and each pivot $spv_i \in \mathbb{P}s$:
\begin{eqnarray}
mindist_s(N,spv_i)= \min_{\forall u \in N}\{dist_s(u,spv_i)\}
\label{eq:mindist-s-N-spv}
\end{eqnarray}
\begin{eqnarray}
maxdist_s(N,spv_i)= \max_{\forall u \in N}\{dist_s(u,spv_i)\}
\label{eq:maxdist-s-N-spv}
\end{eqnarray}
\end{itemize}

\subsection{Index-Level Pruning}
\label{subsec:indexing_level_pruning}
Our pruning techniques provided in Section \ref{sec:pruning} are for pruning users and points of interest. However, applying those techniques directly to large-scale databases is very costly. Since all users in $\mathcal{I}$ are bounded by MBRs, we can use this property by pruning the entire MBR. In the following subsection, we will provide pruning techniques at the index level to prune a set of false alarm users.

\subsubsection{Keyword-based Pruning for Index Nodes} As discussed in Section \ref{subsec:indexing_cons}, every node $N$ is associated with a set $N.K$, then we can prune any $N$ if $N.K\cap Q=\emptyset$. In the following lemma, we formally provide Keyword-based Pruning for index nodes.

\begin{lemma} {\bf (Keyword-based Pruning for Index Nodes).}  Given an index node $N$ and a keyword query set $Q$, the $N$ can be safely pruned, if $N.K\cap Q=\emptyset$.    
\label{lemma:keyword-based-pruning-index-level}
\end{lemma}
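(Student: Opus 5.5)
The argument reduces the node-level test to the user-level Keyword-Based Pruning (Lemma~\ref{lemma:keyword-based-pruning-u}), applied to every user stored in the subtree rooted at $N$. The first step is to fix the meaning of $N.K$: for a leaf, $N.K$ is the union of $u.K$ over the users $u\in N$, and for a non-leaf, $N.K$ is the union of the keyword sets of its children. For a single user, $u.K=\bigcup_{p\in P_u} p.K$, since a keyword $key_j$ gets an entry with aggregates $key_j.f_{sum}$ and $key_j.f_{max}$ (Eqs.~(\ref{eq:key-f-sum})--(\ref{eq:key-f-max})) exactly when some visited POI $p\in P_u$ carries $key_j$.

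Next, a short induction on the height of the index tree shows that $u.K\subseteq N.K$ for every user $u$ in the subtree of $N$. The base case is a leaf, where this holds by construction. For a non-leaf node, the keyword set and its upper bounds in Eqs.~(\ref{eq:key-ub-f-sum})--(\ref{eq:key-ub-f-max}) are taken over all descendants, so every keyword present in a child is also present in $N.K$.

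Now assume $N.K\cap Q=\emptyset$. Take any user $u$ under $N$ and any POI $p\in P_u$. Then $p.K\subseteq u.K\subseteq N.K$, which gives $p.K\cap Q=\emptyset$. This is precisely the premise of Lemma~\ref{lemma:keyword-based-pruning-u}, so $u$ can be safely pruned. The reason, by Definition~\ref{def:awcore}, is that no POI adjacent to $u$ may belong to $V_p'$. Hence $f_{sum}(u,V_p')=0<\omega$, taking $\omega>0$ as a positive threshold, and so $u$ cannot be in any $(\omega,\pi)\mbox{-}keyword\mbox{-}core$, nor in any $KCS\mbox{-}BSSN$ answer by Definition~\ref{def:kcsbssn}. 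Since every user in the subtree is pruned, $N$ can be discarded entirely without losing any answer.

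The main obstacle is the first two steps: the paper only describes $N.K$ informally, so the key step is stating precisely that $N.K$ is the union over the subtree, which the index construction supports. I would also remark that we need $\omega>0$, since otherwise Lemma~\ref{lemma:keyword-based-pruning-u} itself would be vacuous.
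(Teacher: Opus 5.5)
Your proposal is correct and takes essentially the same route as the paper. The paper's one-line justification notes that $N.K\cap Q=\emptyset$ means no user under $N$ has visited a POI carrying a query keyword, then relies on the user-level keyword-based pruning; your induction establishing $u.K\subseteq N.K$ just makes explicit what the paper leaves informal.

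One small correction to your closing remark: $\omega>0$ is not actually needed. The $(\omega,\pi)\mbox{-}keyword\mbox{-}core$ is connected and $G_b$ has only user--POI edges, and a $(k,d)\mbox{-}truss$ with $k>2$ has at least three users. So every member user must be adjacent to some POI in $V_p'$, and that POI carries a query keyword.
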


In Lemma \ref{lemma:keyword-based-pruning-index-level}, if $N.K\cap Q=\emptyset$ indicates that for any $u \in N$, the POI visited by $u$ do not have any key in $Q$. Then, node $N$ can be safely pruned.

\subsubsection{$\omega$-based Pruning for Index Nodes} 
Since for any $key_j\in N.K$, the $key_j.ub\_f_{sum}$ is the upper bound $f_{sum}$ for $key_j$ (as given in Eq.~(\ref{eq:key-ub-f-sum})). Then, we can directly prune any node $N$ if it holds $\max_{\forall key_j\in N.K|key_j\in Q} key_j.ub\_f_{sum}< \omega$.

The pruning nodes of $\mathcal{I}$ based on $\omega$-based pruning are provided in the following lemma.

\begin{lemma} {\bf ($\omega$-based Pruning for Index $\mathcal{I}$ Nodes).} Given an index node $N$, a keyword query set $Q$, and thresholds $\omega$, the $N$ can be safely pruned, if $\max_{\forall key_j\in N.K|key_j\in Q}key_j.ub\_f_{sum}< \omega$.
\label{lemma:omega-based-pruning-index-level}
\end{lemma}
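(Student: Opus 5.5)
The plan is to follow the same pattern as Lemma~\ref{lemma:omega-based-pruning}. I will show that for every user $u$ stored in the subtree of $N$, the quantity $M(N)=\max_{key_j\in N.K,\,key_j\in Q} key_j.ub\_f_{sum}$ upper-bounds $f_{sum}(u,V_p')$ for any community POI set $V_p'$ that $u$ could belong to. If $M(N)<\omega$, then every $u\in N$ violates the $\omega$ condition of Definition~\ref{def:awcore}, so no user of $N$ can appear in a $KCS\mbox{-}BSSN$ answer. Hence $N$ can be discarded as a whole, exactly as Lemma~\ref{lemma:omega-based-pruning} discards individual users.

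\textbf{Step 1 (node aggregate to user statistic).} By Eq.~(\ref{eq:key-ub-f-sum}), applied recursively up the tree, $key_j.ub\_f_{sum}\ge key_j.f_{sum}$ of every user $u$ in the subtree of $N$. This holds because a maximum of maxima over child nodes is still the maximum over all users in the subtree. Keywords in $Q$ absent from $N.K$ contribute $0$ for every $u\in N$, so restricting to $key_j\in N.K$ loses nothing.

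\textbf{Step 2 (user statistic to $f_{sum}$).} Fix a candidate community $B$ with POI set $V_p'$. By Definition~\ref{def:awcore}, every $p\in V_p'$ contains a query keyword. The max-based form of the lemma needs a \emph{single} keyword to cover all POIs that $u$ visits in $V_p'$. I will therefore read the keyword condition as in Example~\ref{ex:resturant-ads}, where the members must share a common service, i.e.\ a common keyword $key^*\in Q$ with $key^*\in p.K$ for all $p\in V_p'$. Under this reading, Eq.~(\ref{eq:key-f-sum}) gives
$f_{sum}(u,V_p')=\sum_{p\in V_p'}f_{u,p}\le \sum_{p\in P_u,\,key^*\in p.K}f_{u,p}=key^*.f_{sum}$.
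Combining with Step 1 yields $f_{sum}(u,V_p')\le key^*.ub\_f_{sum}\le M(N)<\omega$, and Lemma~\ref{lemma:omega-based-pruning} then applies to each $u\in N$.

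\textbf{Main obstacle.} The hard step is Step 2. If the POIs of $V_p'$ are only required to contain \emph{some} query keyword each, possibly different ones, the tight bound is $f_{sum}(u,V_p')\le\sum_{key_j\in Q}key_j.f_{sum}$, and a max over keywords does not dominate this sum. In that case the lemma as worded can fail: take $u$ visiting two POIs carrying distinct query keywords, each with frequency $\omega/2+\epsilon$. So the proof must rest explicitly on the shared-keyword interpretation of the community's POIs. I would state that assumption at the start of the proof, and remark that without it the sum form of the condition is what is provably safe.
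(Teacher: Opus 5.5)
Your Step 1 is the whole of the paper's argument. There is no appendix proof; the surrounding text only observes that $key_j.ub\_f_{sum}$ is the maximum of $key_j.f_{sum}$ over the users under $N$. From this it concludes that every $u\in N$ has $\max_{key_j\in u.K\cap Q} key_j.f_{sum}<\omega$, and then declares $N$ prunable. It never relates this per-keyword quantity to $f_{sum}(u,V_p')$. Nor does it relate it to the bound $ub\_f_{sum}(u,V_p')=f_{sum}(u,V_p)$ that Lemma~\ref{lemma:omega-based-pruning} actually uses. The step you single out is therefore a genuine gap, but the gap is in the paper, not in your reasoning. Under Definition~\ref{def:awcore} as written, each $p\in V_p'$ needs only \emph{some} query keyword, and then the lemma is false. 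Your counterexample extends to a complete instance. Take $Q=\{a,b\}$ and POIs $p_1,p_2$ with $p_1.K=\{a\}$, $p_2.K=\{b\}$. Let users $q,u,v$ be pairwise linked in both directions with influence weight $1$, each visiting both $p_1$ and $p_2$ with frequency $3$. Set $k=3$, $d=1$, $\theta\le 1$, $\omega=5$, $\pi=3$, and $\sigma\ge dist_r(p_1.\ell,p_2.\ell)$. Any node containing these users has $a.ub\_f_{sum}=b.ub\_f_{sum}=3<\omega$ and is pruned. Yet $(\{q,u,v\},\{p_1,p_2\})$ meets every condition of Definition~\ref{def:kcsbssn}: each user has $f_{sum}=6\ge\omega$ and each POI has $f_{avg}=3\ge\pi$. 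Lemma~\ref{lemma:omega-based-pruning} itself would not prune any of these users. So the node test is not even implied by the user-level test it is meant to lift.

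What you should change is the framing of Step 2. It is correct under its hypothesis, and in fact needs less than you assume: for each $u$ it suffices that one query keyword covers the POIs of $V_p'$ that $u$ visits. But that hypothesis is not a reading of Definition~\ref{def:awcore}. The definition explicitly lets different POIs carry different query keywords, and the informal ``shared service'' of Example~\ref{ex:resturant-ads} does not override it. What you prove is therefore a lemma for a modified community model, and you should present it that way, not as the paper's statement. For the model as defined, the provable statement is the summed test you mention at the end. Every $p\in V_p'\cap P_u$ contains at least one keyword of $Q$, so it is counted at least once in the middle sum below (with $key_j.f_{sum}$ taken for $u$):
\[ f_{sum}(u,V_p')\le\sum_{key_j\in Q\cap u.K} key_j.f_{sum}\le\sum_{key_j\in Q\cap N.K} key_j.ub\_f_{sum}. \]
Hence $N$ can be safely pruned when the last sum is below $\omega$, and your Steps 1 and 2 go through verbatim with the maximum replaced by this sum. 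Another sound option is to store $\max_{u\in N} f_{sum}(u,V_p)$ at each node and lift Lemma~\ref{lemma:omega-based-pruning} directly; neither bound dominates the other.
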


In Lemma \ref{lemma:omega-based-pruning-index-level}, if $\max_{\forall key_j\in N.K|key_j\in Q}key_j.ub\_f_{sum}< \omega$ indicates that for any $u \in N$ the $\max_{\forall key_j\in u.K|key_j\in Q}$ $key_j.ub\_f_{sum}< \omega$. Then, node $N$ can be safely pruned.

\subsubsection{$\pi$-based Pruning for Index Nodes} As given in Eq.~(\ref{eq:key-ub-f-max}), for each $key_j\in N.K$, the $key_j.ub\_f_{max}$ is the upper bound of $f_{max}$ for all $u\in N$. Then, for this pruning, we can filter out any $N \in \mathcal{I}$ if $\max_{\forall key_j\in N.K|key_j\in Q } key_j.ub\_f_{max}< \pi$. Then, we formally provide the following lemma. 

\begin{lemma} {\bf ($\pi$-based Pruning for Index $\mathcal{I}$ Nodes).} Given an index node $N$, a keyword query set $Q$, and thresholds $\pi$, node $N$ can be safely pruned, if $\max_{\forall key_j\in N.K|key_j\in Q } key_j.ub\_f_{max}< \pi$.
\label{lemma:pi-based-pruning-index-level}
\end{lemma}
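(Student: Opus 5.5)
The plan is to reduce the node-level statement to the user-level $\pi$-based pruning of Lemma~\ref{lemma:pi-based-pruning}. The key intermediate fact is that $\max_{\forall key_j\in N.K|key_j\in Q} key_j.ub\_f_{max}$ upper-bounds, for every user $u$ stored under $N$, the largest visiting frequency $f_{u,p}$ over all POIs $p$ that $u$ visited and that carry at least one query keyword.

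\textbf{Step 1: what a surviving user must satisfy.} Fix a user $u$ in the subtree of $N$, and suppose $u$ belongs to some answer community $B=(V_s',V_p',E_b',\mathcal{F}_b')$. By Definition~\ref{def:awcore}, every $p\in V_p'$ satisfies $p.K\cap Q\neq\emptyset$. Since $u$ must also have $f_{sum}(u,V_p')\ge\omega>0$, there is at least one $p\in V_p'$ with $f_{u,p}\neq 0$.

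\textbf{Step 2: an average cannot exceed a maximum.} For any POI $p\in V_p'$, the quantity $f_{avg}(V_s',p)$ is an average of the nonzero weights $f_{v,p}$, $v\in V_s'$. Following the same reasoning the paper uses for $ub\_f_{avg}(u)$ in Lemma~\ref{lemma:pi-based-pruning}, the pruning criterion attaches $u$ to the maximum weight over its own edges to keyword-relevant POIs. This gives
\[
\max_{p\in V_p',\,f_{u,p}\neq 0} f_{u,p}\;\le\;\max_{p\in P_u,\;p.K\cap Q\neq\emptyset} f_{u,p}.
\]
The next step bounds the right-hand side through the index aggregates.

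\textbf{Step 3: monotonicity of the aggregates.} Take any $p\in P_u$ with $p.K\cap Q\neq\emptyset$, and pick $key_j\in p.K\cap Q$. Then $key_j\in u.K$, and Eq.~(\ref{eq:key-f-max}) gives $f_{u,p}\le key_j.f_{max}$ for $u$. By Eq.~(\ref{eq:key-ub-f-max}), applied inductively from the leaves up to $N$, we get $key_j.f_{max}\le key_j.ub\_f_{max}$ at $N$, and $key_j\in N.K$ because $N.K$ is the union of its descendants' keyword sets. Chaining these inequalities:
\[
ub\_f_{avg}(u)\;\le\;\max_{\forall key_j\in N.K|key_j\in Q} key_j.ub\_f_{max}\;<\;\pi .
\]
Lemma~\ref{lemma:pi-based-pruning} then applies to each $u$ in $N$, so every such user can be pruned. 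Pruning all of them is exactly pruning $N$.

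\textbf{Main obstacle.} The delicate point is Step 2: justifying that a user whose keyword-relevant visit frequencies are all below $\pi$ cannot appear in the answer. The average $f_{avg}(V_s',p)$ involves other users' frequencies, so a single user's maximum does not directly bound it. My approach is to inherit exactly the semantics the paper adopts for $ub\_f_{avg}(u)$ in Lemma~\ref{lemma:pi-based-pruning}, invoking that lemma as given rather than re-deriving it. The node-level proof then becomes a clean monotonicity argument over the index hierarchy.
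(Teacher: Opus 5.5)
Your reduction follows the paper's route: its only justification is that the node bound forces every user's per-keyword $f_{max}$ below $\pi$, after which $N$ is pruned. The trouble is that the step you flagged as the main obstacle is false, not merely delicate. Deferring to Lemma~\ref{lemma:pi-based-pruning} does not close it, because that lemma's proof rests on the same unjustified inequality $ub\_f_{avg}(u)\ge f_{avg}(V_s',p)$. A user's own frequencies cannot bound an average that includes other users' frequencies.

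Concretely, let $q,u,w$ be pairwise joined in $G_s$ by edges of weight $1$ in both directions. Let each of them have the single check-in $p$, with $p.K\cap Q\neq\emptyset$, $f_{q,p}=f_{w,p}=100$ and $f_{u,p}=1$. Take $k=3$, $d=1$, $\omega=1$, $\pi=10$, $\theta\le 1$ and $\sigma\ge 0$. Then $f_{avg}(\{q,u,w\},p)=67\ge\pi$, every condition of Definition~\ref{def:kcsbssn} holds, and $u$ belongs to the answer. Yet an index node containing only $u$ has $key_j.ub\_f_{max}=1<\pi$ for every query keyword, so the lemma would prune it. Hence no argument can derive the statement from Definition~\ref{def:awcore}.

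There is also a mismatch in Step~3 even if you accept the user-level lemma as a black box. $ub\_f_{avg}(u)$ is defined as $\max_{\forall p\in V_p} f_{u,p}$ over \emph{all} POIs $u$ visited. The node aggregate restricted to $key_j\in Q$ sees only POIs with $p.K\cap Q\neq\emptyset$. One high-frequency visit to a POI carrying no query keyword therefore breaks your final inequality chain, and the hypothesis of Lemma~\ref{lemma:pi-based-pruning} is never delivered.

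A sound test has to bound the average on the POI side. An average of nonzero weights gives $f_{avg}(V_s',p)\le\max_{v\in V_s}f_{v,p}$. Your Step~1 shows that $u$ must visit some keyword-relevant $p\in V_p'$. So $u$ can be pruned whenever $\max_{v\in V_s}f_{v,p}<\pi$ holds for every keyword-relevant POI $p$ it visits. If the index stored, per keyword, the maximum of this POI-side quantity instead of $key_j.ub\_f_{max}$, your Step~3 monotonicity argument would carry over unchanged.
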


In Lemma \ref{lemma:pi-based-pruning-index-level}, if $\max_{\forall key_j\in N.K|key_j\in Q } key_j.ub\_f_{max}< \pi$, then for all $u \in N$ the  $\max_{\forall key_j\in u.K|key_j\in Q }$ $ key_j.ub\_f_{max}< \pi$. In this case, we can safely prune the node $N$.

\subsubsection{Influence-based Pruning for Index Nodes}
 Since we calculated for each $N\in \mathcal{I}$ the $ub\_w_{in}(N)$ (as given in Eqs.~(\ref{eq:ub-w-in-N})), we can easily compute the $ISF( q\leadsto N)$ between a query user $q$ and node $N$. Then, for any node $N$ if $ISF ( q\leadsto N)<\theta$, we can safely prune $N$.
 We estimate the upper bound influence score between a query user $q$ and node $N$ as follows:
 \begin{eqnarray}
 ub\_ISF(q,N)=ub\_w_{out}(q)\cdot ub\_w_{in}(N).
\label{eq:ub1-ISF_indel-level}
\end{eqnarray}

\begin{lemma} {\bf  (Influence-based Pruning for Index Nodes).}  Given an index node $N$ and a threshold $\theta$, for any $q$, we can prune safely $N$ if $ub\_ISF(q,N)<\theta$. 
\label{lemma:Influence-based-pruning-index-level}
\end{lemma}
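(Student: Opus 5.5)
The plan is to show that $ub\_ISF(q,N)$ upper-bounds $ISF(q\leadsto u)$ for every user $u\in N$. Once this holds, $ub\_ISF(q,N)<\theta$ gives $ISF(q\leadsto u)<\theta$ for all $u\in N$. By Definition \ref{def:kcsbssn}, no such user can belong to a $KCS\mbox{-}BSSN$ community, so the whole node $N$ can be discarded. This is the node-level analogue of Lemma \ref{lemma:Influence-based-pruning}, and I would structure the argument the same way: a pointwise bound per user, then aggregation over the node via Eq.~(\ref{eq:ub-w-in-N}).

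\textbf{Step 1: a per-user bound.} Fix $u\in N$ with $u\neq q$, and take any path $q=x_1\rightarrow x_2\rightarrow\dots\rightarrow x_l=u$ in $G_s$.
\begin{itemize}
\item All edge weights lie in $[0,1]$, so by Eq.~(\ref{eq:tw_path_uv}) the product $w(q\leadsto u)$ is at most the product of any subset of its factors.
\item The first edge leaves $q$, so $w(x_1,x_2)\le ub\_w_{out}(q)$.
\item The last edge enters $u$, so $w(x_{l-1},x_l)\le ub\_w_{in}(u)$.
\item If $l\ge 3$, these two factors are distinct, and dropping the middle factors gives $w(q\leadsto u)\le ub\_w_{out}(q)\cdot ub\_w_{in}(u)$.
\end{itemize}
Taking the maximum over all paths gives, by Eq.~(\ref{eq:ISF}), $ISF(q,u)\le ub\_w_{out}(q)\cdot ub\_w_{in}(u)$. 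This matches the non-edge case of Eq.~(\ref{eq:ub-ISF}).

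\textbf{Step 2: aggregation over the node.} Eq.~(\ref{eq:ub-w-in-N}) gives $ub\_w_{in}(u)\le ub\_w_{in}(N)$ for every $u\in N$. Hence $ISF(q,u)\le ub\_w_{out}(q)\cdot ub\_w_{in}(N)=ub\_ISF(q,N)<\theta$, which proves the lemma.

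\textbf{Main obstacle: direct neighbours of $q$ (paths of length one).} When $l=2$, only one edge weight $w(q,u)$ appears. It is bounded by $\min\{ub\_w_{out}(q),ub\_w_{in}(u)\}$, but not in general by their product, since the product can be smaller. This is exactly why Eq.~(\ref{eq:ub-ISF}) uses a separate (max) form when $e(u,v)\in E_s$. There is also the degenerate case $u=q$, where the empty product is $1$.

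I would handle this as follows, taking the first option that works:
\begin{enumerate}
\item Restrict the lemma to nodes $N$ that contain neither $q$ nor any out-neighbour of $q$. These users can be checked individually using the edge case of Lemma \ref{lemma:Influence-based-pruning}.
\item Otherwise, treat Eq.~(\ref{eq:ub1-ISF_indel-level}) as the node-level version of the paper's neighbourhood-based estimation from \cite{chen2015online}. Under that estimation, the proof is complete as soon as Eq.~(\ref{eq:ub-ISF}) is accepted for each user, with Step 2 carrying it to the node.
\end{enumerate}
I expect the write-up to state this caveat explicitly and then argue via Steps 1 and 2.
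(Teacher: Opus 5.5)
Your Steps 1 and 2 are correct. They follow the same decomposition as the paper: a per-user bound, then $ub\_w_{in}(u)\le ub\_w_{in}(N)$. The paper's entire justification is the remark that $ub\_ISF(q,N)<\theta$ ``confirms that for any $u\in N$ the $ub\_ISF(q,u)<\theta$'', after which Lemma~\ref{lemma:Influence-based-pruning} is applied to each user. The obstacle you isolate, however, cannot be argued around: the lemma is false as stated. The paper's reduction fails at exactly that point. For $e(q,u)\in E_s$, Eq.~(\ref{eq:ub-ISF}) gives $ub\_ISF(q,u)=\max\{ub\_w_{out}(q),ub\_w_{in}(u)\}$, and since all weights are at most $1$, this is never below $ub\_w_{out}(q)\cdot ub\_w_{in}(N)$.

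Here is a concrete counterexample. Take users $q,u,v$ with edges $q\to u$, $q\to v$, $u\to v$, all of weight $0.5$. Add a single POI $p$ containing a query keyword, visited once by each of the three users, so each user's check-in list is $\{p\}$. Set $k=3$, $d=1$, $\omega=\pi=1$, any $\sigma\ge 0$, and $\theta=0.4$. Then the triangle $\{q,u,v\}$ together with $\{p\}$ satisfies Definition~\ref{def:kcsbssn}: every edge lies in one triangle, $ISF(q,u)=ISF(q,v)=0.5$, all average distances are $0$, and $f_{sum}=f_{avg}=1$. Yet for an index node $N=\{u,v\}$ we get $ub\_w_{out}(q)=ub\_w_{in}(N)=0.5$, hence $ub\_ISF(q,N)=0.25<\theta$. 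The lemma therefore discards two members of the answer, even though the user-level Lemma~\ref{lemma:Influence-based-pruning} keeps both, since $ub\_ISF(q,u)=ub\_ISF(q,v)=0.5$.

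So your fallback (2) is not a proof. Accepting Eq.~(\ref{eq:ub-ISF}) per user yields only the max-form bound for out-neighbours of $q$, while Step 2 transports only the product form, and nothing bridges the two; by the example, nothing can. Drop (2) and make (1) the statement: if $N\cap(\{q\}\cup\mathcal{N}_{out}(q))=\emptyset$ and $ub\_ISF(q,N)<\theta$, then $N$ can be pruned. That is exactly what your Steps 1--2 prove.

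If you want a test valid for every node not containing $q$, keep only one factor in your path argument. For every $u\neq q$ it gives $ISF(q,u)\le\min\{ub\_w_{out}(q),ub\_w_{in}(u)\}$, so $\min\{ub\_w_{out}(q),ub\_w_{in}(N)\}<\theta$ is a safe condition. A sharper safe condition is $\max\{ub\_w_{out}(q)\cdot ub\_w_{in}(N),\ \max_{u\in N\cap\mathcal{N}_{out}(q)}w(q,u)\}<\theta$. A node containing $q$ must never be pruned, since $q\in V_s'$ is mandatory.

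State this as a corrected lemma rather than as a caveat attached to the original. Be aware that the paper's one-line argument has the same defect, so there is no idea in the paper that would rescue the original statement.
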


In Lemma \ref{lemma:Influence-based-pruning-index-level}, if $ub\_ISF(q,N)<\theta$ confirms that for any $u\in N$ the $ub\_ISF(q,u)<\theta$, then node $N$ can be safely pruned.

\subsubsection{Structural Cohesiveness Pruning for Index Nodes}
For any node $N$, we can safely remove $N$ if for all its child users $u$ hold $ub\_sup(u)< k-2$. Since $ub\_sup(N)$ is the upper bound edge support for all $u\in N$, as stated in Eq.~(\ref{eq:ub-sub-N}). Then, we formally provide the following lemma.

\begin{lemma} {\bf (Structural Cohesiveness Pruning for Index Nodes).} Given an index node $N$ and an integer $k$, for any $N\in \mathcal{I}$, we can safely prune $N$ if $ub\_sup(N)< k-2$. 
\label{lemma:structural-cohesiveness-pruning-index-level}
\end{lemma}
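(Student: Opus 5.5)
The plan is to reduce the statement to the user-level Structural Cohesiveness Pruning of Lemma~\ref{lemma:structural-cohesiveness-pruning}, applied to every user stored in the subtree of $N$. The argument uses only the aggregation rule of Eq.~(\ref{eq:ub-sub-N}) and an induction over the index tree.

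First, I will show that $ub\_sup(N) \geq ub\_sup(u)$ for every user $u$ in the subtree rooted at $N$. If $N$ is a leaf, its entries are users, and each stored $ub\_sup(u)$ is exactly the quantity of Eq.~(\ref{eq:ubsup}). If $N$ is a non-leaf node, Eq.~(\ref{eq:ub-sub-N}) takes the maximum over all users under $N$. When this maximum is actually computed over the children $N_i$ by taking $\max_i ub\_sup(N_i)$, a short induction on the height of the tree gives the same bound: maxima of maxima preserve the dominance.

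Second, assume $ub\_sup(N) < k-2$. Then every user $u$ under $N$ satisfies $ub\_sup(u) \leq ub\_sup(N) < k-2$, so Lemma~\ref{lemma:structural-cohesiveness-pruning} lets us prune each such $u$. Concretely, $ub\_sup(u)$ bounds $sup(e)$ for every edge $e$ incident to $u$ in $G_s$. Support can only shrink in a subgraph $g \subseteq G_s$, so no incident edge of $u$ lies in $\geq k-2$ triangles of any candidate $(k,d)\mbox{-}truss$. A truss containing $u$ must contain an edge incident to $u$, because it is connected and contains $q$ (or has at least one edge). Hence $u$ cannot belong to any $KCS\mbox{-}BSSN$ answer, as Definitions~\ref{def:kdtruss} and~\ref{def:kcsbssn} require $V_s'$ to form a $(k,d)\mbox{-}truss$.

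Finally, since no user under $N$ can appear in the result, discarding the whole node loses no answer, so $N$ is safely pruned. The only delicate point, and the main obstacle, is the degenerate case of a single-vertex "truss" $\{q\}$ with no edges. I would handle it by noting that the truss definition requires $k>2$ and implicitly a nonempty edge set. If $q \in N$, the query itself is infeasible anyway, because $q$ then has no incident edge with enough support.
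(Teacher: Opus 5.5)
Your proposal is correct and matches the paper's argument. Because $ub\_sup(N)=\max_{u\in N} ub\_sup(u)$ (Eq.~(\ref{eq:ub-sub-N})), every user under $N$ satisfies $ub\_sup(u)\leq ub\_sup(N)<k-2$, is prunable by Lemma~\ref{lemma:structural-cohesiveness-pruning}, and so $N$ can be discarded; your extra details (support can only shrink in a subgraph, a connected truss forces an edge incident to $u$, and the single-vertex case, which the paper leaves implicit and which you settle by assuming a truss has at least one edge) only make that argument more explicit.
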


In Lemma \ref{lemma:structural-cohesiveness-pruning-index-level}, the upper bound $ub\_sup(N)<k-2$ indicates that for any $u\in N$ the $ub\_sup(u)\leq ub\_sup(N)$. Then node $N$ can be safely filtered out.

\subsubsection{Social-distance-based Pruning for Index Nodes} 
\label{sec:social-distance-based-pruning-index-nodes}
In order to take advantage of the distance constraint between users in $C\subseteq G_s$ (as given in Definition \ref{def:kdtruss}), we can filter out a node $N\in \mathcal{I}$ if $N$ does not satisfy the distance constraint. Specifically, for a given query user $q$ if $dist_s(q,N)>d$, we can safely prune node $N$. Since for each $N\in \mathcal{I}$, we store the minimum and maximum social distance between $N$ and every $spv_i\in \mathbb{P}_s$. Then, we can easily compute the $lb\_dist_s(q,N)$ utilizing the triangle inequality. 

For a query user $q$ and any node $N \in \mathcal{I}$, if $lb\_dist_s(q,N)>d$, then we can safely prune $N$. Then, we formally provide the following lemma.

\begin{lemma} {\bf (Social-distance-based Pruning for Index Nodes).}   Given an index $\mathcal{I}$, a query user $q$, and a social distance threshold $d$. For any $N\in \mathcal{I}$, we can safely prune $N$ if $lb\_dist_s(q,N)>d$. 

\label{lemma:social-distance-based-pruning-index-level}
\end{lemma}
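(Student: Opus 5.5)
The plan is to reduce the node-level statement to the user-level Social-Distance-Based Pruning (Lemma \ref{lemma:social-distance-based-pruning}). The quantity $lb\_dist_s(q,N)$ is not yet defined, so we commit to the definition suggested by the stored per-node information (Eqs.~(\ref{eq:mindist-s-N-spv}) and (\ref{eq:maxdist-s-N-spv})):
\begin{eqnarray*}
lb\_dist_s(q,N)=\max_{\forall spv_i\in\mathbb{P}_s}\max\{0,\; mindist_s(N,spv_i)-dist_s(q,spv_i),\; dist_s(q,spv_i)-maxdist_s(N,spv_i)\}.
\end{eqnarray*}
Given this definition, the proof reduces to one inequality: $lb\_dist_s(q,N)\le lb\_dist_s(u,q)$ for every user $u$ stored in the subtree of $N$. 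Once that holds, $lb\_dist_s(q,N)>d$ gives $lb\_dist_s(u,q)>d$ for each such $u$. Lemma \ref{lemma:social-distance-based-pruning} then prunes each of them, and hence the whole node.

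To prove the key inequality, fix a user $u\in N$ and a pivot $spv_i$. By Eqs.~(\ref{eq:mindist-s-N-spv}) and (\ref{eq:maxdist-s-N-spv}),
\begin{eqnarray*}
mindist_s(N,spv_i)\le dist_s(u,spv_i)\le maxdist_s(N,spv_i).
\end{eqnarray*}
We split into three cases according to where $dist_s(q,spv_i)$ lies relative to the interval $[mindist_s(N,spv_i),maxdist_s(N,spv_i)]$.
\begin{itemize}
\item If $dist_s(q,spv_i)$ is below the interval, the per-pivot term equals $mindist_s(N,spv_i)-dist_s(q,spv_i)$. This is at most $dist_s(u,spv_i)-dist_s(q,spv_i)=|dist_s(u,spv_i)-dist_s(q,spv_i)|$.
\item If it is above the interval, the symmetric argument applies with $maxdist_s$.
\item If it lies inside the interval, the term is $0$, which is trivially a lower bound.
\end{itemize}
The triangle inequality then gives $|dist_s(u,spv_i)-dist_s(q,spv_i)|\le dist_s(u,q)$. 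Taking the maximum over pivots on both sides yields $lb\_dist_s(q,N)\le lb\_dist_s(u,q)\le dist_s(u,q)$, using Eq.~(\ref{eq:lb-dist}).

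For non-leaf nodes we also need the stored $mindist_s$ and $maxdist_s$ values to actually bound every descendant user. They do, because these values are min/max aggregates over all users in the subtree. A short induction on the tree height establishes this, noting that the min of mins and the max of maxes preserve the bounds.

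The main obstacle is the missing formal definition of $lb\_dist_s(q,N)$, which the text so far only describes as obtained ``utilizing the triangle inequality.'' The proof therefore has to fix the interval-based form above and check that it is consistent with the stored aggregates. A secondary subtlety is that $G_s$ is directed. We will take $dist_s$ as the hop distance ignoring edge direction, as Definition \ref{def:kdtruss} implicitly does, so that the triangle inequality holds symmetrically. The remaining steps are routine.
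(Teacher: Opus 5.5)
Your argument is correct, and it is the reduction the paper relies on. The paper's only justification, in the paragraph after the lemma, is that the pivots and the triangle inequality make the stored $mindist_s(N,spv_i)$ and $maxdist_s(N,spv_i)$ lower-bound $dist_s(q,u)$ for every user $u$ under $N$. Hence $lb\_dist_s(q,N)>d$ makes every such $u$ violate the hop constraint of Definition~\ref{def:kdtruss}.

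Where you differ is the bound itself, and the difference matters. The paper does define $lb\_dist_s(q,N)$, right after the lemma in Eq.~(\ref{eq:lb-dist-indexing}). Write $a_i=dist_s(q,spv_i)$, $m_i=mindist_s(N,spv_i)$ and $M_i=maxdist_s(N,spv_i)$. The paper's bound is $\min\{\max_i|a_i-m_i|,\ \max_i|a_i-M_i|\}$, and $0$ when $q\in N$. This never drops to $0$ when $a_i$ lies strictly inside $[m_i,M_i]$. So your proof does not cover the paper's formula, and no proof can, because that formula fails exactly in your third case.

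Here is a concrete failure:
\begin{itemize}
\item Take a path $x_0x_1\cdots x_{10}$ in $G_s$, pivot $spv_1=x_0$, a node $N$ with users $x_1,x_6,x_{10}$, and $q=x_5\notin N$.
\item Then $m_1=1$, $M_1=10$ and $a_1=5$, so the paper's value is $\min\{4,5\}=4>3=d$ (the default setting).
\item So $N$ is pruned, although $dist_s(q,x_6)=1$.
\item Attaching one vertex adjacent to both $x_5$ and $x_6$ creates a triangle without changing any distance to $x_0$. So $x_6$ can genuinely lie in a $(3,3)$-truss with $q$.
\end{itemize}
Adding pivots cannot decrease either maximum, so this false dismissal persists for every pivot set containing $x_0$. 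Upper-level index nodes, whose intervals $[m_i,M_i]$ are wide, will typically be in exactly this situation.

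Your per-pivot term $\max\{0,\,m_i-a_i,\,a_i-M_i\}$ is the distance from $a_i$ to the interval $[m_i,M_i]$. It is not just a convenient choice of definition; it is the repair the lemma needs, and with it your proof is complete. It also makes the paper's special case for $q\in N$ automatic, since then $a_i\in[m_i,M_i]$ for every pivot.

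Your side remarks are sound as well. $mindist_s$ and $maxdist_s$ are aggregates over all users in the subtree of $N$. The symmetric hop distance you adopt is already presupposed by the absolute values in Eq.~(\ref{eq:lb-dist}).
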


\noindent {\bf Discussions on How to Compute $lb\_dist_s(.)$}: As discussed in Section \ref{subsec:indexing_cons}, we calculate the minimum and maximum social distance between $N$ and all $spv_i\in \mathbb{P}_s$. Then, for a given $q$ and any $N\in \mathcal{I}$, the $dist_s(q,N)\geq$ $|dist_s(q,spv_i) - dist_s(N,spv_i)|$. We can compute the lower bound distance between $q$ and $N$ as follows:
\begin{eqnarray}
&&lb\_dist_s(q,N) \\
&=&\min 
\begin{cases}
&\max_{\forall spv_i \in \mathbb{P}_s} \{ |dist_s(q,spv_i) -mindist_s(N,spv_i)|\}\\
&\max_{\forall spv_i \in \mathbb{P}_s} \{ |dist_s(q,spv_i) - maxdist_s(N,spv_i)|\}\\
&\textit{0, \quad if $q\in N$}
\end{cases},\notag
\label{eq:lb-dist-indexing}
\end{eqnarray}
where $mindist_s(.)$ and $maxdist_s(.)$ are given in Eqs.~ (\ref{eq:mindist-s-N-spv}) and (\ref{eq:maxdist-s-N-spv}), respectively.

\subsection{The Index Construction}
\label{sec:index_construction} 
To optimize query performance, our index construction ensures that each leaf node encapsulates a subgraph that strongly reflects prevalent $KCS\mbox{-}BSSN$ community characteristics. Such localization increases the likelihood that a query result resides within a small number of leaf nodes, thereby reducing traversal overhead in the index tree $\mathcal{I}$. Algorithm~\ref{alg:pivot-index-refinement} refines the set of pivot index users $\mathbb{P}_{index}$ using the cost model described in Section~\ref{sec:index_pivot_selection}. We first compute the cost of an initial pivot set $\mathbb{P}_{index}$. During each iteration, a pivot user $piv_i \in \mathbb{P}_{index}$ is swapped with a candidate user $u \in G_s$ ($u \neq piv_i$), and the new cost is evaluated. After the refinement process, the pivot set with the minimum cost is selected. Algorithm~\ref{alg:partition-social-network} then partitions the social network $G_s$ using the optimized pivot set $\mathbb{P}_{index}$. Each user $v$ is assigned to the pivot $piv_i \in \mathbb{P}_{index}$ that maximizes a quality function defined in Section~\ref{sec:index_pivot_selection}. The resulting subgraphs form the leaf nodes of the index tree. These leaf nodes are recursively grouped into intermediate nodes, where a cost model determines the most suitable parent-child relationships. For each intermediate node, a subset $\mathbb{P'}_{index} \subset \mathbb{P}_{index}$ is selected following a similar pivot selection strategy.

More specifically, \textbf{Algorithm~\ref{alg:partition-social-network}: Partition\_Social\_Network} takes $G_s$ and $\mathbb{P}_{index}$ as input. For each user $u$, the algorithm evaluates all pivots and assigns $u$ to the pivot that yields the highest quality score (lines 1–8). This produces a set of disjoint subgraphs, which serve as the foundation for index construction. We first initialize the required local variables (line 2) and then evaluate each $piv_i \in \mathbb{P}_{index}$ to identify the pivot that provides the highest quality score for user $u$ (lines 3–7). In line 8, user $u$ is assigned to the corresponding subgraph. The algorithm ultimately outputs the resulting set of subgraphs.

In \textbf{Algorithm~\ref{alg:pivot-index-refinement}: Pivot\_Index\_Refinement}, the inputs include the social network $G_s$, the spatial network $G_r$, and a parameter $threshold\mbox{-}iter$ specifying the maximum number of refinement iterations. Initially, $\mathfrak{s}$ pivot users are randomly selected (line 1), and corresponding subgraphs ${S_1,\dots,S_\mathfrak{s}}$ are generated using Algorithm~\ref{alg:partition-social-network}. For up to $threshold\mbox{-}iter$ iterations, one pivot $piv_i$ is randomly replaced with a candidate $temp\mbox{-}piv$ (lines 4–5). The network is repartitioned using the updated pivot set (line 7), and the new cost is computed using Eq.~(\ref{eq:pindexcost}). If the new configuration achieves a lower cost, it replaces the current pivot set (lines 8–9). After convergence or completion of iterations, the final pivot set $\mathbb{P}_{index}$ and its corresponding subgraphs are returned.

\begin{algorithm}[t!]\small
\KwIn{a social network $G_s$, a spatial network $G_r$, $threshold\mbox{-}iter$}
\KwOut{ a set of pivot index $\mathbb{P}_{index}$ }
$\mathbb{P}_{index}$= Randomly select $\mathfrak{s}$ of $\mathbb{P}_{index}$\\
$Partition\_Social\_Network(G_s,\mathbb{P}_{index}$) \\
 \While {$iter < threshold\mbox{-}iter$}{
    Randomly select a new pivot user $temp\mbox{-}piv\in G_s$\\
    Randomly select a pivot user $piv_i\in \mathbb{P}_{index}$\\
    $temp\mbox{-}\mathbb{P}_{index}=\mathbb{P}_{index}-\{piv_i\}+\{temp\mbox{-}piv\}$\\
    $Partition\_Social\_Network(G_s,temp\mbox{-}\mathbb{P}_{index}$) \\
    \If{$temp\mbox{-}\mathbb{P}_{index}\_cost < \mathbb{P}_{index}\_cost$}{
         $\mathbb{P}_{index}= temp\mbox{-}\mathbb{P}_{index}$
    }
 }
 $Partition\_Social\_Network(G_s,\mathbb{P}_{index}$) \\
\Return \{$\mathbb{P}_{index}$\}\newline
\caption{Pivot\_Index\_Refinement}
\label{alg:pivot-index-refinement}\vspace{-1ex}
\end{algorithm}

\begin{algorithm}[t!]\small
\KwIn{a social network $G_s$, a set of pivot index $\mathbb{P}_{index}$}
\KwOut{a set of subgraphs $S_1,\dots, S_\mathfrak{s}$ }
 \For{\textbf{each} user $u \in G_s$}{
 $i$=1; $best\_quality=0$; $j=i$\\
 \While {$i \leq \mathfrak{s}$}{
    \If{$quality(u,piv_i) > best\_quality$}{
        $best\_quality=quality(u,piv_i)$\\
        $j=i$\\
    }
    $i=i+1$;\\
  }
  $S_j=S_j+\{u\}$\\
  }
\Return $\{S_1,\dots, S_\mathfrak{s}\}$\newline
\caption{Partition\_Social\_Network}
\label{alg:partition-social-network}\vspace{-1ex}
\end{algorithm}

\section{\texorpdfstring{$KCS\mbox{-}BSSN$}{KCS\mbox{-}BSSN} Query Answering}
\label{sec:query_answering}
In this section, we present\textbf{ Algorithm \ref{alg:kcs-bssn-query-answer}, the $KCS\mbox{-}BSSN$ Query Answering Algorithm}, designed for the efficient retrieval of community results. The algorithm adopts a "filter-and-refine" framework, partitioned into a Pruning Phase and a Refinement Phase. It takes as input the social network $G_s$, spatial network $G_r$, weighted bipartite network $G_b$, keyword query set $Q$, structural thresholds $k$ and $d$, keyword/influence thresholds $\omega$, $\pi$, and $\theta$, a spatial distance threshold $\sigma$, and the query user $q$.

\noindent \textbf{Pruning Phase:} The algorithm begins by initializing $POI_q$, the set of $q$'s checked-in locations that contain at least one keyword from $Q$ (line 1). To facilitate an efficient search, two priority queues are initialized: a max-heap $\mathcal{H}$ for index tree traversal and a min-heap $\mathcal{H}_{cand}$ for candidate management. $\mathcal{H}$ stores entries $(N, \textit{heap-key})$, where $N$ is an index node and $\textit{heap-key}$ represents the upper bound edge support, $ub\_sup(N)$ (line 2). $\mathcal{H}_{cand}$ stores potential candidate users $u \in G_s$ prioritized by the lower bound average road distance, $lb\_avg\_dist_r(u,p)$, (line 3). The traversal begins by pushing the root of the index tree $\mathcal{I}$ into $\mathcal{H}$ (line 4). While $\mathcal{H}$ is not empty, the top node $N$ is extracted (line 6). If its $\textit{heap-key}$ falls below $k-2$, the search terminates early as no remaining nodes can satisfy the structural cohesiveness requirements (line 7). If $N$ is a leaf node, we iterate through each user $u \in N$ and apply a suite of user-level pruning filters, including keyword, $\omega$, $\pi$, influence, structural, social-distance, and spatial-distance pruning. Users who survive these filters are inserted into $\mathcal{H}_{cand}$ (lines 9-12). If $N$ is a non-leaf node, we apply index-level pruning to its children. Any child node $N_i$ that cannot be pruned is inserted into $\mathcal{H}$ with its corresponding upper bound support (lines 15-17).

\noindent \textbf{Refinement Phase:} The refinement phase (line 18) processes the candidate set stored in $\mathcal{H}_{cand}$. We first perform a threshold check: any candidate with a $lb\_avg\_dist_r$ exceeding $\sigma$ is discarded, along with all remaining entries in the min-heap. For the remaining candidates, we verify the connected subgraphs against the full suite of $KCS\mbox{-}BSSN$ requirements as defined in Definition \ref{def:kcsbssn}. Finally, the algorithm returns the exact communities that satisfy all constraints.

\begin{algorithm}[t!]
\small
\KwIn{a social network $G_s$, a spatial network $G_r$, a weighted bipartite network  $G_b$, a keyword query set $Q$, query thresholds $k$,  $d$, $\omega$, and $\pi$, a spatial distance threshold $\sigma$, an influence score threshold $\theta$, and a query user $q$.}
\KwOut{ a community C, satisfying  $KCS\mbox{-}BSSN$ (as given in Definition \ref{def:kcsbssn})}

set $POI_q=\{q$'s checked-in locations $|p.K \cap Q \neq \emptyset \}$\\
initialize a max-heap $\mathcal{H}$ accepting entries in the form ($N$, $heap\mbox{-}key$)\\
initialize a min-heap $\mathcal{H}_{cand}$ accepting entries in the form ($u$, $heap\mbox{-}key_{cand}$)\\
insert entry ($root(\mathcal{I}),0$) into heap $\mathcal{H}$\\
\While{$\mathcal{H}$ is not empty}{
   ($N$, \textit{heap\mbox{-}key}) = de-heap $\mathcal{H}$\\
   \If{\textit{heap\mbox{-}key }$< k-2$,}{ break and terminate.}   
    \If{$N$ is a leaf node}{
         \For{\textbf{each} user $u \in N$}{
            \If{$u$ cannot be pruned by Lemma 
             \ref{lemma:keyword-based-pruning-u},
             \ref{lemma:omega-based-pruning}, 
             \ref{lemma:pi-based-pruning}
             \ref{lemma:Influence-based-pruning}, 
             \ref{lemma:structural-cohesiveness-pruning}, \ref{lemma:social-distance-based-pruning}, and             \ref{lemma:spatial-distance-based-pruning},
w.r.t $q$ }{  
                insert \big($u,min_{\forall p \in POI_q}(lb\_avg\_dist_r(u,p))$ \big) into the heap $\mathcal{H}_{cand}$\\
            }
        }
        }
    \Else {
        // \textit{$N$ is a non-leaf node}\\
        \For{\textbf{each} entry $N_i \in N$}{
             \If{$N_i$ cannot be pruned by Lemma 
             \ref{lemma:keyword-based-pruning-index-level}, 
             \ref{lemma:omega-based-pruning-index-level}, 
             \ref{lemma:pi-based-pruning-index-level}, \ref{lemma:Influence-based-pruning-index-level}, \ref{lemma:structural-cohesiveness-pruning-index-level}, and 
            \ref{lemma:social-distance-based-pruning-index-level} }  {
                insert ($N_i,ub\_sup(N_i)$ ) into the heap $\mathcal{H}$
            }
        }
        }
   }
   
$C$= Refinement ($\mathcal{H}_{cand}$) \textit{//refinement phase}\\
\Return $C$\newline
\caption{ $KCS\mbox{-}BSSN$\_Query\_Answer}
\label{alg:kcs-bssn-query-answer}\vspace{-1ex}
\end{algorithm}

\subsection{The Index Pivots Selection for Leaf Nodes}
\label{sec:index_pivot_selection}

To partition the social network into manageable subgraphs, we select $\mathfrak{s}$ specific users to serve as index pivots ($\mathbb{P}_{index}$). We utilize a cost model to identify the most suitable pivots by evaluating the social network across three dimensions: \textit{Bipartite Structure}, \textit{Social Structure}, and \textit{Spatial Structure}. Each dimension is represented by a specific scoring function, which is integrated into the total cost calculation.\\
\noindent\textbf{Total Cost.}The cost associated with selecting index pivots $\mathbb{P}_{index}$ for all subgraphs $S \in G_s$ is defined as follows:

\begin{eqnarray}\label{eq:pindexcost}
&&\mathbb{P}_{index}\_cost \\
&=&W_{bs} \cdot (1- \sum_{\forall S\in G_s}\sum_{\forall u\in S}\sum_{\forall v\in S}  bs\_score(u,v))\notag\\
&&+W_{rs} \cdot \sum_{\forall S\in G_s}\sum_{\forall u\in S}\sum_{\forall v\in S}  rs\_score(u,v) \notag\\
&&+W_{ss} \cdot(1- \sum_{\forall S\in G_s}\sum_{\forall u\in S}\sum_{\forall v\in S}  ss\_score(u,v))\notag
\end{eqnarray}
where $bs\_score$, $rs\_score$, and $ss\_score$ represent the bipartite, road-network (spatial), and social structure scores, respectively (detailed in the Appendix \ref{appendix:sec:index_pivot_selection}). The coefficients $W_{bs}, W_{ss},$ and $W_{rs}$ are user-defined weights.

\noindent \textbf{The Quality of Selecting Index Pivots $\mathbb{P}_{index}$.}
To assign a user $u$ to a specific pivot $piv_i \in \mathbb{P}_{index}$, we evaluate their assignment quality based on the three structural scores:
\begin{eqnarray}\label{eq:total-quality}
&&\hspace{-7ex} quality(u,piv_i)\\
&=& W_{bs} \cdot bs\_score(u,piv_i) +W_{ss} \cdot ss\_score(u,piv_i)\notag\\
&&+ (1- (W_{rs} \cdot rs\_score(u,piv_i))\notag
\end{eqnarray}

This quality metric ensures that each user $u$ is mapped to the ideal pivot $piv_i$ that maximizes the structural and spatial coherence of the resulting leaf node.

\subsection{The Index Pivots Selection for Non-Leaf Nodes}
\label{sec:tree_construction}
Constructing the intermediate levels of the index tree requires a distinct set of pivots, $\mathbb{P'}_{index}$. For non-leaf nodes, the cost model prioritizes the \textit{Bipartite} and \textit{Social} structures to maintain hierarchical cohesiveness.\\
\noindent\textbf{Tree Cost.} The index tree cost is calculated by evaluating the structural affinity of intermediate nodes as follows:

\begin{eqnarray}
&&Tree\_cost\\
&\hspace{-2ex}=&W_{bs} \cdot(1-  \sum_{\forall N\in Nodes}\sum_{\forall piv'_i\in \mathbb{P'}_{index}} bs\_score\_node(N,piv'_i))  \notag\\
&&\hspace{-2ex} + W_{ss} \cdot(1- \sum_{\forall N\in Nodes}\sum_{\forall piv'_i\in \mathbb{P'}_{index}} ss\_score\_node(N,piv'_i))\notag
\label{eq:pindexcost-node}
\end{eqnarray}
The node-level scores $bs\_score\_node$ and $ss\_score\_node$ are defined in the  Appendix \ref{appendix:sec:tree_construction}.

\noindent\textbf{The Quality of Assigning Nodes to Parent Nodes}. To construct the hierarchy, each child node $N_i$ is assigned to a parent node $N$ by evaluating the quality of the node relative to the non-leaf pivots:
\begin{eqnarray}
&&\hspace{-4ex} quality\_node(N,piv'_i)\\
&\hspace{-2ex}=& \hspace{-2ex} W_{bs} \cdot bs\_score\_node(N,piv'_i) +  W_{ss} \cdot ss\_score\_node(N,piv'_i)\notag
\label{eq:total-quality-node}
\end{eqnarray}
The index tree $\mathcal{I}$ is built bottom-up; for each $piv'_i \in \mathbb{P'}_{index}$, a new node $N$ is generated, and child nodes $N_i$ are assigned to the parent node that yields the highest quality. This process iterates until the root node is established, resulting in a balanced, structurally-aware index.

\section{Keyword-Based Community Search Over Temporal Bipartite Spatial-Social Network \texorpdfstring{$KCS\text{-}TBSSN$}{KCS-TBSSN}} 
\label{sec:temporal_BSSN}

Real-world (bipartite) graphs usually evolve dynamically over time, such as with edge weight updates. In this work, we also extend the original static $BSSN$ graph to the data model of \textit{temporal bipartite spatial-social network} ($TBSSN$), by considering edge weights $f_{u,p}$ dynamically change over a temporal dimension. Specifically, in the original model (Definition \ref{def:Gb}), edge weight $f_{u,p}$ represents a static/fixed frequency of historical visits. In our temporal $TBSSN$ model, we re-define the relationships between users and POIs to account for the time-varying nature of user-POI interactions.

\vspace{2ex}
\subsection{Dynamic Edge Weight Updates Under the Sliding Window Model}

In the $TBSSN$ graph, each user $u \in V_s$ is associated with a 2D temporal visit vector, $\mathcal{T}_{u,p}$, where each element $(p,t')\in \mathcal{T}_{u,p}$ represents a visit from user $u$ to POI $p$ at a specific timestamp $t'$. We consider a \textit{sliding window} of size $\uptau$ for obtaining the edge weight $f_{u,p}$. In particular, given the current timestamp $t$, the frequency $f_{u,p}$ is calculated as the count of all visits occurring for the most recent $\uptau$ timestamps, that is, $$f_{u,p}(t) = |\{(p, t') \in \mathcal{T}_{u,p} \mid t-\uptau+1 \leq t'\leq t  \}|.$$

At a new timestamp $(t+1)$, new visits $(p, t+1)\in \mathcal{T}_{u,p}$ are added to $f_{u,p}(t)$, and expired visits $(p, t-\uptau+1)\in \mathcal{T}_{u,p}$ will be removed from $f_{u,p}(t)$, which result in an updated edge weight $f_{u,p}(t+1)$ for the new sliding window between timestamps $(t-\uptau+2)$ and $(t+1)$.




\subsection{Updates of Temporal \texorpdfstring{$TBSSN$}{TBSSN} Graph}
\label{sec:temporal_pruning}

To optimize query processing, we introduce a \textit{temporal pruning} strategy. This mechanism identifies and discards "stale" interactions before the core computation begins. As established in Definition \ref{def:awcore}, if $f_{u,p}$ is updated and equal to zero, the edge will be removed from the graph. This approach offers two primary advantages:

\begin{enumerate}
\item \textbf{Dynamic Validity:} By evicting invalid/expired visits at the start of the query, the resulting edge weights $f_{u,p}$ always represent the latest state of the network relative to the users' temporal constraints.
\item \textbf{Space Reduction:} By removing the frequency of visits before timestamp $(t - \uptau + 1)$ ($t$ is the current timestamp), the edge weights may drop to zero, and thus the edges no longer exist. This can reduce the space cost of the graph storage. For example, we would not need the visiting records ten years ago, as it may not reflect the current popularity of POIs by users.

\end{enumerate}

\subsection{Dynamic Updates of Pre-Computed Data and Index for \texorpdfstring{$KCS\mbox{-}TBSSN$}{KCS-TBSSN}}
\label{sec:dynamic_updates}

This subsection details the mechanisms for dynamically updating temporal $TBSSN$ graph with low computational overhead. To manage updates on user-POI interactions, we address two main scenarios: including the frequency of new visits in a user’s activity vector and removing the frequency of the expired visits that outside the sliding window (i.e., before timestamp $(t-\uptau+1)$). To maintain high throughput, we employ a batch updating strategy. That is, rather than processing each check-in individually, we aggregate multiple updates in a batch to synchronize/update the $TBSSN$ graph data. These updates can propagate changes to two critical components:  the validity of community answers and the structural integrity of the indexing tree $\mathcal{I}$ (provided in Section \ref{sec:indexing-mechnism}).

\subsubsection{Incremental Maintenance of Communities}
\label{sec:communities_maintenance}
For a user $u \in V_s$ and a set of active communities, we verify the validity of each community $C$ as follows:\\
\noindent \textbf{Insertion.} For a new visit from user $u$ to POI $p$:

\begin{itemize}[noitemsep,topsep=2pt,leftmargin=15pt]
    \item \textbf{Case 1 ($u \in C$):} If $u$ is already a member of community $C$, the update is recorded, and we simply increment $u$'s visit frequency. Structural cohesiveness remains unchanged.
    
    \item \textbf{Case 2 ($u \notin C$):} If $u \notin C$, we evaluate the potential inclusion of $u$ into the community $C \cup \{u\}$. We apply the multi-stage pruning (i.e., the refinement phase of Algorithm \ref{alg:kcs-bssn-query-answer}). If $C \cup \{u\}$ satisfies all constraints, $u$ is integrated; otherwise, the update is discarded for that specific $C$.

\end{itemize}\noindent \textbf{Deletion (Expiration).} When an old visit by user $u$ at timestamp $t'$ expires (i.e., $t' < (t - \uptau + 1)$, for current timestamp $t$):
\begin{itemize}[noitemsep,topsep=2pt,leftmargin=15pt]
\item \textbf{Case 1 ($u \in C$):} If $u \in C$, the reduction in frequency may violate the $(\omega, \pi)\mbox{-}keyword\mbox{-}core$ constraints. We trigger the refinement process for $C$. If $C$ fails any constraint (e.g., $f_{sum}(.) < \omega$), it is removed from the community answer set of our $KCS\mbox{-}BSSN$ problem.

\item \textbf{Case 2 ($u \notin C$):} The expiration has no impact on the community's validity and is ignored.
\end{itemize}

\subsubsection{Indexing Level Maintenance}
\label{sec:indexing_maintenance}
The index tree $\mathcal{I}$ is built on the relationships between users and POIs. The batch updates modify frequencies and associations, the cost model metrics—specifically the bipartite score ($bs\_score$), road-network score ($rs\_score$), and social score ($ss\_score$)—may change over time, where $bs\_score$, $rs\_score$, and $ss\_score$ are given in Eqs.~(\ref{eq:bsscores}), (\ref{eq:rsscore}) and (\ref{eq:ssscore}), respectively. 

To support efficient for query processing over temporal $TBSSN$ graph (with dynamic updates), we perform the maintenance of index tree $\mathcal{I}$ as follows:

\noindent \textbf{Boundary Constraint Validation.} Following each batch update, we perform a validation of the index structure for both leaf nodes and non-leaf nodes, as detailed in Section \ref{subsec:indexing_cons}. This process ensures the structural integrity of the hierarchy by verifying that all upper and lower bounds constraints are strictly maintained for every user and index nodes.

\noindent \textbf{Incremental Pivot Re-evaluation.} We monitor the cumulative change in $quality(u, piv_i)$ (given by Eq.~ (\ref{eq:total-quality})). For each user $u$ in a leaf node, if a visit pattern changes such that another pivot $piv_j \in \mathbb{P}_{index}$ offers a significantly higher quality score, $u$ is migrated to the subgraph of $piv_j$. A stability margin is used to avoid unnecessary updates.

\noindent \textbf{Structural Synchronization.} Leaf node updates propagate upward. For any node $N$, if the difference in $quality\_node(N, piv'_i)$ (Eq.~(\ref{eq:total-quality-node})) exceeds a specific margin relative to another $piv'_j$, we remap $N$ to the most qualified upper-level node. This ensures pruning properties remain robust against temporal drift.

\begin{algorithm}[h!]
\small
\KwIn{Batch updates $\Delta U$, $OP$ (insertions/deletions), threshold $\uptau$, offline-data, a set of active communities $\mathbb{C}$, an Index tree $\mathcal{I}$, and a stability margin $\delta$}
\KwOut{Updated $(\text{offline-data}, \mathbb{C}', \mathcal{I}')$}

\For{\textbf{each} update $(u, p, t') \in \Delta U$}{
\If{$OP = Insertion$}{
\For{\textbf{each} checkin\mbox{-}location p}
{$f_{u,p} \leftarrow f_{u,p} + |\{(p,t') \in \mathcal{T}_{u,p}|$}}
\ElseIf{$OP = Deletion$}{
\For{\textbf{each} checkin\mbox{-}location p}{
{$f_{u,p} \leftarrow  |\{(p,t') \in \mathcal{T}_{u,p} \mid t' \geq (t - \uptau + 1)\}|$}
}}
{Recompute user $u$ offline-data }\\
}
 $\mathbb{C}'= Communities\_Maintenance( \mathbb{C}$, $\Delta U$, $OP$) \\
 $\mathcal{I}'=  Indexing\_Tree\_Maintenance( \mathcal{I}$, $\Delta U$, offline-data, $\delta$ ) \\

\Return  (offline-data, $\mathbb{C}', \mathcal{I}'$)\\
\caption{Dynamic\_Maintenance}
\label{alg:dynamic-maintenance}
\end{algorithm}

\begin{algorithm}[h!]
\small
\KwIn{A set of active communities $\mathbb{C}$, batch updates $\Delta U$, $OP$ (insertion/deletion)}
\KwOut{Updated set of valid communities $\mathbb{C}'$}

\For{\textbf{each} $(u, p, t') \in \Delta U$}{
\For{\textbf{each} $C \in \mathbb{C}$}{
\If{$OP = Insertion$}{
\If{$u \notin C$}{
\If{$Refinement(C \cup \{u\})$ satisfies constraints}
{$C \leftarrow C \cup \{u\}$}}}
\ElseIf{$OP = Deletion$}{
\If{$u \in C$}{$C \leftarrow Refinement(C)$ \tcp*{Trigger pruning}
\If{$C = \emptyset$}{Remove $C$ from $\mathbb{C}$;}}}}}
\Return $\mathbb{C}'$\\
\caption{Communities\_Maintenance}
\label{alg:communities-maintenance}
\end{algorithm}

\begin{algorithm}[h!]
\small
\KwIn{Index tree $\mathcal{I}$, batch updates $\Delta U$, updated offline-data, stability margin $\delta$}
\KwOut{Synchronized Index tree $\mathcal{I}'$}

\For{\textbf{each} leaf-node $L \in \mathcal{I}$}{
\For{\textbf{each} user $u \in (L \cap \Delta U)$}{
{$piv_{curr} \leftarrow u.pivot$}\\
{$piv_{best} \leftarrow \text{argmax}_{piv_j \in \mathbb{P}_{index}} quality(u, piv_j)$}\\
\If{$quality(u, piv_{best}) > quality(u, piv_{curr}) + \delta$}{Migrate $u$ to leaf-node of $piv_{best}$}}
{Update boundary constraints (Upper/Lower bounds)}
}
\For{\textbf{each} non-leaf node $N \in \mathcal{I}$ (bottom-up)}{
{Update boundary constraints (Upper/Lower bounds)}\\
{$piv'_{curr} \leftarrow N.parent\_pivot$}\\
{$piv'_{best} \leftarrow \text{argmax}_{piv'_j} quality\_node(N, piv'_j)$}\\
\If{$quality\_node(N, piv'_{best}) > quality\_node(N, piv'_{curr}) + \delta$}{Remap $N$ to $piv'_{best}$ and propagate updates upward;}}
\Return $\mathcal{I}'$\\
\caption{Indexing\_Tree\_Maintenance}
\label{alg:indexing-maintenance}
\end{algorithm}


\vspace{3ex}
\subsection{Incremental Community Maintenance Upon Dynamic Updates}

In this subsection, we detail the algorithmic framework for managing temporal network changes. The process is governed by three primary procedures: \textbf{Algorithm~\ref{alg:dynamic-maintenance}} for high-level data synchronization, \textbf{Algorithm~\ref{alg:communities-maintenance}} for the community validity, and \textbf{Algorithm~\ref{alg:indexing-maintenance}} for structural index integrity. In \textbf{Algorithm~\ref{alg:dynamic-maintenance}: Dynamic\_Maintenance}, the procedure coordinates the high-level synchronization of the temporal network state. The process begins by iterating through each update $(u, p, t')$ in the batch $\Delta U$ (line 1). Depending on the operation type, it either increments the visit frequency $f_{u,p}$ for new insertions (lines 2–4) or recalculates it based on the temporal threshold $\uptau$ for deletion/expiration (lines 5–7). After updating these frequencies, the user’s offline-data are re-computed to reflect current visit patterns (line 8). The algorithm then triggers sub-routines for community maintenance (line 9) and indexing tree maintenance (line 10) before returning the synchronized dataset (line 11).

Furthermore, in \textbf{Algorithm~\ref{alg:communities-maintenance}: Communities\_Maintenance}, the procedure ensures that active communities $\mathbb{C}$ remain valid according to all constraints. For each update in the batch (line 1), it iterates through existing communities (line 2) and checks the operation type. For an Insertion (line 3), if a user $u$ is not already a member, it evaluates the community's validity including $u$ through the $Refinement$ process (lines 5-6). For a Deletion (line 7), if the user $u$ is a member, the $Refinement$ process is triggered to reevaluate the community based on the reduced frequency (line 9). If the community becomes empty during this process, it is removed from the active set (line 11). Finally, the set of survived communities $\mathbb{C}'$ is returned (line 12).

Finally, in \textbf{Algorithm~\ref{alg:indexing-maintenance}: Indexing\_Tree\_Maintenance}, the hierarchical index $I$ is updated to handle temporal drift and maintain search efficiency. The algorithm first performs Leaf-Level Migration: for every leaf node $L$ and affected user $u$ (lines 1–2), it compares the $quality$ of the current pivot versus all alternative pivots in $\mathbb{P}_{index}$ (lines 3-4). In lines 5-6, if an alternative pivot offers an improvement greater than the current leaf node with the stability margin $\delta$, the user is migrated. Boundary constraints for the leaf node are then updated (line 7). Finally, the algorithm performs a Bottom-Up synchronization (line 8). For each non-leaf node $N$, it updates boundary constraints and evaluates the $quality\_node$ score (line 9). If a more suitable parent pivot is found (exceeding $\delta$), the node is remapped, and updates are propagated upward to ensure the pruning properties of the index remain robust against temporal changes (lines 10–13). Then, in line 14, we return a synchronized Index tree $\mathcal{I}'$.

\section{Experimental Evaluation}
\label{sec:exper}

\subsection{Experimental Settings}

We evaluate the efficiency of our proposed algorithm using both real-world and synthetic datasets.

\noindent\textbf{Real-World Datasets.}
We use three widely adopted social networks:  Epinions \cite{Epinions} ($Epin$), Twitter \cite{NIPS2012_7a614fd0} ($Twit$), and DBLP \cite{dblp_dataset} ($DBLP$). Each edge $e(u,v)$ is assigned a weight in $(0,1]$ following a $Gaussian$ distribution to model the influence of $u$ on $v$. To construct spatial-social networks, users are mapped to 2D coordinates on the California road network \cite{cal_road_netword} using a $Uniform$ distribution. Each user is associated with $[1,10]$ check-in locations, with visit frequencies in $[1,10]$, both uniformly distributed. Dataset statistics are summarized in Table~\ref{tbl:real-datasets}.

\begin{table}
 \centering
 \caption{\small Statistics of Real-World Graph Datasets}\small \scriptsize\vspace{-2ex}
\label{tbl:real-datasets}   
 \begin{tabular}{l|l|l}\hline
 {\bf Name} &  {\bf Nodes} &  {\bf Edges}\\
 \hline \hline
  Epinions social network \cite{Epinions}&75879 &508837\\
  Twitter \cite{NIPS2012_7a614fd0}& 81306 & 1768149\\
  DBLP collaboration network \cite{dblp_dataset} &317080 & 1049866 \\
 \hline
\end{tabular}
\end{table}

\begin{table}
 \centering\vspace{-2ex}
 \caption{\small Parameter Settings}\small \scriptsize\vspace{-2ex}
\label{tbl:parameter}   
 \begin{tabular}{l|l}\hline
 {\bf Parameter} & \qquad\qquad\qquad {\bf Values}\\
 \hline \hline
    the number of users $|V_s|$ in $G_s$ & 10K, 20K, \textbf{30K}, 40K, 50K,\\ 
    & 100K, 200K\\
    the number of triangles $k$ in $G_s$ & 2, \textbf{3}, 4, 5, 6\\
    the social distance threshold $d$ & 1, 2, \textbf{3}, 4, 5\\
    the size of query keyword set |Q|& 3, 5, \textbf{7}, 9, 11\\    
    the users visiting frequency threshold $\omega$ & 0.2, \textbf{0.4}, 0.6, 0.7, 0.9 \\
    the POIs visited frequency threshold $\pi$ & 0.2, \textbf{0.4}, 0.6, 0.7, 0.9  \\
    the influence score threshold $\theta$ & 0.2, \textbf{0.4}, 0.6, 0.7, 0.9  \\
    the spatial distance threshold $\sigma$ & 1, 2, 3, 4, \textbf{5}, 6\\
 \hline
\end{tabular}\vspace{2ex}
\end{table}

\noindent\textbf{Synthetic Datasets.}
We generate an artificial social network $G_s$ with varying numbers of users, where each user connects to $[8,40]$ other users. For each edge $e(u,v)$, we assign a weight in $(0,1]$ following a $Gaussian$ distribution to represent social influence. A road network $G_r$ is constructed by generating $20K$ intersection points in a 2D space and connecting them using the Gabriel Graph Algorithm \cite{Gabriel-Graph-Algorithm}. POIs are placed uniformly along road edges and assigned $[1,8]$ keywords selected from a dictionary of up to 50 terms. We utilize three distributions—$Uniform$, $Gaussian$, and $Skew$ (Zipf skewness = 0.8)—to generate three synthetic datasets, denoted as $Unif$, $Gaus$, and $Skew$, respectively. The social and road networks are integrated into a bipartite spatial-social network, where each user is assigned $[1,10]$ check-ins with visit frequencies in $[1,10]$, both following a $Uniform$ distribution.

\noindent\textbf{Temporal Dynamic Updates Evaluation.} We evaluate the efficiency of our maintenance mechanisms by measuring the computational overhead across batch sizes $|\Delta U| \in \{10, 15, 25, 50, 100, 200\}$ on both the real-world $Twit$ and the synthetic  $Unif$ $TBSSN$ graphs. The users in $|\Delta U|$ were selected randomly following a $Uniform$ distribution. The evaluation is categorized into four key areas: first, $Data_B$ assesses the time required to synchronize check-in frequencies and recompute offline spatial-social metrics. To validate active community answers, we compare $Comm_1$, which represents the average CPU time to update community answers by testing affected users individually, against $Comm_B$, the average time for processing the entire update batch simultaneously to minimize $Refinement$ triggers and redundant checks. Finally, $Tree_B$ evaluates the cost of updating hierarchical boundary constraints and performing user and node migrations within the index $\mathcal{I}$. Each experiment is conducted for both Insertion and Deletion operations to simulate the sliding window model for affected users in the batch.

\noindent\textbf{Evaluation Methodology.}
All datasets are indexed using our proposed indexing tree, and experiments are conducted on the constructed bipartite spatial-social networks. To the best of our knowledge, this is the first experimental study of the $KCS\mbox{-}BSSN$ query. We evaluate efficiency by comparing the runtime of our approach with a baseline method. In the baseline, five subgraphs within social distance $d$ from the query user $q$ are sampled, and the total CPU cost is estimated by multiplying their average runtime by the total number of possible subgraphs within distance $d$ in $G_s$. In each experiment, we vary one parameter while fixing the others to their default values (highlighted in bold in Table~\ref{tbl:parameter}). The thresholds $\omega$ and $\pi$ are normalized to $(0,1]$, where 0.1 and 1 denote the minimum and maximum possible values, respectively. All experiments were conducted on a machine with an Intel Core i7 2.8GHz CPU and 16GB RAM.

\subsection{\texorpdfstring{$KCS\mbox{-}BSSN$}{KCS-BSSN} Performance Evaluation}

\begin{figure}[t]
    \centering
     \begin{subfigure}[t]{0.48\linewidth}
        \centering
        \includegraphics[width=\linewidth]{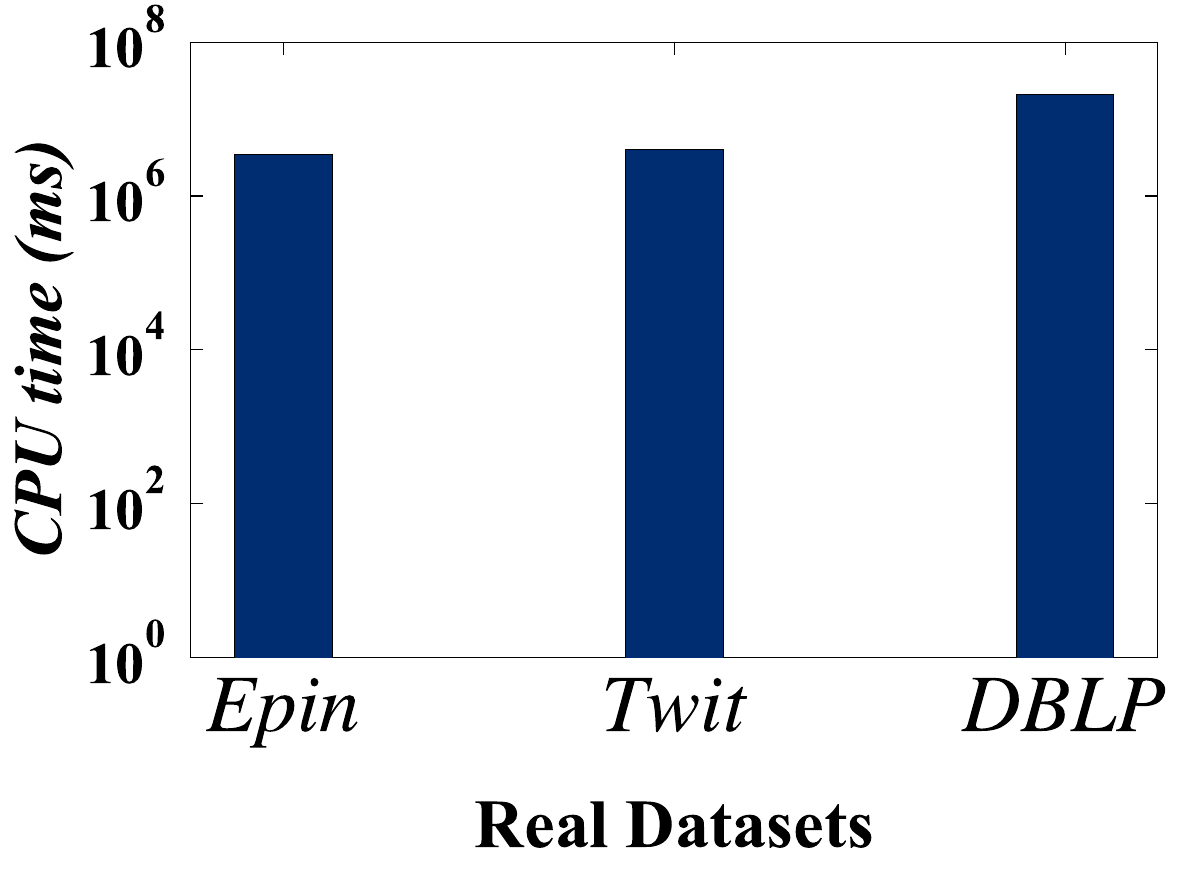}
        \caption{real-world graphs}
        \Description{real-world graphs}
        \label{fig:cpu_real_offline}
    \end{subfigure}
    \hfill
    \begin{subfigure}[t]{0.48\linewidth}
        \centering
        \includegraphics[width=\linewidth]{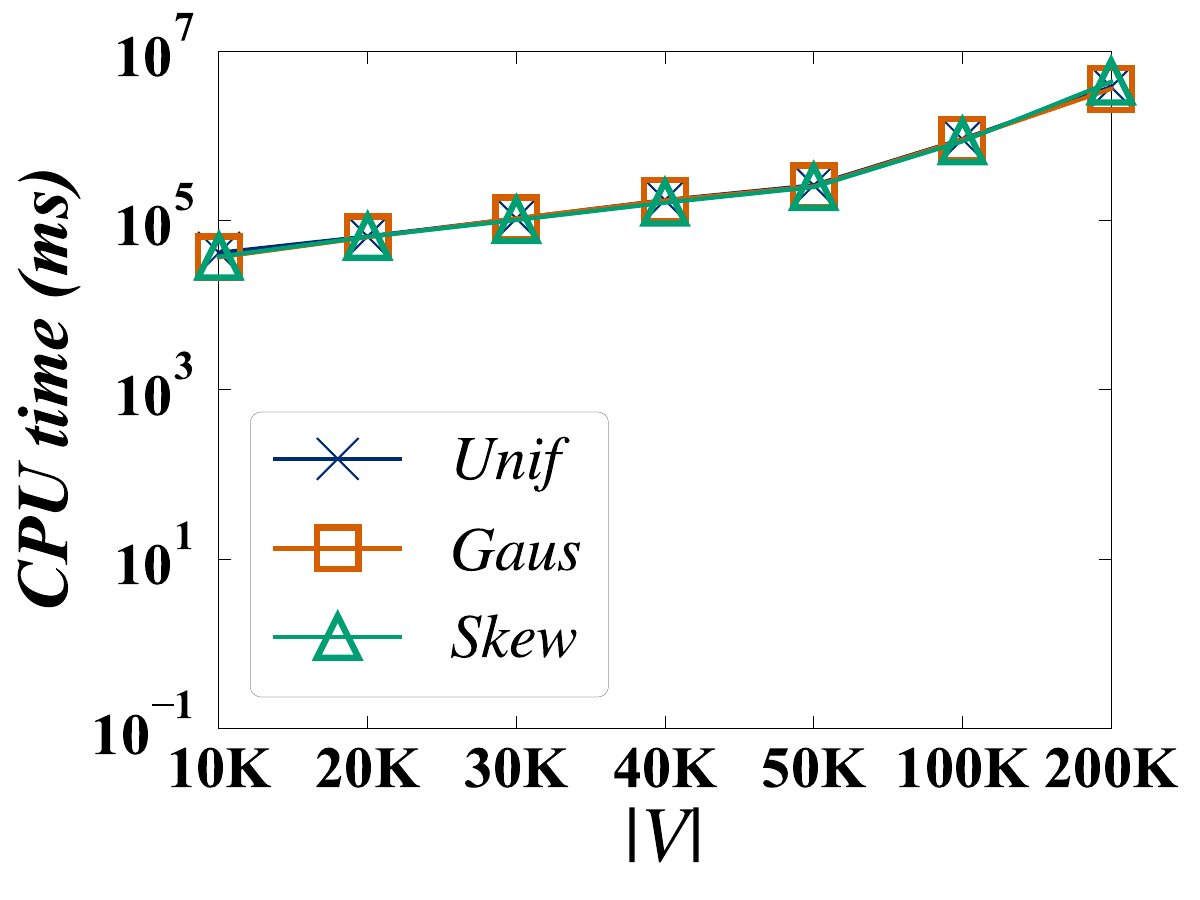}
        \caption{synthetic graphs}
        \Description{synthetic graphs}
        \label{fig:cpu_syn_offline}
    \end{subfigure}
    
    \caption{The $KCS\mbox{-}BSSN$ offline time vs. real/synthetic graph datasets.}
    \Description{The $KCS\mbox{-}BSSN$ offline time vs. real/synthetic graph datasets.}
    \label{fig:cpu_results_datasets_offline}
\end{figure}

\begin{figure}[t]
    \centering
    \begin{subfigure}[t]{0.48\linewidth}
        \centering
        \includegraphics[width=\linewidth]{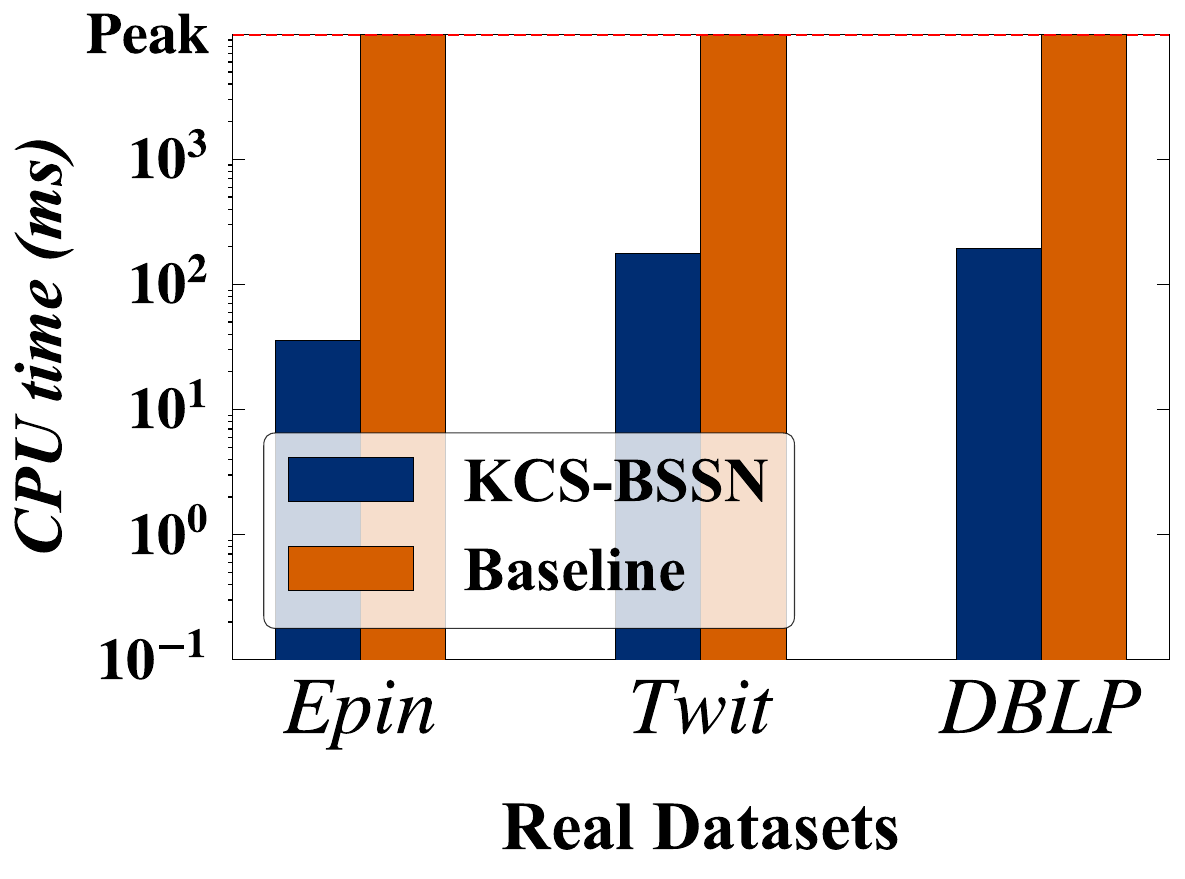}
        \caption{real-world graphs}
        \Description{real-world graphs}
        \label{fig:cpu_real}
    \end{subfigure}    
    \hfill    
    \begin{subfigure}[t]{0.48\linewidth}
        \centering
        \includegraphics[width=\linewidth]{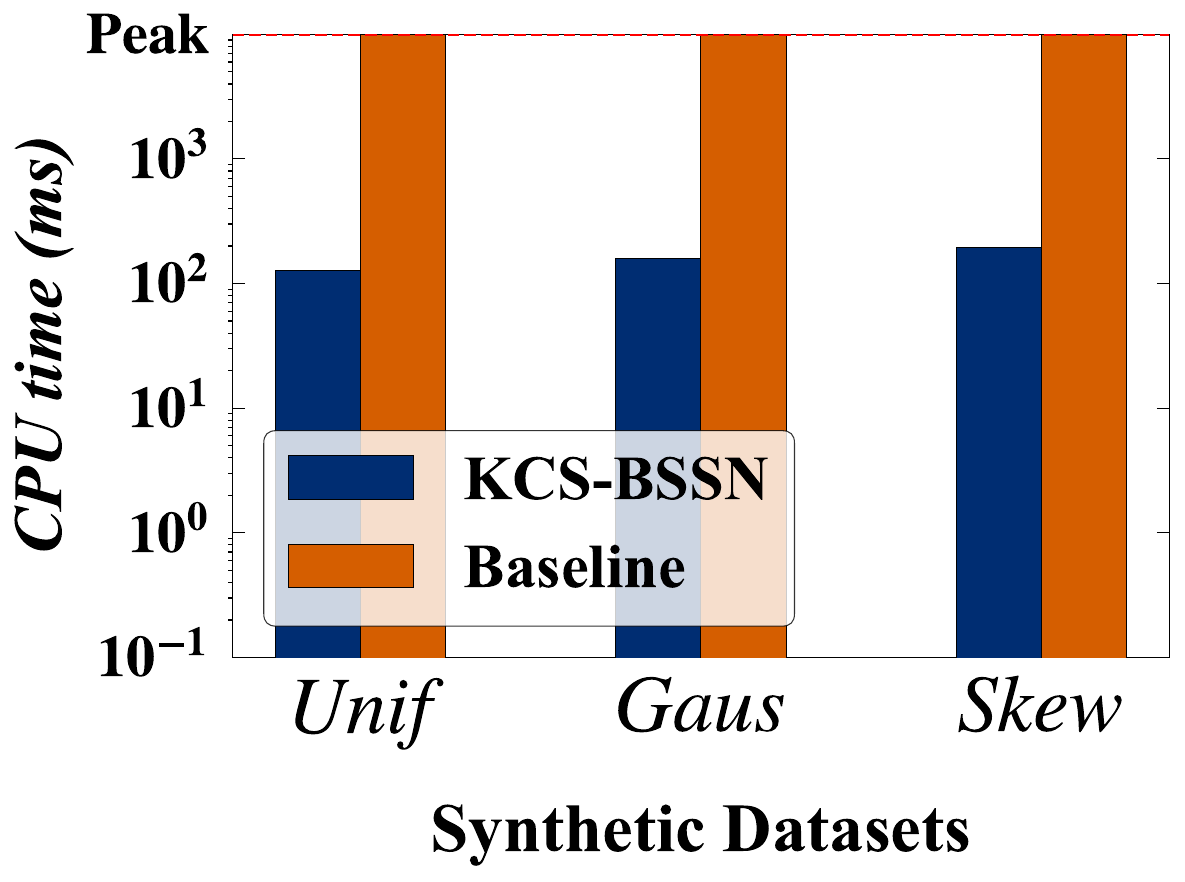}
        \caption{synthetic graphs}
         \Description{synthetic graphs}
        \label{fig:cpu_syn}
    \end{subfigure}
    
    \caption{The $KCS\mbox{-}BSSN$ performance vs. real/synthetic graph datasets.}
     \Description{The $KCS\mbox{-}BSSN$ performance vs. real/synthetic graph datasets.}
    \label{fig:cpu_results_datasets}
\end{figure}

\begin{figure*}[t]
    \centering
    
    \begin{subfigure}{0.23\linewidth}
        \includegraphics[width=\linewidth]{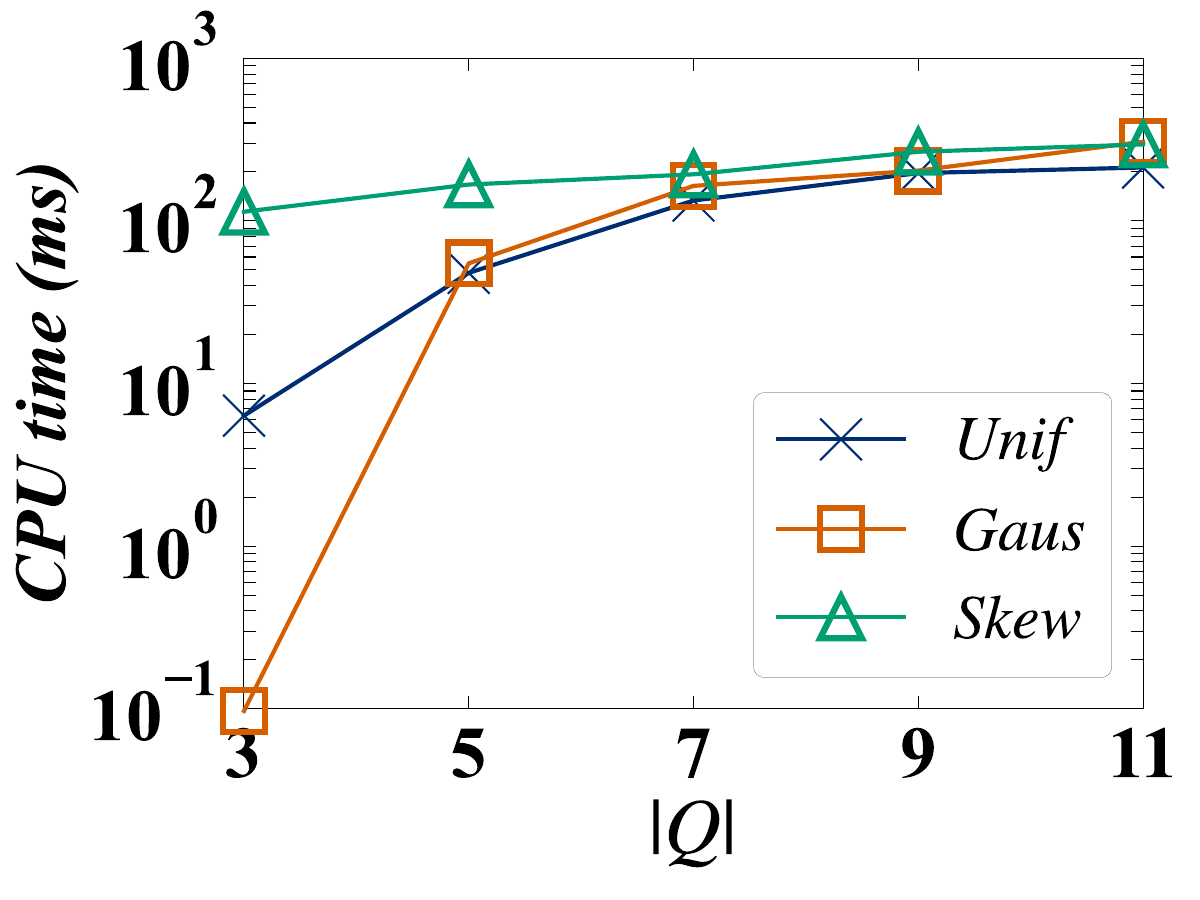}
         \caption{time cost vs. $|Q|$ }
         \Description{time cost vs. $|Q|$ }
    \end{subfigure}
    \hfill
    \begin{subfigure}{0.23\linewidth}
        \includegraphics[width=\linewidth]{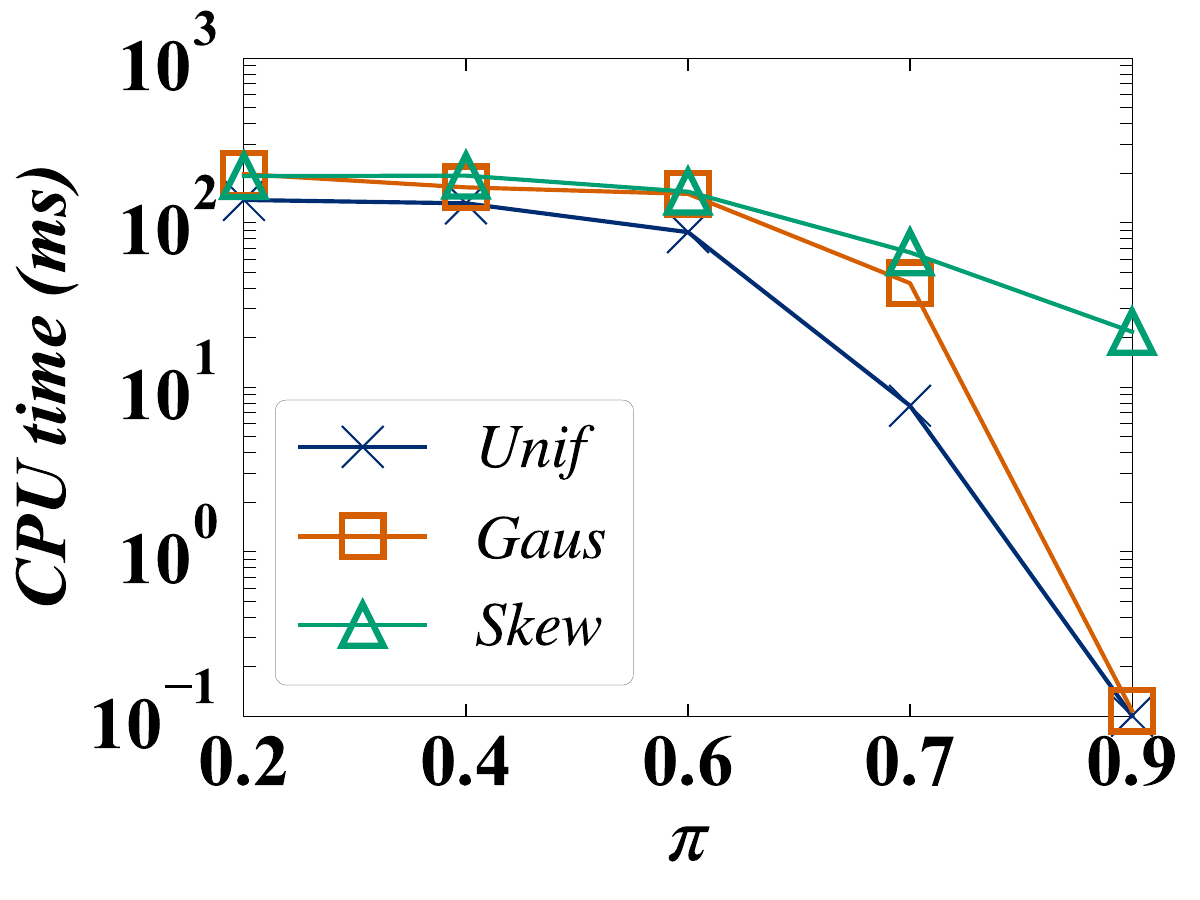}
        \caption{time cost vs. $\pi$}
         \Description{time cost vs. $\pi$}
    \end{subfigure}
     \hfill
        \begin{subfigure}{0.23\linewidth}
        \includegraphics[width=\linewidth]{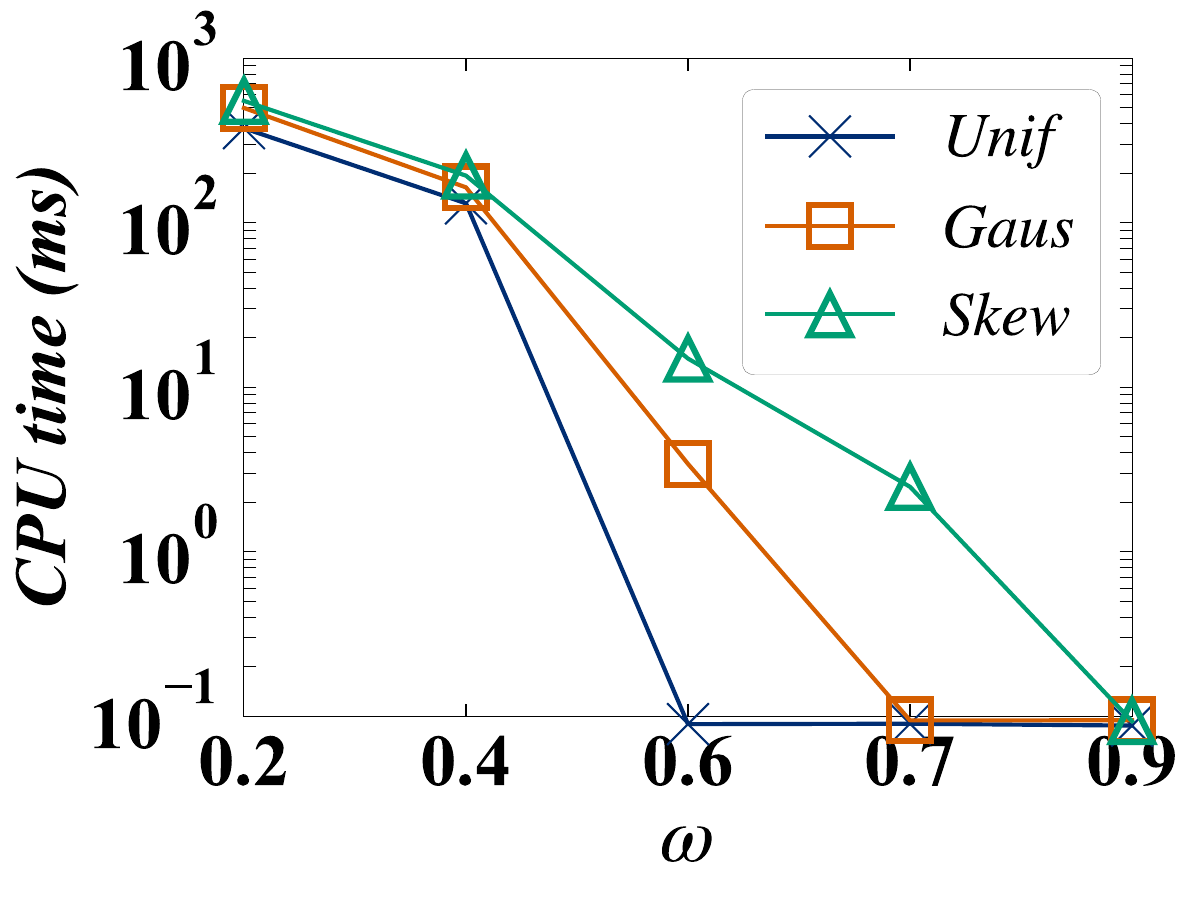}
         \caption{time cost vs. $\omega$}
         \Description{time cost vs. $\omega$}
    \end{subfigure}
    \hfill
        \begin{subfigure}{0.23\linewidth}
        \includegraphics[width=\linewidth]{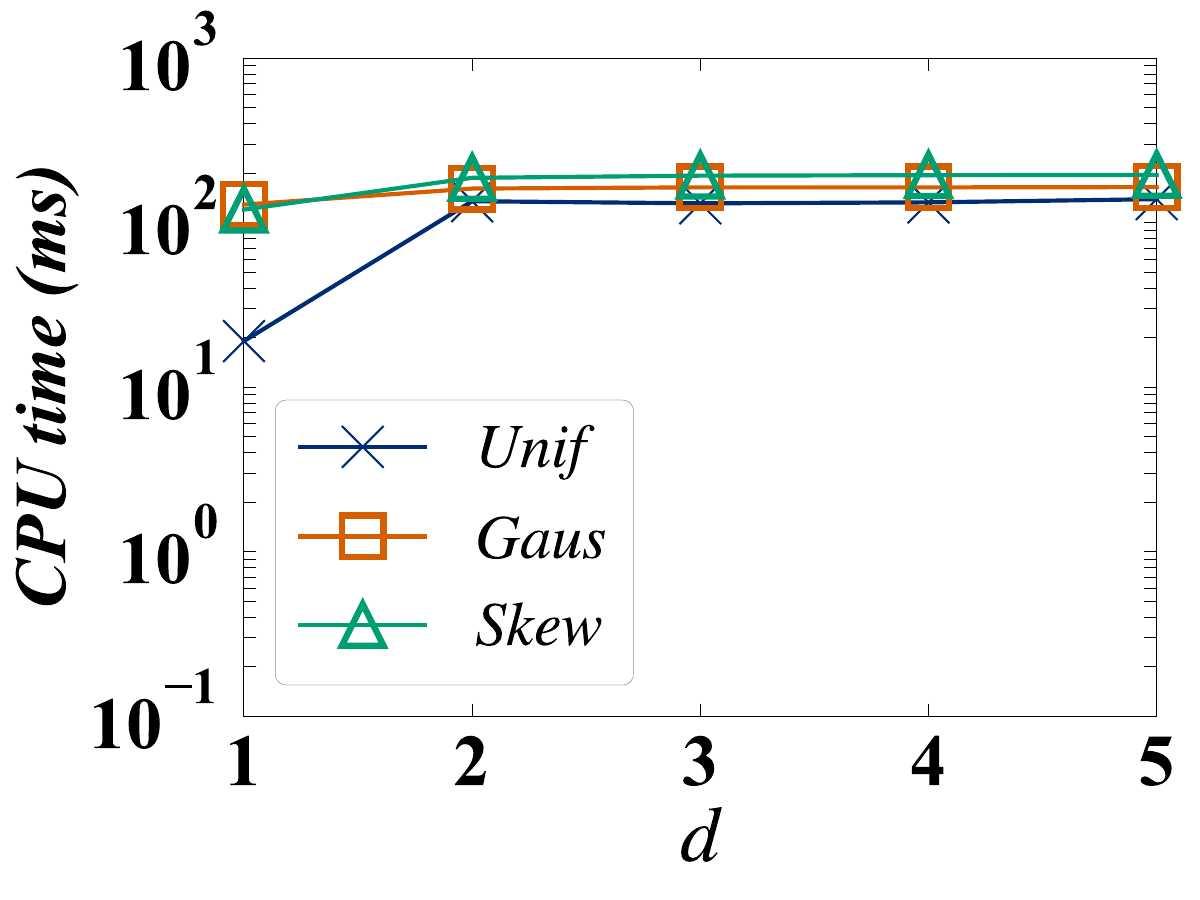}
        \caption{time cost vs. $d$}
         \Description{time cost vs. $d$}
    \end{subfigure}
    \vspace{0.4cm}

    \begin{subfigure}{0.23\linewidth}
        \includegraphics[width=\linewidth]{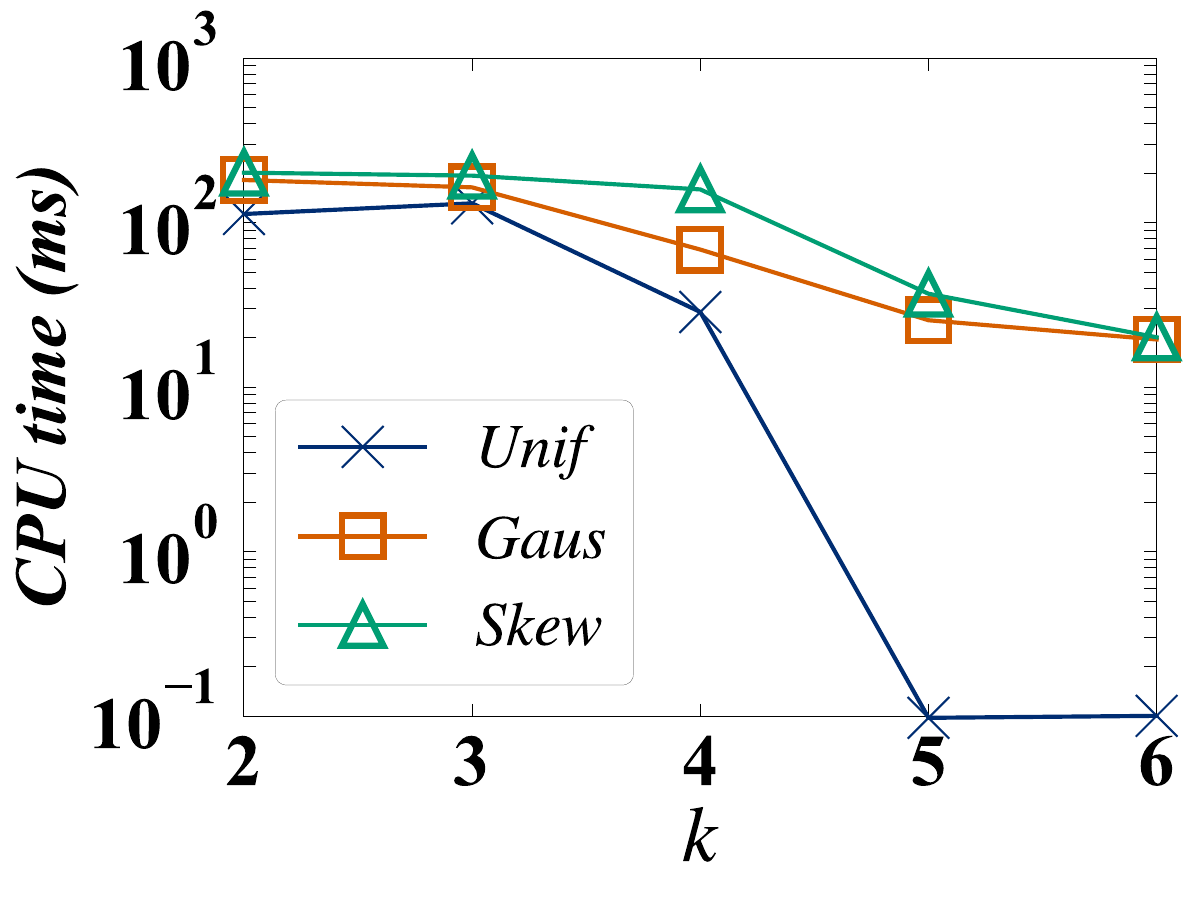}
        \caption{time cost vs. $k$}
        \Description{time cost vs. $k$}
    \end{subfigure}
    \hfill
     \begin{subfigure}{0.23\linewidth}
        \includegraphics[width=\linewidth]{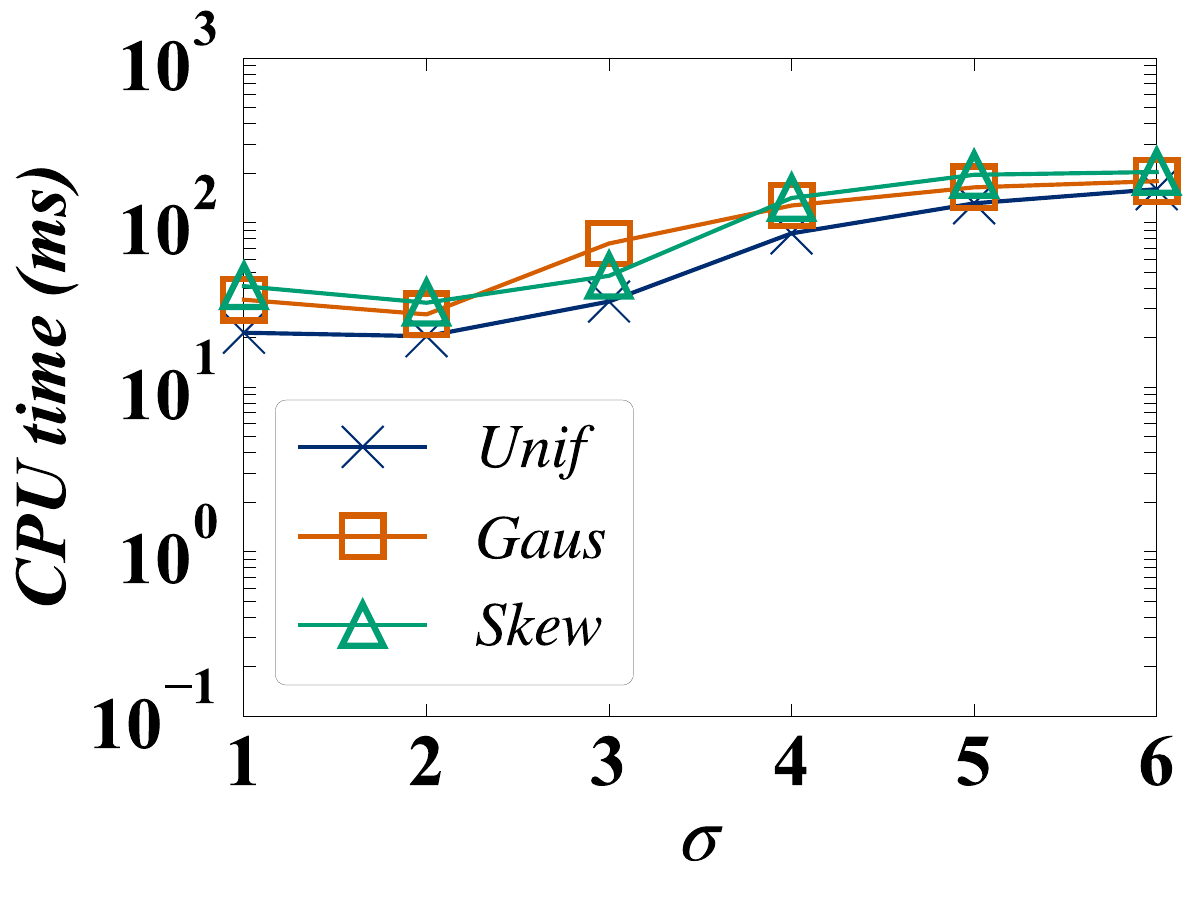}
         \caption{time cost vs. $\sigma$}
          \Description{time cost vs. $\sigma$}
    \end{subfigure}
    \hfill
    \begin{subfigure}{0.23\linewidth}
        \includegraphics[width=\linewidth]{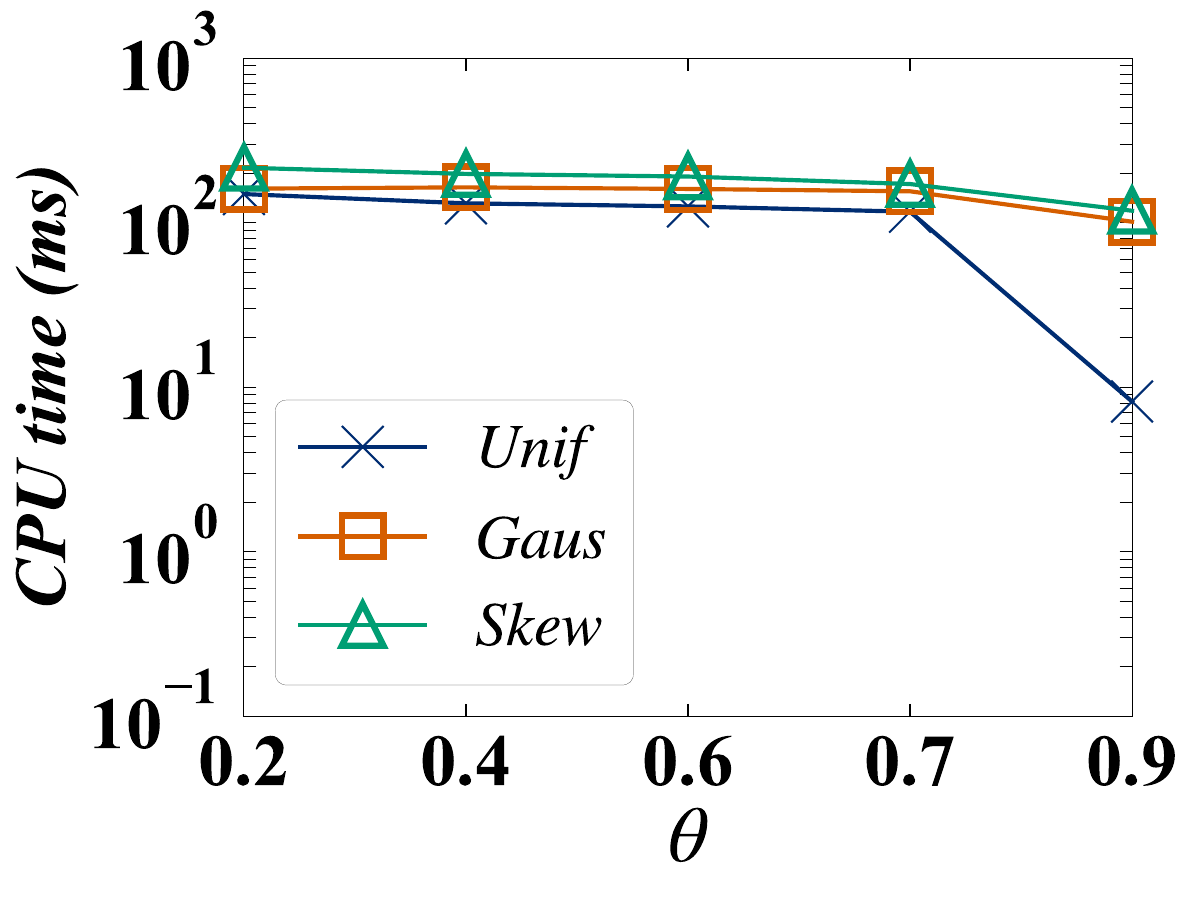}
         \caption{time cost vs. $\theta$}
          \Description{time cost vs. $\theta$}
    \end{subfigure}
    \hfill
    \begin{subfigure}{0.23\linewidth}
        \includegraphics[width=\linewidth]{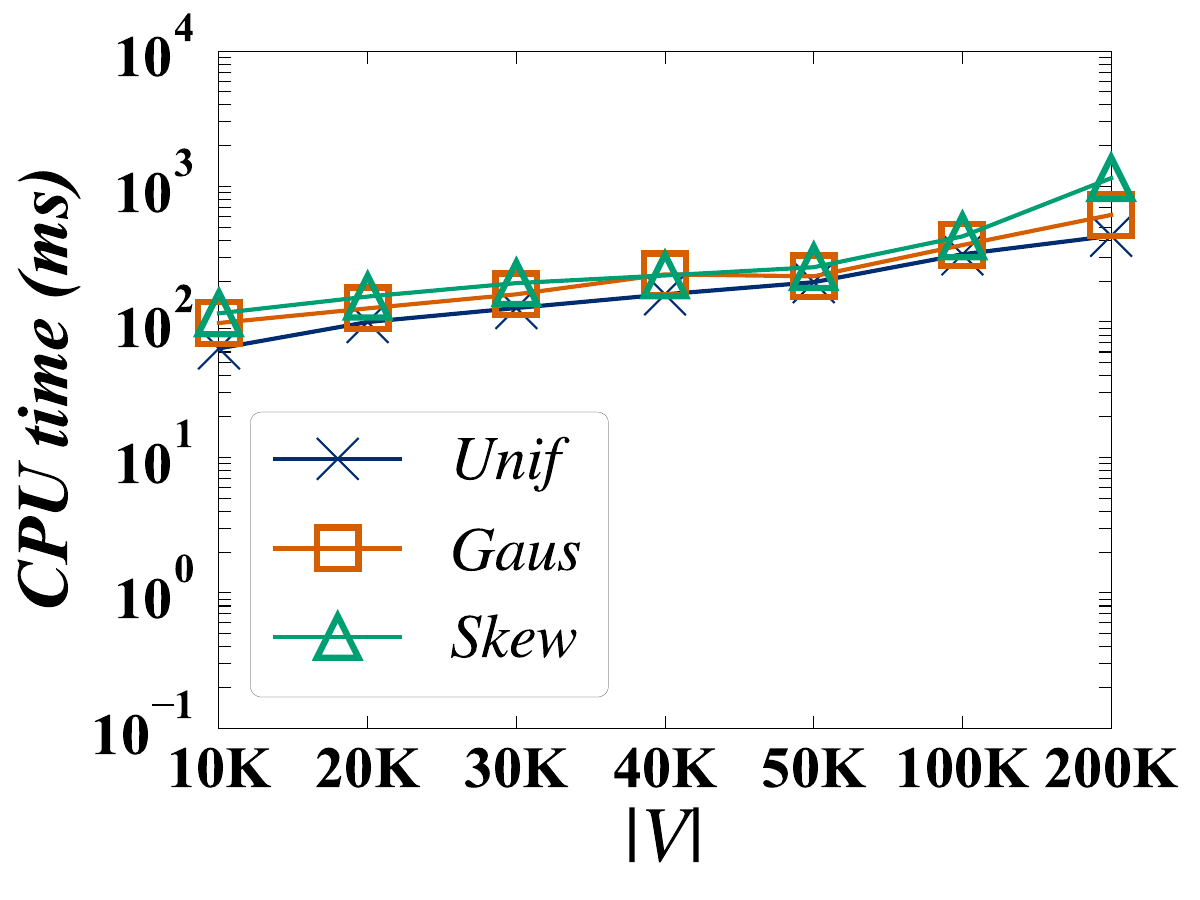}
        \caption{time cost vs. $N$ }
        \Description{time cost vs. $N$ }
    \end{subfigure}
    \caption{Overall $KCS\mbox{-}BSSN$ performance comparison for different parameter settings.}
    \Description{Overall $KCS\mbox{-}BSSN$ performance comparison for different parameter settings.}
    \label{fig:all_cpu_results}
\end{figure*}

\begin{figure}[t]
    \centering
     \begin{subfigure}[t]{0.48\linewidth}
        \centering
        \includegraphics[width=\linewidth]{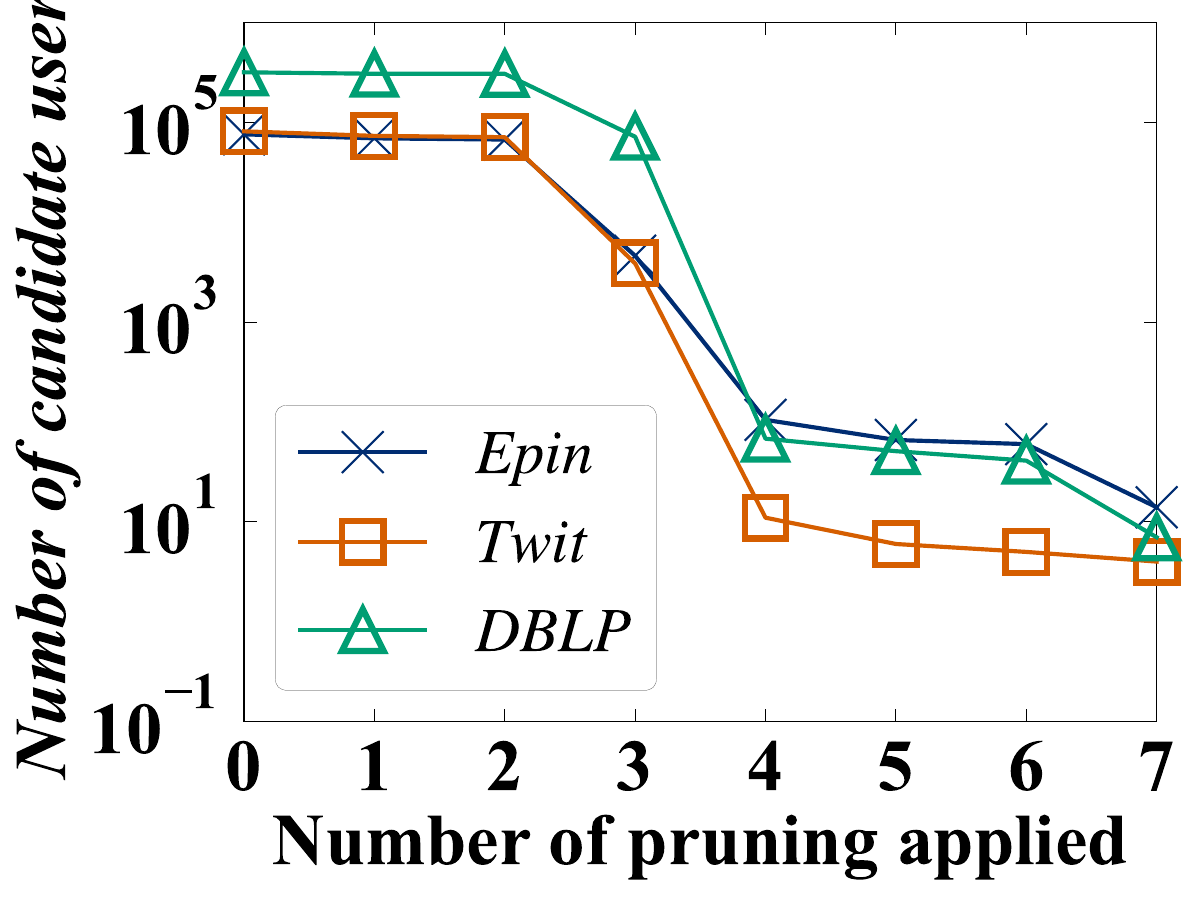}
        \caption{real-world graphs}
          \Description{real-world graphs}
        \label{fig:pruning_power_real}
    \end{subfigure}
    \hfill
    \begin{subfigure}[t]{0.48\linewidth}
        \centering
        \includegraphics[width=\linewidth]{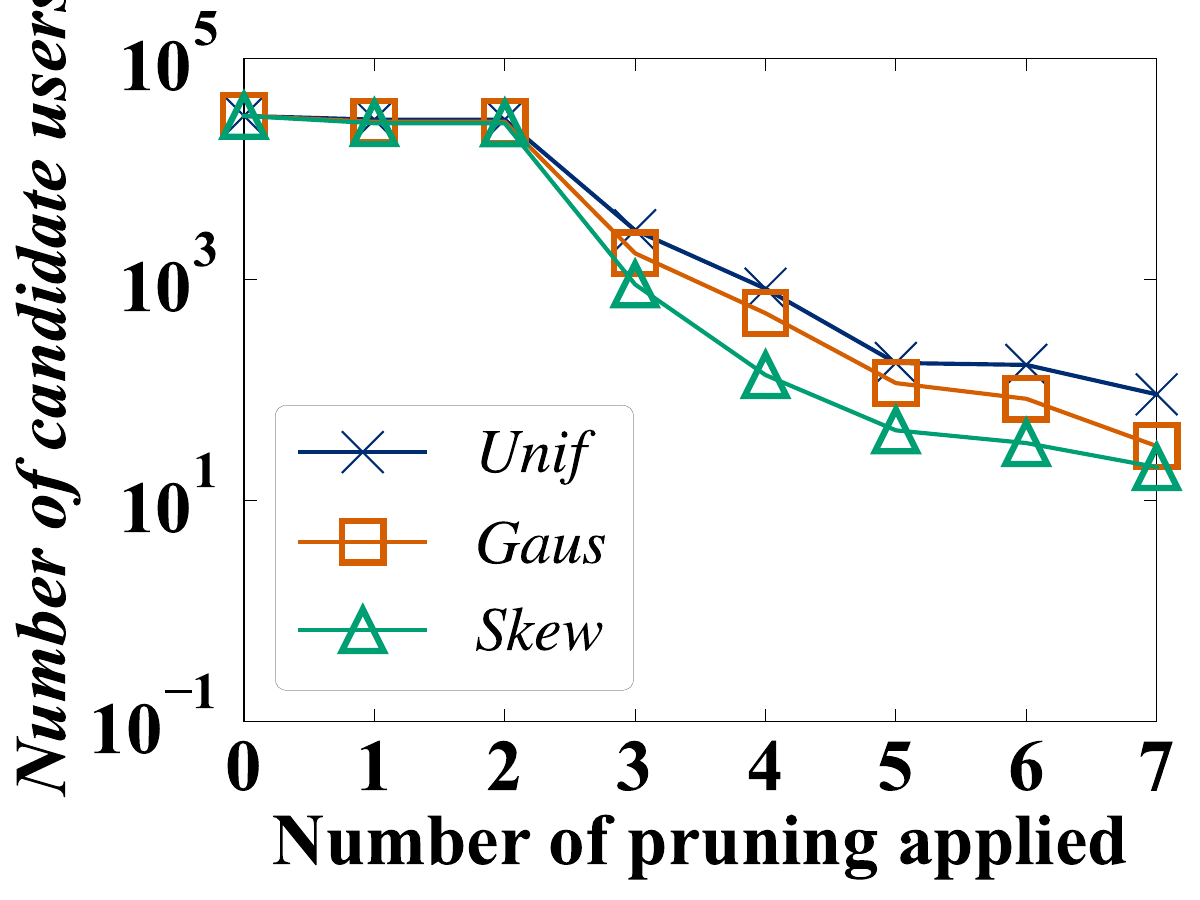}
        \caption{synthetic graphs}
         \Description{synthetic graphs}
        \label{fig:pruning_power_syn}
    \end{subfigure}
    
    \caption{The ablation study of our $KCS\mbox{-}BSSN$ pruning strategies on real/synthetic graphs, in terms of the pruning power.}
    \Description{The ablation study of our $KCS\mbox{-}BSSN$ pruning strategies on real/synthetic graphs, in terms of the pruning power.}
    \label{fig:pruning_power}
\end{figure}

\textbf{The $KCS\mbox{-}BSSN$ Offline Processing Time vs. Real/Synthetic Datasets.}
Figure~\ref{fig:cpu_results_datasets_offline} reports the offline processing time of $KCS\mbox{-}BSSN$, including data preparation and indexing tree construction, on both real and synthetic datasets. For real datasets (Figure~\ref{fig:cpu_results_datasets_offline}(a)), networks are processed at their original sizes (Table~\ref{tbl:real-datasets}). For synthetic datasets (Figure~\ref{fig:cpu_results_datasets_offline}(b)), we vary the social network size $|V|$ to evaluate scalability, while keeping other parameters at their default values. Results show a clear correlation between $|V|$ and offline time, confirming that dataset size significantly impacts preprocessing cost.\\
\textbf{The $KCS\mbox{-}BSSN$ Performance vs. Real/Synthetic Datasets.}
Figure~\ref{fig:cpu_results_datasets} compares our approach with the baseline under default parameter settings. For real datasets (Figure~\ref{fig:cpu_results_datasets}(a)), original network sizes are used; for synthetic datasets (Figure~\ref{fig:cpu_results_datasets}(b)), $|V|$ is fixed at its default value. In all cases, $KCS\mbox{-}BSSN$ consistently outperforms the baseline. The baseline frequently reaches peak execution times due to the exponential number of subgraphs it explores, whereas our approach remains stable and scalable across different data environments.\\
\textbf{Performance vs. Query Keyword Set Size $|Q|$.}
Figure~\ref{fig:all_cpu_results}(a) evaluates performance for $|Q|\in{3,5,7,9,11}$. Runtime increases moderately as $|Q|$ grows, since more keywords expand the number of matching POIs and candidate users. Nevertheless, the overall variation in computation time remains small, demonstrating robustness to keyword expansion.\\
\textbf{Performance vs. POIs Average Visiting Frequency Threshold $\pi$.}
Figure~\ref{fig:all_cpu_results}(b) analyzes $\pi \in {0.2,0.4,0.6,0.7,0.9}$. As $\pi$ increases, execution time decreases due to stronger pruning of POIs with low average visit frequency, which in turn reduces candidate users. Runtime drops from roughly $10^2$ ms at $\pi=0.2$ to near $10^0$–$10^1$ ms at $\pi=0.9$. This pruning operates after keyword-based filtering, where POIs that do not match the query keywords have already been removed. However, as $\pi$ becomes larger, the constraint becomes much stricter, leading to a significant reduction in both POIs and their associated users, and thus a noticeable improvement in execution time.\\
\textbf{Performance vs. Users Total Visiting Frequency Threshold $\omega$.}
Figure~\ref{fig:all_cpu_results}(c) varies $\omega$ over ${0.2,0.4,0.6,0.7,0.9}$. Increasing $\omega$ significantly reduces candidate users, leading to notable runtime improvement. In fact, runtime decreases sharply between $\omega=0.4$ and $\omega=0.6$, and approaches minimal values for $\omega \geq 0.7$. In some distributions, the improvement spans nearly three orders of magnitude, indicating that $\omega$-based pruning is one of the most dominant filtering mechanisms. Keyword distribution also affects pruning behavior, particularly in the $Unif$ and $Gaus$ datasets, where selective POI pruning further reduces computational cost.\\
\textbf{Performance vs. Social Network Distance Threshold $d$.}
Figure~\ref{fig:all_cpu_results}(d) varies $d$ from 1 to 5 hops. As expected, runtime increases with $d$ because expanding the social radius enlarges the candidate subgraph and increases the number of users that must be examined. Execution time grows from approximately $10^1$ ms to around $10^2$ ms across the tested range. This behavior indicates that the evaluated datasets are relatively dense and well connected, as increasing $d$ quickly incorporates additional users into the search space. Nevertheless, the growth remains smooth and bounded within a single order of magnitude, without exhibiting exponential escalation. This demonstrates that the pruning mechanisms effectively control search expansion and ensure good scalability with respect to the social distance threshold.\\
\textbf{Performance vs. Triangle Support Threshold $k$.}
Figure~\ref{fig:all_cpu_results}(e) evaluates $k \in {2,3,4,5,6}$. Execution time decreases as $k$ increases, since stronger structural constraints enable more aggressive pruning of low-support users. For higher values of $k$ (e.g., 5 or 6), runtime drops close to $10^0$–$10^1$ ms in certain datasets. This indicates that a large portion of edges have support values below 5 or 6, and thus are eliminated early when stricter cohesiveness constraints are enforced. Consequently, higher $k$ values significantly reduce candidate density and overall computation time.\\
\textbf{Performance vs. Road Network Distance Threshold $\sigma$.}
Figure~\ref{fig:all_cpu_results}(f) varies $\sigma$ from 1 to 6. Larger $\sigma$ values expand the spatial search region, increasing the number of candidate POIs and thus the runtime. Execution time increases steadily from approximately $10^1$ ms to nearly $10^2$ ms as $\sigma$ grows, reflecting the expansion of the spatial search region. However, the growth remains smooth and bounded within a single order of magnitude, indicating stable spatial scalability and effective pruning control without noticeable performance spikes.\\
\textbf{Performance vs. Influence Score Threshold $\theta$.}
Figure~\ref{fig:all_cpu_results}(g) varies $\theta \in {0.2,0.4,0.6,0.7,0.9}$. Since influence pruning is applied in a late filtering stage, increasing $\theta$ gradually reduces candidate users and slightly decreases execution time. The reduction is moderate compared to $\omega$ and $k$, confirming that influence acts mainly as a refinement constraint rather than a primary pruning driver.\\
\textbf{Performance vs. Number of Users $|V|$.}
Figure~\ref{fig:all_cpu_results}(h) evaluates scalability for $|V|$ ranging from $10K$ to $200K$. Runtime increases with network size across all datasets ($Unif$, $Gaus$, $Skew$), as larger networks enlarge both social and spatial search spaces. The CPU time is about $10^2$ ms for our default setting $|V|=30K$, even under less restrictive parameter settings, the runtime remains within practical bounds. At $|V|=200K$, execution time is around $10^3$ ms for all three datasets, indicating that the method scales efficiently and maintains stable performance even for large-scale networks.

\noindent \textbf{Pruning Power Analysis on Real/Synthetic Datasets.}
Figures~\ref{fig:pruning_power} present an incremental evaluation of pruning effectiveness on real-world datasets ($Epin$, $Twit$, $DBLP$) and synthetic datasets ($Unif$, $Gaus$, $Skew$) under default settings. We measure the reduction in candidate users across eight stages, starting from the initial user set (Stage 0) and cumulatively activating: (1) Keyword-Based, (2) $\pi$-Based, (3) $\omega$-Based, (4) Social-Distance-Based, (5) Structural Cohesiveness, (6) Spatial-Distance-Based, and (7) Influence-Based pruning. Across all datasets, the candidate set decreases steadily at each stage. A significant reduction is observed after $\omega$-based pruning, highlighting its strong filtering power. In contrast, Keyword-Based and $\pi$-Based pruning show more moderate effects due to the high density of POIs associated with multiple keywords. Overall, the layered pruning strategy effectively reduces the search space and improves query efficiency in both real and synthetic environments.

\begin{figure}[t]
    \centering
     \begin{subfigure}[t]{0.48\linewidth}
        \centering
        \includegraphics[width=\linewidth]{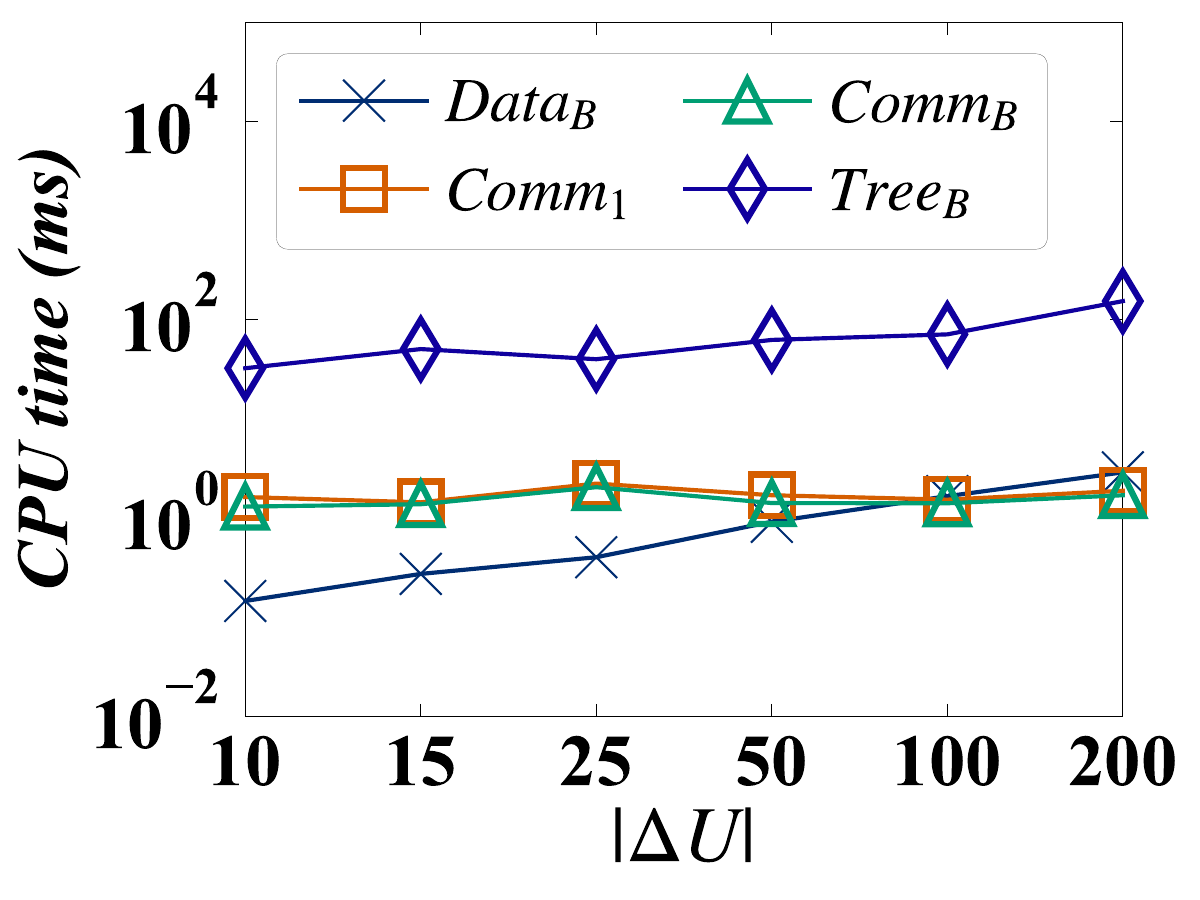}
        \caption{real-world graph ($Twit$)}
          \Description{real-world graphs}
        \label{fig:dynamic_inset_real}
    \end{subfigure}
    \hfill
    \begin{subfigure}[t]{0.48\linewidth}
        \centering
        \includegraphics[width=\linewidth]{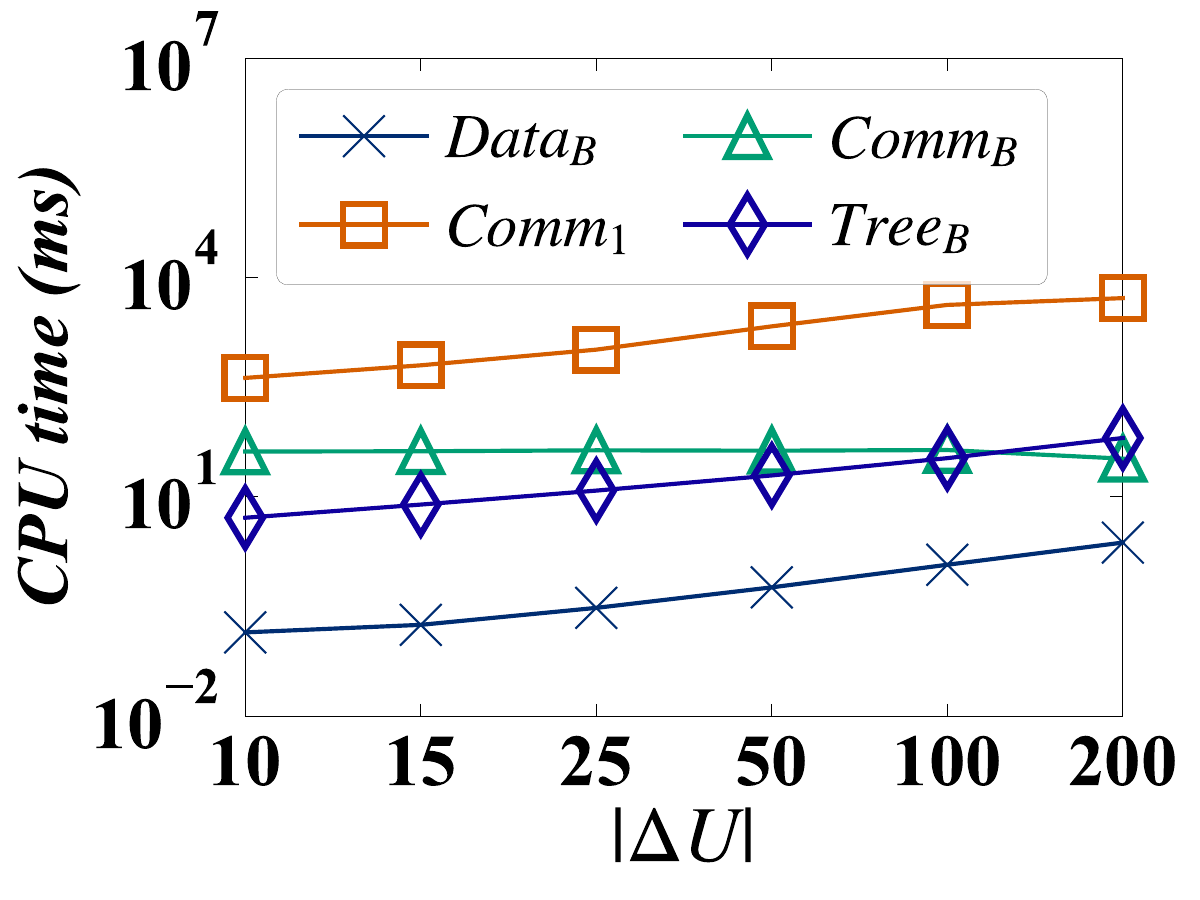}
        \caption{synthetic graph ($Unif$)}
         \Description{synthetic graphs}
        \label{fig:dynamic_inser_syn}
    \end{subfigure}
    \caption{The $KCS\mbox{-}TBSSN$ performance of temporal dynamic updates (insertion) vs. real ($Twit$) and synthetic ($Unif$) graphs.}
    \Description{Temporal dynamic updates (insertion) performance vs. real ($Twit$)/synthetic ($Unif$) graphs.}
    \label{fig:dynamic_insert}
\end{figure}

\vspace{2ex}
\subsection{\texorpdfstring{$KCS\mbox{-}TBSSN$}{KCS-TBSSN} Performance Evaluation}

\noindent\textbf{Temporal Dynamic Updates / Insertion Performance Analysis on Real ($Twit$) and Synthetic ($Unif$) $TBSSN$ Graphs.} Figure \ref{fig:dynamic_insert} illustrates the $KCS\text{-}TBSSN$ performance comparison of keyword-based community search over dynamic real-world ($Twit$) and synthetic ($Unif$) $TBSSN$ graphs, where the batch size, $|\Delta U|$, of insertion updates varies from 10 to 200, with all other community parameters held at default settings. The results reveal distinct performance trends across the maintenance categories. In the real graph ($Twit$), as shown in Figure \ref{fig:dynamic_insert}(a), processing times for all strategies remain relatively stable despite increases in batch size. Within this environment, $Tree_B$ consistently requires the highest CPU time, maintaining a range between $10$ and $100$ $ms$, while $Comm_1$ and $Comm_B$ demonstrate nearly identical efficient performance hovering near $1 ms$, and $Data_B$ functions as the most efficient strategy. In contrast, the synthetic graph ($Unif$) in Figure \ref{fig:dynamic_insert}(b) exhibits a more pronounced upward trend in CPU time as batch sizes grow. Specifically, $Comm_1$ is significantly more expensive, rising from approximately $100$ $ms$ to nearly $10^4$ $ms$ at 200 users, largely because communities in this dataset are generally larger than those in the $Twit$ dataset. In both datasets, $Comm_1$ requires more time because many selected users not initially in a community must undergo the refinement process. However, our proposed $Comm_B$ remains highly stable and efficient, consistently performing near $10$ $ms$, which validates that batch updates scales significantly better than individual updates under synthetic workloads for insertion operations.

\begin{figure}[t]
    \centering
     \begin{subfigure}[t]{0.48\linewidth}
        \centering
        \includegraphics[width=\linewidth]{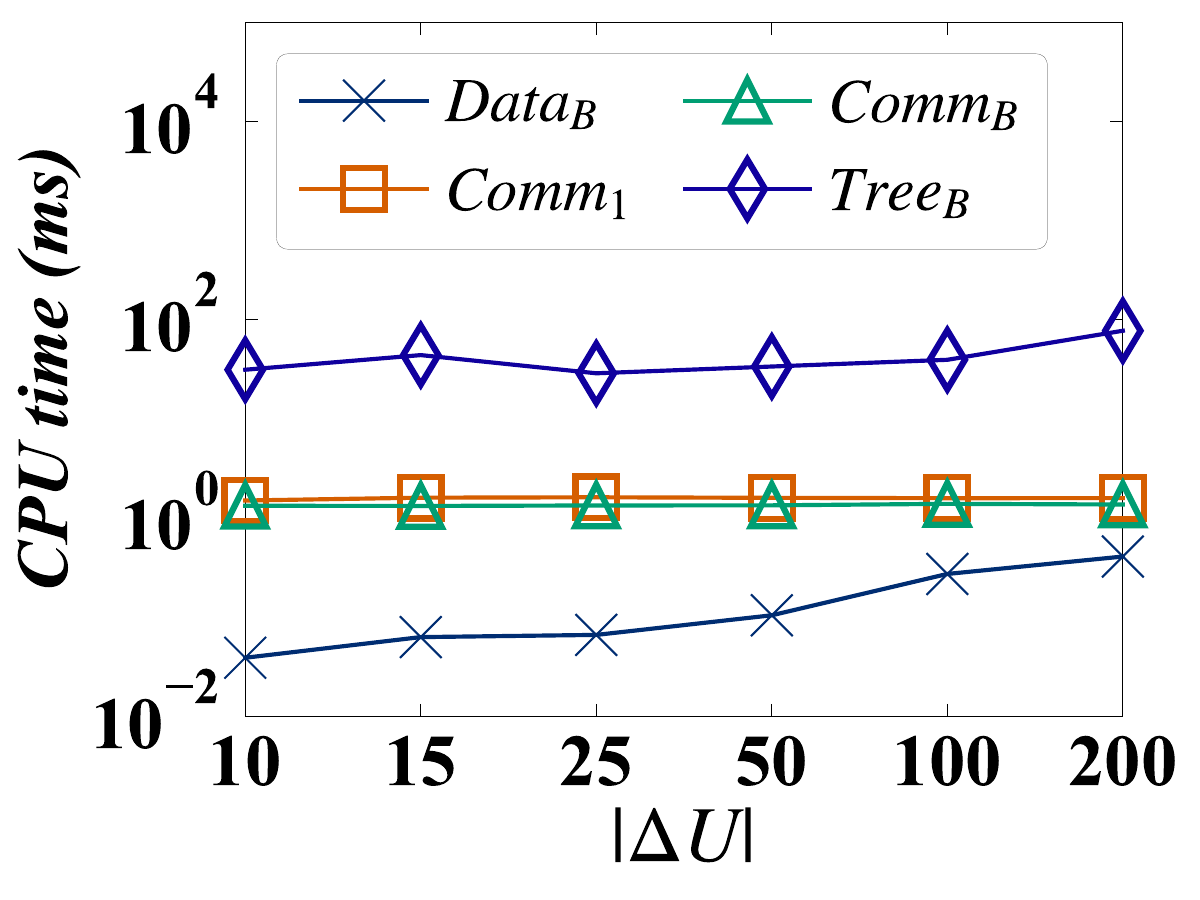}
        \caption{real-world graph ($Twit$)}
          \Description{real-world graphs}
        \label{fig:dynamic_delete_real}
    \end{subfigure}
    \hfill
    \begin{subfigure}[t]{0.48\linewidth}
        \centering
        \includegraphics[width=\linewidth]{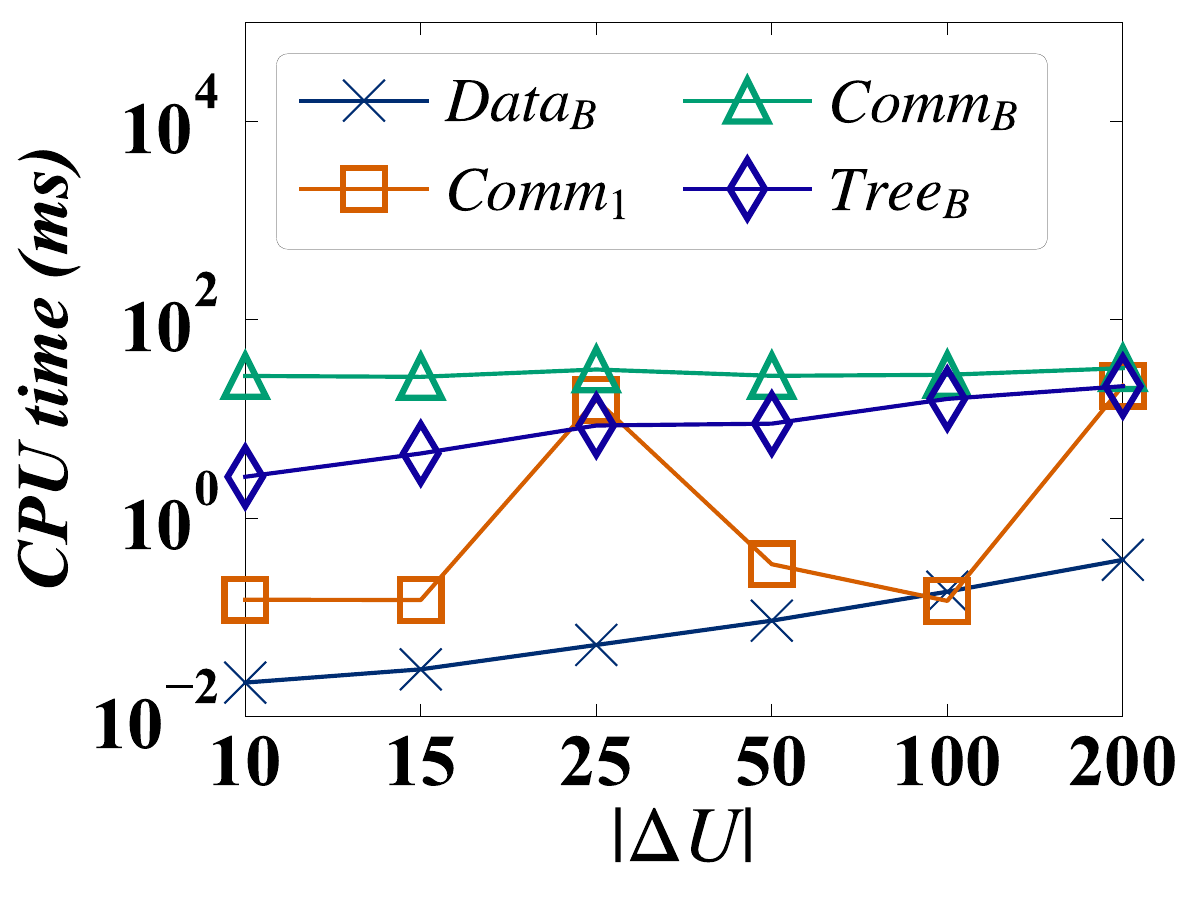}
        \caption{synthetic graph ($Unif$)}
         \Description{synthetic graphs}
        \label{fig:dynamic_delete_syn}
    \end{subfigure}
    
    \caption{The $KCS\mbox{-}TBSSN$ performance of temporal dynamic updates (deletion) vs. real ($Twit$) and synthetic ($Unif$) graphs.}
    \Description{Temporal dynamic updates (deletion) performance vs. real ($Twit$)/synthetic graph ($Unif$)}
    \label{fig:dynamic_delete}
\end{figure}

\noindent\textbf{Temporal Dynamic Updates / Deletion Performance Analysis on Real ($Twit$) and Synthetic ($Unif$) $TBSSN$ Graphs.} Figure \ref{fig:dynamic_delete} illustrates the $KCS\text{-}TBSSN$ performance upon deletion operations, simulating check-in expiration within the sliding window model. In this experiment, the batch size $|\Delta U|$ also varies from 10 to 200, while all other community parameters are set to their default values. For the real-world graph ($Twit$), as shown in Figure \ref{fig:dynamic_delete}(a), $Tree_B$ remains the most computationally intensive operation, with the CPU time stabilizing between $10$ $ms$ and $100$ $ms$. Conversely, the community maintenance strategies $Comm_1$ and $Comm_B$ (with individual updates each time and batch updates, respectively) exhibit high ad similar efficiency (i.e. around $1 ms$), indicating that most users in $|\Delta U|$ do not belong to community answers (w.r.t. registered community search queries), allowing the update process to bypass them. $Data_B$ consistently emerges as the most efficient strategy, re-computing metrics with minimal time cost. For synthetic graph ($Unif$) in Figure \ref{fig:dynamic_delete}(b), $Comm_1$ performs well across most trials but suffers significant latency spikes at $|\Delta U|=25$ and $200$, while $Comm_B$ maintains stable performance. These spikes suggest that specific batch sizes included more users belonging to community answers, triggering more intensive validation. While $Tree_B$ exhibits an upward trend in the CPU time as batch sizes increase, $Data_B$ remains remarkably stable and efficient. These experimental results confirm that the batch maintenance approach effectively reduces the computational overhead of expiring temporal data, except in cases where individual user validation ($Comm_1$) may be faster for users who are not part of existing community answers.

The experimental results for other real/synthetic graphs are similar and thus we do not report them here.

\section{Related Work}
\label{sec:related_work}
\noindent \textbf{Community Search (CS).}  Community search strategies are generally categorized into four main methodologies \cite{10.1145/3768576}: peeling \cite{10.1016/j.ins.2023.119511, LI2024111961, HIC-Zhou-2023}, expansion \cite{10.1016/j.ins.2023.119511, LI2024111961}, pruning \cite{LI2024111961, HIC-Zhou-2023}, and indexing-based strategies \cite{zhang2025topr, Temp-bipartite-Li-2024}. The primary objective of CS is to retrieve a subgraph that satisfies specific constraints while containing a user-specified query vertex. Existing literature focuses on diverse constraints, including structural cohesiveness, keyword (attribute) homogeneity, embedded bipartite cores, community size, member influence, and spatial proximity.\\
\noindent \textbf{CS in Heterogeneous Graphs.} CS in heterogeneous graphs where different vertex and edge types coexist—aims to identify communities spanning multiple entity types. Chen et al. \cite{FCS-HGNN-chen-2024} proposed algorithms for both regular and large-scale heterogeneous graphs to identify single-type and multi-type communities. Zhou et al. \cite{HIC-Zhou-2023} introduced the concept of heterogeneous influential communities based on meta-path core models \cite{EECS-fang2020, PathSim-Sun-2011}. For a comprehensive overview of heterogeneous community models, refer to the recent survey by \cite{10.1145/3768576}.\\
\noindent \textbf{CS in Bipartite Graphs.} Bipartite graphs are a specialized form of heterogeneous graphs consisting of two disjoint vertex sets. Research in this area often focuses on the core structure. For instance, the community with $(\omega,\beta)$-core structure, where upper-layer vertices have a minimum degree of $\omega$ and lower-layer vertices have a minimum degree of $\beta$ \cite{10.1145/3132847.3133130}. To manage community scale, size constraints are often imposed, such that the upper-layer size $\leq \mathfrak{x}$ and the lower-layer size $\leq \mathfrak{z}$ \cite{10.1016/j.ins.2023.119511}. Zhou et al. \cite{10.1016/j.ins.2023.119511} developed peeling and expansion algorithms for these structures, while Li et al. \cite{LI2024111961} investigated the Maximal Size Constraint CS (MSCC), proving it to be NP-Hard and proposing the Expand-and-Filter (EFA++) algorithm for acceleration. To incorporate social impact, Zhang et al. \cite{zhang2025topr} addressed the $(\omega,\beta)$-influential community problem, while Li et al. \cite{Temp-bipartite-Li-2024} extended CS to temporal bipartite graphs using time-windowed $(\omega,\beta)$-cores. Furthermore, Xu et al. \cite{CSABiGXu2023} explored attributed bipartite graphs to maximize attribute similarity, and Wang et al. \cite{Scalable-CS-wang-2024} demonstrated that finding $k$-core communities with combined textual and numerical attributes is NP-Hard in in heterogeneous graph.\\
\noindent \textbf{CS in Spatial-Social Networks.} Recent research has increasingly focused on geo-social or spatial-social networks \cite{geo-social-wang-2022-tkde, Co-Engaged-haldar-2024-TKDE, Haldar-top-k-2023-TKDE, NGCS-wu-2022-access, multi-att-cs-guo-2021, group-plianing-ahmed-2022}. Haldar et al. \cite{Co-Engaged-haldar-2024-TKDE, Haldar-top-k-2023-TKDE} investigated location-based social networks, utilizing query locations to identify connected subgraphs that satisfy $k$-core requirements and distance thresholds. Similarly, Wang et al. \cite{geo-social-wang-2022-tkde} provided solutions for $k$-core structures with road-network radius constraints. Wu et al. \cite{NGCS-wu-2022-access} proposed top-$n$ community retrieval based on proximity to a query user $q$, on the social network with $k$-core cohesive and radius constraint. Rai et al. \cite{topk-cs-rai-2023-tkde} focused on retrieving communities similar to a given query community on the road network. Guo et al. \cite{multi-att-cs-guo-2021} investigated multi-attributed CS in road-social networks, balancing structural cohesiveness with road-network distance. Similarly, Ahmed et al. \cite{ahmedTopicBasedCom} incorporated both member influence and spatial proximity into multi-attributed frameworks.\\
\noindent \textbf{Summary.} The query proposed in this paper introduces the\\ $(\omega, \pi)\mbox{-}keyword\mbox{-}core$, which integrates a unique combination of constraints: keyword similarity, user influence, structural cohesion via $(k,d)\mbox{-}truss$, and road network distance $avg\_dist_r(.)$. Due to this multi-faceted constraint set, existing techniques cannot be applied directly to the $KCS\mbox{-}BSSN$ problem.

\section{Conclusions}
\label{sec:conclusion}
In this paper, we propose a novel community search query, called Keyword-based Community Search in Bipartite Spatial-Social Networks ($KCS\mbox{-}BSSN$), motivated by many real-world applications. The $KCS\mbox{-}BSSN$ query identifies a cohesive community that satisfies the $(\omega, \pi)\mbox{-}\textit{keyword-core}$ constraint, and ensures social influence, while considering both social connectivity and spatial proximity. To efficiently process the query, we develop effective pruning techniques to eliminate non-promising users early. We also design a novel indexing tree with a tailored cost model that integrates social and road network information. Based on this framework, we propose a two-phase algorithm that generates candidate communities and refines them to obtain exact results.  Then, we introduced a temporal extension ($TBSSN$) to the $KCS\mbox{-}BSSN$ model, integrating time-sensitive pruning to filter historical data. This approach ensures that discovered communities are socially and spatially cohesive while reflecting contemporary user behaviors. Extensive experiments on real and synthetic datasets demonstrate the efficiency and effectiveness of our proposed methods

\bibliographystyle{abbrv}
\bibliography{references}

@article{kdtruss-def,
author = {Huang, Xin and Lakshmanan, Laks V. S.},
title = {Attribute-Driven Community Search},
year = {2017},
issue_date = {May 2017},
publisher = {VLDB Endowment},
volume = {10},
number = {9},
issn = {2150-8097},
url = {https://doi.org/10.14778/3099622.3099626},
doi = {10.14778/3099622.3099626},
journal = {Proc. VLDB Endow.},
month = {may},
pages = {949–960},
numpages = {12}
}

@article{barbieri2013topic,
  title={Topic-aware social influence propagation models},
  author={Barbieri, Nicola and Bonchi, Francesco and Manco, Giuseppe},
  journal={Knowledge and information systems},
  volume={37},
  number={3},
  pages={555--584},
  year={2013},
  publisher={Springer}
}

@article{chen2015online,
  title={Online topic-aware influence maximization},
  author={Chen, Shuo and Fan, Ju and Li, Guoliang and Feng, Jianhua and Tan, Kian-lee and Tang, Jinhui},
  journal={Proceedings of the VLDB Endowment},
  volume={8},
  number={6},
  pages={666--677},
  year={2015},
  publisher={VLDB Endowment}
}

@article{DBLP:journals/corr/cs-DS-0310049,
  author    = {Vladimir Batagelj and
               Matjaz Zaversnik},
  title     = {An O(m) Algorithm for Cores Decomposition of Networks},
  journal   = {CoRR},
  volume    = {cs.DS/0310049},
  year      = {2003},
  url       = {http://arxiv.org/abs/cs/0310049},
  bibsource = {dblp computer science bibliography, https://dblp.org}
}

@article{SEIDMAN1983269,
title = {Network structure and minimum degree},
journal = {Social Networks},
volume = {5},
number = {3},
pages = {269-287},
year = {1983},
issn = {0378-8733},
doi = {https://doi.org/10.1016/0378-8733(83)90028-X},
url = {https://www.sciencedirect.com/science/article/pii/037887338390028X},
author = {Stephen B. Seidman}
}

@article{cohen2008trusses,
  title={Trusses: Cohesive subgraphs for social network analysis},
  author={Cohen, Jonathan},
  journal={National security agency technical report},
  volume={16},
  number={3.1},
  year={2008},
  publisher={Citeseer}
}

@inproceedings{10.1145/2588555.2610495,
author = {Huang, Xin and Cheng, Hong and Qin, Lu and Tian, Wentao and Yu, Jeffrey Xu},
title = {Querying K-Truss Community in Large and Dynamic Graphs},
year = {2014},
isbn = {9781450323765},
publisher = {Association for Computing Machinery},
address = {New York, NY, USA},
url = {https://doi.org/10.1145/2588555.2610495},
doi = {10.1145/2588555.2610495},
booktitle = {Proceedings of the 2014 ACM SIGMOD International Conference on Management of Data},
pages = {1311–1322},
numpages = {12},
location = {Snowbird, Utah, USA},
series = {SIGMOD '14}
}

@inproceedings{acquisti2006imagined,
  title={Imagined communities: Awareness, information sharing, and privacy on the Facebook},
  author={Acquisti, Alessandro and Gross, Ralph},
  booktitle={International workshop on privacy enhancing technologies},
  pages={36--58},
  year={2006},
  organization={Springer}
}

@book{gibbons1985algorithmic,
  title={Algorithmic graph theory},
  author={Gibbons, Alan},
  year={1985},
  publisher={Cambridge university press}
}

@article{fang2020survey,
  title={A survey of community search over big graphs},
  author={Fang, Yixiang and Huang, Xin and Qin, Lu and Zhang, Ying and Zhang, Wenjie and Cheng, Reynold and Lin, Xuemin},
  journal={The VLDB Journal},
  volume={29},
  number={1},
  pages={353--392},
  year={2020},
  publisher={Springer}
}

@article{ahmedTopicBasedCom,
author = {Al-Baghdadi, Ahmed and Lian, Xiang},
title = {Topic-Based Community Search over Spatial-Social Networks},
year = {2020},
issue_date = {August 2020},
publisher = {VLDB Endowment},
volume = {13},
number = {12},
issn = {2150-8097},
url = {https://doi.org/10.14778/3407790.3407812},
doi = {10.14778/3407790.3407812},
journal = {Proc. VLDB Endow.},
month = {jul},
pages = {2104–2117},
numpages = {14}
}

@article{Gabriel-Graph-Algorithm,
    author = {Gabriel, K. Ruben and Sokal, Robert R.},
    title = "{A New Statistical Approach to Geographic Variation Analysis}",
    journal = {Systematic Biology},
    volume = {18},
    number = {3},
    pages = {259-278},
    year = {1969},
    month = {09},
    issn = {1063-5157},
    doi = {10.2307/2412323},
    url = {https://doi.org/10.2307/2412323},
    eprint = {https://academic.oup.com/sysbio/article-pdf/18/3/259/4595606/18-3-259.pdf}
}

@article{10.1007/s00778-016-0451-4,
author = {Yuan, Long and Qin, Lu and Lin, Xuemin and Chang, Lijun and Zhang, Wenjie},
title = {I/O efficient ECC graph decomposition via graph reduction},
year = {2017},
issue_date = {April     2017},
publisher = {Springer-Verlag},
address = {Berlin, Heidelberg},
volume = {26},
number = {2},
issn = {1066-8888},
url = {https://doi.org/10.1007/s00778-016-0451-4},
doi = {10.1007/s00778-016-0451-4},
journal = {The VLDB Journal},
month = apr,
pages = {275–300},
numpages = {26}
}

@article{10.1145/3768576,
author = {Zhou, Lihua and Wang, Jialong and Song, Yixin and Wang, Lizhen and Chen, Hongmei},
title = {Community Search over Heterogeneous Information Networks: A Survey},
year = {2025},
issue_date = {March 2026},
publisher = {Association for Computing Machinery},
address = {New York, NY, USA},
volume = {58},
number = {4},
issn = {0360-0300},
url = {https://doi.org/10.1145/3768576},
doi = {10.1145/3768576},
journal = {ACM Comput. Surv.},
month = oct,
articleno = {100},
numpages = {37}
}

@article{10.1016/j.ins.2023.119511,
author = {Zhou, Keqi and Xin, Junchang and Chen, Jinyi and Zhang, Xian and Wang, Beibei and Wang, Zhiqiong},
title = {Effective and efficient community search with size constraint on bipartite graphs},
year = {2023},
issue_date = {Nov 2023},
publisher = {Elsevier Science Inc.},
address = {USA},
volume = {647},
number = {C},
issn = {0020-0255},
url = {https://doi.org/10.1016/j.ins.2023.119511},
doi = {10.1016/j.ins.2023.119511},
journal = {Inf. Sci.},
month = nov,
numpages = {14}
}

@inproceedings{10.1145/3132847.3133130,
author = {Ding, Danhao and Li, Hui and Huang, Zhipeng and Mamoulis, Nikos},
title = {Efficient Fault-Tolerant Group Recommendation Using alpha-beta-core},
year = {2017},
isbn = {9781450349185},
publisher = {Association for Computing Machinery},
address = {New York, NY, USA},
url = {https://doi.org/10.1145/3132847.3133130},
doi = {10.1145/3132847.3133130},
booktitle = {Proceedings of the 2017 ACM on Conference on Information and Knowledge Management},
pages = {2047–2050},
numpages = {4},
location = {Singapore, Singapore},
series = {CIKM '17}
}

@article{LI2024111961,
title = {Maximal size constraint community search over bipartite graphs},
journal = {Knowledge-Based Systems},
volume = {297},
pages = {111961},
year = {2024},
issn = {0950-7051},
doi = {https://doi.org/10.1016/j.knosys.2024.111961},
url = {https://www.sciencedirect.com/science/article/pii/S0950705124005951},
author = {Mo Li and Renata Borovica-Gajic and Farhana M. Choudhury and Ningning Cui and Linlin Ding}
}

@inproceedings{zhang2025topr,
  title={Top-r Influential Community Search in Bipartite Graphs},
  author={Zhang, Yanxin and Hua, Zhengyu and Yuan, Long and Chen, Zi},
  booktitle={VLDB 2025 Workshop on Large Scale Graph Data Analytics (LSGDA)},
  year={2025},
  url={https://www.vldb.org/2025/Workshops/VLDB-Workshops-2025/LSGDA/LSGDA25_02.pdf}
}

@article{CSABiGXu2023,
author = {Xu, Zongyu and Zhang, Yihao and Yuan, Long and Qian, Yuwen and Chen, Zi and Zhou, Mingliang and Mao, Qin and Pan, Weibin},
title = {Effective Community Search on Large Attributed Bipartite Graphs},
journal = {International Journal of Pattern Recognition and Artificial Intelligence},
volume = {37},
number = {02},
pages = {2359002},
year = {2023},
doi = {10.1142/S0218001423590024},

URL = { 
    
        https://doi.org/10.1142/S0218001423590024
    
    

},
eprint = { 
    
        https://doi.org/10.1142/S0218001423590024
    
    

}
}

@inproceedings{Wang2006,
author = {Wang, Jun and de Vries, Arjen P. and Reinders, Marcel J. T.},
title = {Unifying user-based and item-based collaborative filtering approaches by similarity fusion},
year = {2006},
isbn = {1595933697},
publisher = {Association for Computing Machinery},
address = {New York, NY, USA},
url = {https://doi.org/10.1145/1148170.1148257},
doi = {10.1145/1148170.1148257},
booktitle = {Proceedings of the 29th Annual International ACM SIGIR Conference on Research and Development in Information Retrieval},
pages = {501–508},
numpages = {8},
location = {Seattle, Washington, USA},
series = {SIGIR '06}
}

@inproceedings{FCS-HGNN-chen-2024,
author = {Chen, Guoxin and Guo, Fangda and Wang, Yongqing and Liu, Yanghao and Yu, Peiying and Shen, Huawei and Cheng, Xueqi},
title = {FCS-HGNN: Flexible Multi-type Community Search in Heterogeneous Information Networks},
year = {2024},
isbn = {9798400704369},
publisher = {Association for Computing Machinery},
address = {New York, NY, USA},
url = {https://doi.org/10.1145/3627673.3679696},
doi = {10.1145/3627673.3679696},
booktitle = {Proceedings of the 33rd ACM International Conference on Information and Knowledge Management},
pages = {207–217},
numpages = {11},
location = {Boise, ID, USA},
series = {CIKM '24}
}

@article{HIC-Zhou-2023,
author = {Zhou, Yingli and Fang, Yixiang and Luo, Wensheng and Ye, Yunming},
title = {Influential Community Search over Large Heterogeneous Information Networks},
year = {2023},
issue_date = {April 2023},
publisher = {VLDB Endowment},
volume = {16},
number = {8},
issn = {2150-8097},
url = {https://doi.org/10.14778/3594512.3594532},
doi = {10.14778/3594512.3594532},
journal = {Proc. VLDB Endow.},
month = apr,
pages = {2047–2060},
numpages = {14}
}

@article{EECS-fang2020,
author = {Fang, Yixiang and Yang, Yixing and Zhang, Wenjie and Lin, Xuemin and Cao, Xin},
title = {Effective and efficient community search over large heterogeneous information networks},
year = {2020},
issue_date = {February 2020},
publisher = {VLDB Endowment},
volume = {13},
number = {6},
issn = {2150-8097},
url = {https://doi.org/10.14778/3380750.3380756},
doi = {10.14778/3380750.3380756},
journal = {Proc. VLDB Endow.},
month = feb,
pages = {854–867},
numpages = {14}
}

@article{PathSim-Sun-2011,
author = {Sun, Yizhou and Han, Jiawei and Yan, Xifeng and Yu, Philip S. and Wu, Tianyi},
title = {PathSim: meta path-based top-K similarity search in heterogeneous information networks},
year = {2011},
issue_date = {August 2011},
publisher = {VLDB Endowment},
volume = {4},
number = {11},
issn = {2150-8097},
url = {https://doi.org/10.14778/3402707.3402736},
doi = {10.14778/3402707.3402736},
journal = {Proc. VLDB Endow.},
month = aug,
pages = {992–1003},
numpages = {12}
}

@INPROCEEDINGS{Temp-bipartite-Li-2024,
  author={Li, Shunyang and Wang, Kai and Lin, Xuemin and Zhang, Wenjie and He, Yizhang and Yuan, Long},
  booktitle={2024 IEEE 40th International Conference on Data Engineering (ICDE)}, 
  title={Querying Historical Cohesive Subgraphs Over Temporal Bipartite Graphs}, 
  year={2024},
  volume={},
  number={},
  pages={2503-2516},
  doi={10.1109/ICDE60146.2024.00197}}

@inproceedings{k-truss-zhang-2019,
author = {Zhang, Yikai and Yu, Jeffrey Xu},
title = {Unboundedness and Efficiency of Truss Maintenance in Evolving Graphs},
year = {2019},
isbn = {9781450356435},
publisher = {Association for Computing Machinery},
address = {New York, NY, USA},
url = {https://doi.org/10.1145/3299869.3300082},
doi = {10.1145/3299869.3300082},
booktitle = {Proceedings of the 2019 International Conference on Management of Data},
pages = {1024–1041},
numpages = {18},
location = {Amsterdam, Netherlands},
series = {SIGMOD '19}
}

@ARTICLE{k-clique-yuan-2017,
  author={Yuan, Long and Qin, Lu and Zhang, Wenjie and Chang, Lijun and Yang, Jianye},
  journal={IEEE Transactions on Knowledge and Data Engineering}, 
  title={Index-Based Densest Clique Percolation Community Search in Networks}, 
  year={2018},
  volume={30},
  number={5},
  pages={922-935},
  doi={10.1109/TKDE.2017.2783933}}

@inproceedings{k-truss-xie-2025,
author = {Xie, Huan and Liu, Qing and Luo, Chengyang and Zhou, Yuhan and Gao, Yunjun},
title = {Truss-based Why-not Community Search},
year = {2025},
isbn = {9798400714542},
publisher = {Association for Computing Machinery},
address = {New York, NY, USA},
url = {https://doi.org/10.1145/3711896.3737167},
doi = {10.1145/3711896.3737167},
booktitle = {Proceedings of the 31st ACM SIGKDD Conference on Knowledge Discovery and Data Mining V.2},
pages = {3309–3320},
numpages = {12},
location = {Toronto ON, Canada},
series = {KDD '25}
}

@article{kpcore-wang-2023,
author = {Wang, Yuxiang and Liu, Jun and Xu, Xiaoliang and Ke, Xiangyu and Wu, Tianxing and Gou, Xiaoxuan},
title = {Efficient and Effective Academic Expert Finding on Heterogeneous Graphs through (k, $P$)-Core based Embedding},
year = {2023},
issue_date = {July 2023},
publisher = {Association for Computing Machinery},
address = {New York, NY, USA},
volume = {17},
number = {6},
issn = {1556-4681},
url = {https://doi.org/10.1145/3578365},
doi = {10.1145/3578365},
journal = {ACM Trans. Knowl. Discov. Data},
month = mar,
articleno = {85},
numpages = {35}
}

@ARTICLE{kcore-fang-2019,
  author={Fang, Yixiang and Wang, Zhongran and Cheng, Reynold and Wang, Hongzhi and Hu, Jiafeng},
  journal={IEEE Transactions on Knowledge and Data Engineering}, 
  title={Effective and Efficient Community Search Over Large Directed Graphs}, 
  year={2019},
  volume={31},
  number={11},
  pages={2093-2107},
  doi={10.1109/TKDE.2018.2872982}}

@INPROCEEDINGS {Scalable-CS-wang-2024,
author = { Wang, Yuxiang and Ye, Shuzhan and Xu, Xiaoliang and Geng, Yuxia and Zhao, Zhenghe and Ke, Xiangyu and Wu, Tianxing },
booktitle = { 2024 IEEE 40th International Conference on Data Engineering (ICDE) },
title = {{ Scalable Community Search with Accuracy Guarantee on Attributed Graphs }},
year = {2024},
volume = {},
ISSN = {},
pages = {2737-2750},
doi = {10.1109/ICDE60146.2024.00214},
url = {https://doi.ieeecomputersociety.org/10.1109/ICDE60146.2024.00214},
publisher = {IEEE Computer Society},
address = {Los Alamitos, CA, USA},
month =May}

@INPROCEEDINGS{ee-want-2021,
  author={Wang, Kai and Zhang, Wenjie and Lin, Xuemin and Zhang, Ying and Qin, Lu and Zhang, Yuting},
  booktitle={2021 IEEE 37th International Conference on Data Engineering (ICDE)}, 
  title={Efficient and Effective Community Search on Large-scale Bipartite Graphs}, 
  year={2021},
  volume={},
  number={},
  pages={85-96},
  doi={10.1109/ICDE51399.2021.00015}}

@ARTICLE{Co-Engaged-haldar-2024-TKDE,
  author={Haldar, Nur Al Hasan and Li, Jianxin and Akhtar, Naveed and Jia, Yan and Mian, Ajmal},
  journal={IEEE Transactions on Knowledge and Data Engineering}, 
  title={Co-Engaged Location Group Search in Location-Based Social Networks}, 
  year={2024},
  volume={36},
  number={7},
  pages={2910-2926},
  doi={10.1109/TKDE.2023.3327405}}

@ARTICLE{Haldar-top-k-2023-TKDE,
  author={Haldar, Nur Al Hasan and Li, Jianxin and Ali, Mohammed Eunus and Cai, Taotao and Chen, Yunliang and Sellis, Timos and Reynolds, Mark},
  journal={IEEE Transactions on Knowledge and Data Engineering}, 
  title={Top-k Socio-Spatial Co-Engaged Location Selection for Social Users}, 
  year={2023},
  volume={35},
  number={5},
  pages={5325-5340},
  doi={10.1109/TKDE.2022.3151095}}

@ARTICLE{NGCS-wu-2022-access,
  author={Wu, Zewen and Xu, Jian and Zhang, Huaixiang and Bao, Qing and Qingsun and Changbengzhou},
  journal={IEEE Access}, 
  title={A Progressive Approach for Neighboring Geosocial Communities Search Over Large Spatial Graphs}, 
  year={2022},
  volume={10},
  number={},
  pages={57012-57024},
  doi={10.1109/ACCESS.2022.3168361}}

@ARTICLE{geo-social-wang-2022-tkde,
  author={Wang, Kai and Wang, Shuting and Cao, Xin and Qin, Lu},
  journal={IEEE Transactions on Knowledge and Data Engineering}, 
  title={Efficient Radius-Bounded Community Search in Geo-Social Networks}, 
  year={2022},
  volume={34},
  number={9},
  pages={4186-4200},
  doi={10.1109/TKDE.2020.3040172}}

@ARTICLE{topk-cs-rai-2023-tkde,
  author={Rai, Niranjan and Lian, Xiang},
  journal={IEEE Transactions on Knowledge and Data Engineering}, 
  title={Top-$k$ Community Similarity Search Over Large-Scale Road Networks}, 
  year={2023},
  volume={35},
  number={10},
  pages={10710-10721},
  doi={10.1109/TKDE.2023.3243177}}

@INPROCEEDINGS {multi-att-cs-guo-2021,
author = { Guo, Fangda and Yuan, Ye and Wang, Guoren and Zhao, Xiangguo and Sun, Hao },
booktitle = { 2021 IEEE 37th International Conference on Data Engineering (ICDE) },
title = {{ Multi-attributed Community Search in Road-social Networks }},
year = {2021},
volume = {},
ISSN = {},
pages = {109-120},
doi = {10.1109/ICDE51399.2021.00017},
url = {https://doi.ieeecomputersociety.org/10.1109/ICDE51399.2021.00017},
publisher = {IEEE Computer Society},
address = {Los Alamitos, CA, USA},
month =apr}

@ARTICLE{group-plianing-ahmed-2022,
  author={Al-Baghdadi, Ahmed and Sharma, Gokarna and Lian, Xiang},
  journal={IEEE Transactions on Knowledge and Data Engineering}, 
  title={Efficient Processing of Group Planning Queries Over Spatial-Social Networks}, 
  year={2022},
  volume={34},
  number={5},
  pages={2135-2147},
  doi={10.1109/TKDE.2020.3004153}}

@inproceedings{NIPS2012_7a614fd0,
 author = {Leskovec, Jure and Mcauley, Julian},
 booktitle = {Advances in Neural Information Processing Systems},
 editor = {F. Pereira and C.J. Burges and L. Bottou and K.Q. Weinberger},
 pages = {},
 publisher = {Curran Associates, Inc.},
 title = {Learning to Discover Social Circles in Ego Networks},
 url = {https://proceedings.neurips.cc/paper_files/paper/2012/file/7a614fd06c325499f1680b9896beedeb-Paper.pdf},
 volume = {25},
 year = {2012}
}

@InProceedings{Epinions,
author="Richardson, Matthew
and Agrawal, Rakesh
and Domingos, Pedro",
editor="Fensel, Dieter
and Sycara, Katia
and Mylopoulos, John",
title="Trust Management for the Semantic Web",
booktitle="The Semantic Web - ISWC 2003",
year="2003",
publisher="Springer Berlin Heidelberg",
address="Berlin, Heidelberg",
pages="351--368",
isbn="978-3-540-39718-2"
}

@inproceedings{dblp_dataset,
author = {Yang, Jaewon and Leskovec, Jure},
title = {Defining and evaluating network communities based on ground-truth},
year = {2012},
isbn = {9781450315463},
publisher = {Association for Computing Machinery},
address = {New York, NY, USA},
url = {https://doi.org/10.1145/2350190.2350193},
doi = {10.1145/2350190.2350193},
booktitle = {Proceedings of the ACM SIGKDD Workshop on Mining Data Semantics},
articleno = {3},
numpages = {8},
location = {Beijing, China},
series = {MDS '12}
}

@InProceedings{cal_road_netword,
author="Li, Feifei
and Cheng, Dihan
and Hadjieleftheriou, Marios
and Kollios, George
and Teng, Shang-Hua",
editor="Bauzer Medeiros, Claudia
and Egenhofer, Max J.
and Bertino, Elisa",
title="On Trip Planning Queries in Spatial Databases",
booktitle="Advances in Spatial and Temporal Databases",
year="2005",
publisher="Springer Berlin Heidelberg",
address="Berlin, Heidelberg",
pages="273--290",
isbn="978-3-540-31904-7"
}

\appendix
{
\section{The Proofs}
\label{appendix:sec:proofs}
\noindent\textbf{Proof of the Lemma \ref{lemma:keyword-based-pruning-u}}
\begin{proof} From our lemma assumption, if  $p.K \cap Q = \emptyset$ holds for all POIs $p \in V_p$ that user $u$ visited, then, for any vertex subset, $V_p'$, of $V_p$ in a community (subgraph) of graph $G_b$, we have $p.K \cap Q = \emptyset$ for all POIs $p \in V_p'$ (as $V_p' \subseteq V_p$). Based on  
Definition \ref{def:awcore}, user $u$ cannot be in the $(\omega, \pi)\mbox{-}keyword\mbox{-}core$, and can thus be safely pruned.
\end{proof}

\noindent\textbf{Proof of the Lemma \ref{lemma:omega-based-pruning}}
\begin{proof}
Since $ub\_f_{sum}(u,V_p')$ is an upper bound of $f_{sum}(u,V_p')$, we have $ub\_f_{sum}(u,V_p') \geq f_{sum}(u,V_p')$. From our lemma assumption, it holds that $ub\_f_{sum}(u,V_p')< \omega$. Hence, by the inequality transition, we can derive that $f_{sum}(u,V_p') < \omega$, and safely prune user $u$ with a low $f_{sum}(u,V_p')$ value. 
\end{proof}

\noindent\textbf{Proof of the Lemma \ref{lemma:pi-based-pruning}}
\begin{proof}
Since $ub\_f_{avg}(u)$ is an upper bound of $f_{u,p}$ for all $p$ visited by $u$, we have $ub\_f_{avg}(u) \geq f_{avg}(V_s',p)$. From our lemma assumption, we have $ub\_f_{avg}(u)< \pi$. Hence, by the inequality transition, we can derive $f_{avg}(V_s',p) < \pi$, and safely prune the user $u$ with a low $f_{avg}(V_s',p)$ value.
\end{proof}

\noindent\textbf{Proof of the Lemma \ref{lemma:structural-cohesiveness-pruning}}
\begin{proof}
Since $ub\_sup(u)$ is the upper bound of $sup(e)$ of all edges between user $u$ and its neighbors, then, according to Definition \ref{def:kdtruss}, we can safely prune $u$ if $ub\_sup(u)< k-2$.
\end{proof}

\section{The Index Pivots Selection for Leaf Nodes}
\label{appendix:sec:index_pivot_selection}

\noindent\textbf{\textit{Bipartite Structure.}} Since each $key_j \in u.K$, has the aggregate $key_j.f_{sum}$ and $key_j.f_{max}$   (as given in Eq.~(\ref{eq:key-f-sum}) and Eq.~(\ref{eq:key-f-max}), respectively). To make these aggregations comparable, we normalize $key_j.f_{sum}$ and $key_j.f_{max}$ into the range [0-1] by dividing it by the largest $f_{sum}$ and $f_{max}$, respectively. Then, for any two users $u,v \in G_s$, we calculate the bipartite structure score $bs\_score(u,v)$ as follows:

\begin{eqnarray}\label{eq:bsscores}
\begin{aligned}
&bs\_score(u,v)=\sum_{\forall key_j\in  \{u.K\cap v.K\}} \\
&\frac{(u.key_j.f_{sum} + v.key_j.f_{sum})}{largest(f_{sum})} + \frac{(u.key_j.f_{max} + v.key_j.f_{max})}{largest(f_{max})}
\end{aligned}
\end{eqnarray}
, where $d=d_{max}$.

The higher the score $bs\_score(u,v)$, the higher the probability of sharing similar keywords, which implies that their interests are more similar. 

\noindent\textbf{\textit{Spatial Structure.}} We calculate the spatial structure score based on the average distance we provide in Eq.~(\ref{eq:avgdist}). Then, for any users $u,v \in G_s$. we calculate $rs\_score(u,v)$ as follows:

\begin{eqnarray}
&rs\_score(u,v)= 
&\frac{\frac{\sum_{\forall p\in v.L}avg\_dist_r(u,p)}{|v.L|}}{largest(rs\_score(.))}
\label{eq:rsscore}
\end{eqnarray}

The lower the value of $rs\_score(u,v)$ between user $u$ and $v$, the closer the distance between them. Similarly to $bs\_score(.)$, we normalize $rs\_score(.)$ to the range [0-1] by dividing it to the largest $rs\_score(.)$.

\noindent\textbf{\textit{Social Structure.}} We must consider $(k,d)\mbox{-}truss$ and topic influences among users to measure social closeness in the social network. Furthermore, we need to consider social distance and social cohesiveness when calculating $(k,d)\mbox{-}truss$, as stated in Definition \ref{def:kdtruss}. Thus, for any two users $u,v \in G_s$, we calculate the social cohesion by finding $sum\_sup(u,v)=ub\_sup(u)+ub\_sup(v)$, where $ub\_sup()$ is given in Eq.~(\ref{eq:ubsup}). 
On the other hand, we use Eq.~(\ref{eq:ub-ISF}) to calculate the influence $ub\_ISF(u,v)$. In both $sum\_sup(.,.)$ and Eq.~(\ref{eq:ub-ISF}), values generally lead to greater cohesiveness. In contrast,  Definition \ref{def:kdtruss} requires a small social distance $dist_s(u,v)$ between users. To make all $sum\_sup(.), ub\_ISF(u,v),$ and $dist_s(u,v)$ comparable, we normalize all to the range [0-1] by dividing it by the largest $sum\_sup(.)$, $ub\_ISF(.)$, $dist_s(.)$, respectively. Finally, to determine the social structure score, we use the following calculation:

\begin{eqnarray}\label{eq:ssscore}
&&\hspace{-2ex}ss\_score(u,v)\\
&\hspace{-2ex}=& \hspace{-2ex}\frac{sum\_sup(u,v)}{largest(sum\_sup(.))} + \frac{ub\_ISF(u,v)}{largest(ub\_ISF(.))} + \frac{(1- dist_s(u,v))}{largest(dist_s(.))}\notag
\end{eqnarray}

Obviously, the higher the $ss\_score(u,v)$ score indicates more coherence, a closer social distance, and a higher degree of influence between $u$ and $v$.

\section{The Index Pivots Selection for Non-Leaf Nodes}
\label{appendix:sec:tree_construction}

\textbf{\textit{Bipartite Structure Score for Non-Leaf Nodes.}} As explained in Section \ref{subsec:indexing_cons}, each node has a set $N.K$, where each $key_j \in N.K$ is associated with $key_j.ub\_f_{sum}$ and $key_j.ub\_f_{max}$  (as given in Eq.~(\ref{eq:key-ub-f-sum}) and Eq.~(\ref{eq:key-ub-f-max}), respectively). To make $key_j.ub\_f_{sum}$ and $key_j.ub\_f_{max}$ comparable, we normalize the values to the range [0-1] by dividing them to the largest $bs\_f_{sum}(.)$ and $bs\_f_{max}(.)$, respectively. Then, for a user index pivot $piv'_i \in \mathbb{P'}_{index}$ ($\mathbb{P'}_{index}\subset \mathbb{P}_{index}$), we calculate the bipartite structure score for a node $N$ as follows: 
\begin{eqnarray}\label{eq:bsscore-node}
&&bs\_score\_node(N,piv'_i)\\
&=&bs\_ub\_f_{sum}(N,piv'_i) +  bs\_ub\_f_{max}(N,piv'_i),\notag
\end{eqnarray}
where $bs\_ub\_f_{sum}(N,piv'_i)$ and $bs\_ub\_f_{max}(N,piv'_i)$ calculated as follows:

\begin{eqnarray}
&&bs\_ub\_f_{sum}(N,piv'_i)\notag\\
&=&\frac{\sum_{\forall key_j\in  \{N.K\cap piv'_i.K\}}
N.key_j.ub\_f_{sum} +  piv'_i.key_j.ub\_f_{sum}}{largest(bs\_f_{sum}(.))}\notag\\ 
\\&&bs\_ub\_f_{max}(N,piv'_i)\notag\\
&=&\frac{\sum_{\forall key_j\in  \{N.K\cap piv'_i.K\}}
N.key_j.ub\_f_{max} +  piv'_i.key_j.ub\_f_{max}}{largest(bs\_f_{max}(.))}\notag
\end{eqnarray}
where $d=d_{max}$.

The higher $bs\_score\_node(N,piv'_i)$ score implies that the interest in visiting similar places is more similar between the users in node $N$ and the pivot user $piv'_i$. 

\textbf{\textit{Social Structure Score for Non-Leaf Nodes.}} The social structure score is mainly dependent on constraints within the social network such as the $(k,d)\mbox{-}truss$ and topic influences among users. We use the parameters of node $N$ to calculate the social structure score between $N$ and a $piv'_i \in \mathbb{P'}_{index}$. We calculate the social structure score for a node $N$ and a $piv'_i \in \mathbb{P'}_{index}$ as follows: 

\begin{eqnarray}\label{eq:ssscore-node}
&&\hspace{-5ex} ss\_score\_node(N,piv'_i) \\
&\hspace{-2ex}=& sum\_sup\_node(N,piv'_i) + max\_ub\_ISF(N,piv'_i)\notag\\
&&+ (1- (lb\_dist_s(N,piv'_i))\notag
\end{eqnarray}

, where the normalized $sum\_sup\_node(N,piv'_i)$ calculated as follows:
\begin{eqnarray}\notag
\begin{aligned}
sum\_sup\_node(N,piv'_i)=\frac{ub\_sup(N) + ub\_sup(piv'_i)}{largest(sum\_sup\_node(.))}
\end{aligned}
\end{eqnarray}

, where $ub\_sup(N)$ and $ub\_sup(piv'_i)$ are given in Eqs.~(\ref{eq:ub-sub-N}) and (\ref{eq:ubsup}), respectively. 

In addition, we calculate the normalized $max\_ub\_ISF(N,piv'_i)$ as follows:

\begin{eqnarray}\notag
\begin{aligned}
max\_ub\_ISF(N,piv'_i)=\frac{ ub\_ISF(piv'_i,N)}{largest(max\_ub\_ISF(.))}
\end{aligned}
\end{eqnarray}
where $ub\_ISF(piv'_i,N)$ is given in Eqs.~ (\ref{eq:ub1-ISF_indel-level}). 

Finally, we compute the distance lower bound $lb\_dist_s(piv'_i,N)$, where $lb\_dist_s(.)$ is given in Eq.~(\ref{eq:lb-dist-indexing}).We normalize $lb\_dist_s(.)$  to the range [0-1].

Based on Definition \ref{def:kcsbssn}, we want to assign nodes to subgroups that maximize both $sum\_sup\_node(.)$ and $max\_ub\_ISF(.)$ and minimize $lb\_dist_s(.)$, which generally leads to greater cohesiveness and a smaller social distance.
}
\end{document}